\newcommand\xrowht[2][0]{\addstackgap[.5\dimexpr#2\relax]{\vphantom{#1}}}
\newtheorem{Definition}{Definition}
\newtheorem{Lemma}{Lemma}
\newcommand{\nn}{\nonumber} 
\newtheorem{Theorem}{Theorem}
\newcommand{\hmin}{H_{\mathrm{min}}}
\newcommand{\hmax}{H_{\mathrm{max}}}
\newcommand{\eqn}[1]{\begin{eqnarray} \newline #1 \end{eqnarray}}
\newcommand{\EV}[1]{\left < #1 \right >}
\newcommand{\adag}{\hat{a}^{\dag}}
\newcommand{\anih}{\hat{a}}
\newcommand{\hs}{\hspace{0.2cm}}
\newcommand{\half}{\frac{1}{2}}
\newcommand{\ee}{&=&}
\newcommand{\bk}[1]{\left ( #1\right )}
\definecolor{nathan}{rgb}{0,.8,.5}
\newcommand{\nw}[1]{{\color{black} #1}}
\begin{document}

%\title{Certified Quantum Randomness from Untrusted Light}
\title{\nw{Certified Quantum Random Numbers from Untrusted Light}}

\author{David Drahi}
\email{david.drahi@physics.ox.ac.uk}
\affiliation{Clarendon Laboratory, Department of Physics, University of Oxford, Oxford OX1 3PU, UK}

\author{Nathan Walk}
\affiliation{Department of Computer Science, University of Oxford, Oxford OX1 3QD, UK}
\affiliation{Dahlem Center for Complex Quantum Systems, Freie Universit{\"a}t Berlin, 14195 Berlin, Germany}

\author{Matty J. Hoban}
\affiliation{Department of Computing, Goldsmiths, University of London, London SE14 6NW, UK}

\author{\nw{Aleksey K. Fedorov}}
\affiliation{Russian Quantum Center, 100 Novaya St., Skolkovo, Moscow 143025, Russia}

\author{\nw{Roman Shakhovoy}}
\affiliation{Russian Quantum Center, 100 Novaya St., Skolkovo, Moscow 143025, Russia}

\author{\nw{Akky Feimov}}
\affiliation{Russian Quantum Center, 100 Novaya St., Skolkovo, Moscow 143025, Russia}

\author{\nw{Yury Kurochkin}}
\affiliation{Russian Quantum Center, 100 Novaya St., Skolkovo, Moscow 143025, Russia}

\author{W. Steven Kolthammer}
\affiliation{Clarendon Laboratory, Department of Physics, University of Oxford, Oxford OX1 3PU, UK}

\author{Joshua Nunn}
\affiliation{Clarendon Laboratory, Department of Physics, University of Oxford, Oxford OX1 3PU, UK}

\author{Jonathan Barrett}
\affiliation{Department of Computer Science, University of Oxford, Oxford OX1 3QD, UK}

\author{Ian A. Walmsley}
\affiliation{Clarendon Laboratory, Department of Physics, University of Oxford, Oxford OX1 3PU, UK}

\date{June 3, 2020}

\begin{abstract}
%{A remarkable aspect of quantum theory is that certain measurement outcomes are entirely unpredictable to all possible observers. Such quantum events can be harnessed to generate numbers whose randomness is asserted based upon the underlying physical processes. We formally introduce and experimentally demonstrate an ultrafast optical quantum randomness generator that uses a totally untrusted photonic source. While considering completely general quantum attacks, we certify randomness at a rate of $\SI{1.1}{Gbps}$ with a rigorous security parameter of $10^{-20}$. Our security proof is entirely composable, thereby allowing the generated randomness to be utilised for arbitrary applications in cryptography and beyond.}
{A remarkable aspect of quantum theory is that certain measurement outcomes are entirely unpredictable to all possible observers. Such quantum events can be harnessed to generate numbers whose randomness is asserted based upon the underlying physical processes. \nw{We formally introduce, design and experimentally demonstrate an ultrafast optical quantum random number generator that uses a totally untrusted photonic source}. While considering completely general quantum attacks, we certify and generate in real-time random numbers at a rate of \nw{$8.05\,$Gb/s} with a rigorous security parameter of \nw{$10^{-10}$}. Our security proof is entirely composable, thereby allowing the generated randomness to be utilised for arbitrary applications in cryptography and beyond. \nw{To our knowledge, this represents the fastest composably secure source of quantum random numbers ever reported.}}
\end{abstract}

\maketitle

\section{Introduction}
The inherent randomness of quantum theory, embodied by Born's rule, creates fundamentally unpredictable events. The concept of a quantum random number generator (QRNG) is to leverage this principle to produce a random, unpredictable output with an unparalleled level of confidence. The central challenge faced by practical QRNGs is to rigorously quantify how much of the entropy generated by a real-world device is indeed intrinsically unpredictable.

To sketch the basic idea, let's consider a device completely described by parameters $s$ which could be quantum or classical. These are used to generate a classical outcome $X$ that should appear unpredictable from the perspective of an agent external to the device. Consider such an agent $E$ with access to a system which includes all the parameters $s$ as well as any other side information (classical or quantum). For any given value of $s$, the joint system is described by a classical-quantum state $\hat{\rho}_{XE}$ and the outcome's predictability is simply the probability of the best guess
\eqn{P_{\mathrm{ideal},s}(X|E) = \sup_{\{\hat{E}_x\}} \sum_x p_x \mathrm{tr}\bk{\hat{E}_x\hat{\rho}_E^x}\label{eq1}\,,}
where the supremum is taken over all measurements $\{\hat{E}_x\}$ made by $E$ on the system and $\hat{\rho}_E^x$ is the state of $E$ conditioned on $X=x$. For a real device, however, $s$ is never known exactly. In this case, a conservative estimate of the predictability is given by $P=\max_s P_{\textrm{ideal},s}(X|E)$, where the maximisation is taken over all plausible parameters $s$. Confidence in the randomness is thus linked to claims about trusted workings of the device and subsequent constraints on the knowledge of the external agent. 

Approaches to QRNGs differ by the detail with which the devices need to be characterised in order to constrain $s$ \cite{ herrero2017quantum,ma2016quantum}. Perhaps the simplest conceptually is a so-called device independent QRNG, which can take the form of a Bell test \cite{pironio2010random, acin2016certified, bierhorst2018experimentally, liu2018device}. In this case, the device must be composed of two isolated measurements that employ independently selected bases --- a requirement that can be verified with high confidence. With this condition, $P<1$ as long as the measurement outcomes violate a Bell inequality, which in turn constrain the plausible $s$ \cite{acin2012randomness}. In reality, however, even state-of-the-art implementations \cite{liu2018high} are extremely complex and yield impractical bit rates of the order $\sim10\,$b/s. An alternate approach is to build a QRNG in which the entire device, from quantum source to measurement, is faithfully characterised and modelled \cite{mitchell2015strong}. Here, the detailed characterisation, which might use both off-line and in-line measurements, crucially constrains $s$ (and thus $E$) sufficiently to assert a non-unit $P$. As such, this seemingly exhaustive type of characterisation of the setup, and hence trust in its proper inner workings, opens up a myriad of potential attacks and malfunctions which might compromise the randomness output. 

A series of intermediate approaches have appeared, commonly referred to as having partial device-independence, which yield a QRNG that permits abstraction from some of the devices while needing a detailed characterisation of the remainder. These can be broadly classified as those that are independent of the measurement devices \cite{cao2015loss, chaturvedi2015measurement, nie2016experimental} or the sources \cite{cao2016source}. A third class, known as semi-device-independent makes no assumptions on either the source or measurements except to assert a global constraint on the relevant dimension \cite{pawlowski2011semi,lunghi2015self}, energy \cite{Himbeeck2017semidevice} or orthogonality of the relevant states \cite{brask2017megahertz}. Finally, other works have combined assumptions, such as the semi-source independent protocols (originally thought to be fully source-independent) that invoke a dimension assumption in conjunction with a calibrated detection \cite{vallone2014quantum,marangon2017source,michel2019real}. These latter works exemplify the critical point that when analysing partially device-independent protocols, it is important to keep track of the interaction between trusted, but imperfect, devices and the certification techniques used to prove security against deviations in the untrusted components. 

Successful design of a practical QRNG must balance confidence with ease of implementation, achievable bit rate, durability and cost. For example, QRNGs based on radioactive decay have limited bit rates, whereas those utilising electronic noise require careful distinction of quantum and thermal fluctuations \cite{herrero2017quantum}. In contrast, optical QRNGs promise well isolated quantum systems along with speed and technical ease. Implementations have been based on photon \textit{welcher weg} \cite{rarity1994quantum, jennewein2000fast, stefanov2000optical}, photon arrival time \cite{wayne2009photon, nie2014practical}, photon number statistics \cite{ren2011quantum}, vacuum fluctuations \cite{gabriel2010generator, shen2010practical, symul2011real}, phase noise \cite{guo2010truly, abellan2014ultra, nie2015generation} and Raman scattering \cite{bustard2013quantum, england2014efficient}.

In this paper, we develop a certification of quantum randomness generated by an optical beam splitter for which one input field is the vacuum and the other is completely unknown. The certification was carried out in real-time using an additional vacuum mode to tap off part of the unknown light source prior to the randomness generation. This method probabilistically infers a lower bound on the photon number of the remaining untrusted source impinging onto the randomness generation measurement. We show that signals from carefully characterised photodetectors, which needn't resolve photon number, are sufficient to both generate and certify genuine quantum randomness. 

Our approach results in a composably secure protocol and we provide an explicit security proof for high-speed quantum randomness expansion. Such a proof is necessary for all applications that wish to claim provable quantum-based security. A key or random string only becomes useful in composition with other protocols (one-time pad, hashing etc.) such that in order to retain provable quantum security, a composable proof is mandatory. To date, most randomness generation protocols fail to provide outputs that are useable in a composable framework, with, to our knowledge, only a handful shown to be composably secure in a device-dependent scenario \cite{mitchell2015strong,Haw:2015kx,Gehring:2018wc} and only one partially device independent result \cite{cao2016source}. 

\nw{To experimentally demonstrate our scheme, we used off-the-shelf components --- a laser source, high bandwidth photodiodes, basic linear optical elements and a high-performance field-programmable gate array (FPGA) board --- and generated random numbers with a bit rate of $8.05\,$Gb/s and a composable security parameter $\epsilon=10^{-10}$.} Overall, our framework is compatible with a wide range of optical detectors and avoids the need to trust or precisely characterise the source of light, as opposed to conventional vacuum homodyning wherein a trusted photonic source is a necessity. %We hope that the approach adopted in this work, combining a rigorous modelling of trusted devices inspired by frameworks such as \cite{frauchiger2013true} with real-time certification in a partially device-independent setting, can provide a guide to future works seeking to balance practicality with strong security guarantees.% \cite{vallone2014quantum, mitchell2015strong}. 

\section{Generating randomness from untrusted light}
In Eq.~(\ref{eq1}), we quantified the randomness of an outcome $X$ for an external agent $E$. As is common in quantum cryptography, we will refer to this agent as Eve the eavesdropper. An equivalent, but more convenient, way of quantifying this randomness is to compute the quantum conditional min-entropy of the quantum state $\hat{\rho}_{XE}$ for the joint system $XE$ \cite{renner2008security}
\eqn{\hmin(X|E)_{\hat{\rho}_{XE}} = -\log_2\bk{\sup_{\{\hat{E}_x\}} \sum_x p_x \mathrm{tr}\bk{\hat{E}_x\hat{\rho}_E^x}}\label{eq2}\,,}
where the argument of the logarithm is the guessing probability for Eve to guess $X$, as in Eq.~(\ref{eq1}). This quantity has been shown to quantify the number of bits --- almost perfectly random with respect to Eve --- that can be \textit{extracted} via post-processing \cite{konig2009operational}. Notice the distinction between a quantum randomness generator (QRG) which simply generates outputs with a certain conditional min-entropy and a QRNG that also includes the post-processing (hashing) necessary to produce almost perfect random numbers. \nw{This is worth mentioning because many results in the literature only implement the randomness generation without carrying out random number extraction in real-time. Note also that only by composably certifying the randomness generation process can the security of the extracted numbers be rigorously established.}

A certified randomness generation protocol allows for some, or all, devices to deviate arbitrarily from their purported specifications. \nw{A certification test $\mathcal{P}$ is applied to the experimental data and only upon that test passing is the output certified as having a certain amount of randomness, otherwise it is discarded.} Furthermore, a useful generator will be robust, i.e. it will pass the test with high probability. Formally, we can define such a protocol as follows.

\begin{Definition} \label{QRGdef}
\nw{An ($m,\kappa,\epsilon_{\mathrm{fail,m}},\epsilon_c$)-certified randomness generation protocol produces an output $X$ made of $m$ measurement results such that
\begin{itemize}
\item \textbf{Security:} Either the certification test $\mathcal{P}$ fails, or
\eqn{\hmin(X|E) \geq \kappa \nn\,,}
except with probability $\epsilon_{\mathrm{fail,m}}$.
\item \textbf{Completeness:} There exists an honest implementation such that the test will be passed with probability $1-\epsilon_c$.
\end{itemize}
}
\end{Definition}

\begin{figure}[h]
\includegraphics[width=\linewidth]{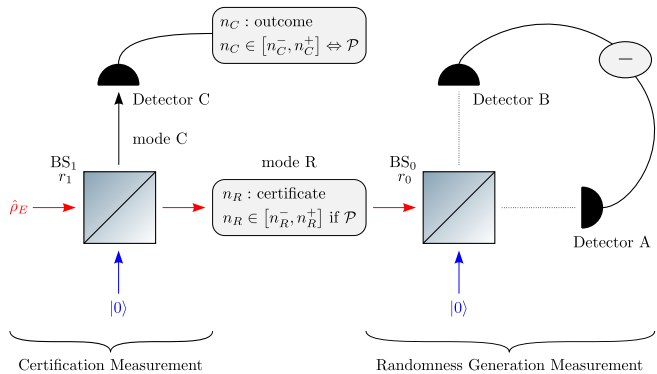}
\caption{Scheme for our SDI protocol. An unknown light source $\hat{\rho}_{E}$ is mixed with a trusted vacuum on a beam splitter (BS) with reflectivity $r_{1}$ to perform a certification measurement. The measured outcome at detector C is subject to a test $\mathcal{P}$ that passes if the outcome lies within a certain range $\left[n_{C}^{-},n_{C}^{+}\right]$. Upon passing the test, we certify a photon number $n_{R}$ in mode R that impinges onto the randomness generation measurement except with probability $\epsilon_{\mathrm{fail}}$.}
\label{fig:idea}
\end{figure}

We define our source-device-independent (SDI) photonic QRG as a protocol in which detectors and passive optical devices (e.g. beam splitters) are taken to be trusted. Photonic states are generated via a laser as input to the experiment (essentially preparing a large amplitude coherent state), however in the analysis, we will not assume anything about the state of these photons and in that sense we claim that randomness is generated in a SDI manner. Crucially, however, we also assume that it is possible to exploit a trusted vacuum mode. One might point out that this is in fact assuming at least one trusted source, namely the vacuum. Nevertheless, we argue that vacuum is a rather privileged source in the sense that it does not really require a ``device'' to be generated, merely the ability to block an input port to a beam splitter. Thus, it would seem highly preferable from a security perspective to trust a vacuum source rather than some photonic state created by a sophisticated device such as a laser or spontaneous parametric down conversion (SPDC) process. We also emphasise that the detection process here is distinct from a homodyne detection in that the incoming state is mixed with a vacuum mode instead of a local oscillator (large amplitude coherent state). Even more importantly, we model our measurements directly as opposed to the homodyne protocols \cite{vallone2014quantum,marangon2017source,michel2019real} which model this detection as a quadrature measurement. \nw{This is rather at odds with the goal of being SDI as that is only approximately true in the limit where one assumes that the input signal has far fewer photons than the local oscillator}. In Section \ref{discussion} and Appendix \ref{comp}, we will also discuss how our measurement scheme has different, and in many cases, superior scalings of the certifiable randomness rates than standard homodyne based protocols.

To gain some intuition, let us start by considering the randomness generation measurement depicted in Fig.~\ref{fig:idea}. It consists of a beam splitter BS$_{0}$ with reflectivity $r_{0}=\frac{1}{2}$, an input mode R, a trusted vacuum fed into the other input mode and two output photodetectors A and B performing a difference measurement. Assuming the photodetectors to be perfect, we can model them as performing a single measurement acting on the untrusted photonic randomness source in mode R. The outcomes of the measurement will be the photon numbers $n_{A}$ and $n_{B}$ detected by detectors A and B, respectively. Propagating this detection event back through the beam splitter and using our knowledge about the trusted vacuum mode, this measurement is then associated with positive-operator valued measure (POVM) elements of the form
\begin{equation}
\hat{M}(n_{A},n_{B})_{R}=\frac{(n_{A}+n_{B})!}{2^{n_{A}+n_{B}}n_{A}!n_{B}!}\ket{n_{A}+n_{B}}\bra{n_{A}+n_{B}}_{R}\,,
\label{eq3}
\end{equation}
living in the Hilbert space of the input mode R (see Appendix \ref{ideal} for details). 

Given this, we now propose a simple certifiable randomness generation protocol. It consists of recording the value of the photon number sum $N:=n_{A}+n_{B}$ and then using the difference measurement $x:=n_{A}-n_{B}$ as the source of randomness. Therefore, we have two measurements: one of $N$ and one of $x$. The POVM $\mathbb{Z}$ has elements $\hat{Z}(N)$ for the measurement of $N$ that can be readily recovered as 
\begin{equation}
\begin{split}
\hat{Z}(N)=&\sum_{n_{A}=0}^{N}\hat{M}(n_{A},N-n_{B})_{R}\\
=&\ket{N}\bra{N}_{R}\,.
\end{split}
\end{equation}

On the other hand, as we show in Appendix \ref{ideal}, the POVM $\mathbb{X}$ for the value of $x$ has elements given by
%\begin{equation}
%\begin{split}
%\label{xmain}\hat{X}(x)=\sum_{n_{A}=|x|}^{\infty} %&2^{-(2n_{A}-|x|)}{2n_{A}-|x|\choose n_{A}}\\
%&\times \ket{2n_{A}-|x|}\bra{2n_{A}-|x|}_{R}\,.
%\end{split}
%\end{equation}
\eqn{\hat{X}(x) \ee \sum_{n_{A}=|x|}^{\infty} 2^{-(2n_{A}-|x|)}{2n_{A}-|x|\choose n_{A}} \nn \\
&\times& \ket{2n_{A}-|x|}\bra{2n_{A}-|x|}_{R} \,. \label{xmain}}

We already see the inherent randomness of this scheme since $\hat{X}(x)$ has support over the whole Fock space. Therefore, for any state in mode R with total photon number $N>0$, there will be multiple possible values $x$ which can occur. Moreover, there is a manifest independence from the photonic input state. Because the measurements described by $\hat{Z}(N)$ and $\hat{X}(x)$ are by definition compatible, we can always think of the $\hat{Z}(N)$ measurement happening first and projecting onto the state $\ket{N}$, which will subsequently produce randomness when measured with $\mathbb{X}$. Thus, conditioned upon observing a sum value of $N$, one would certify with probability $\epsilon_{\mathrm{fail,m}} = 0$ an amount of randomness that scales as $\log_2(N\pi/2)$ as per Definition \ref{QRGdef} and shown in Appendix \ref{ideal}.

Now, consider the full setup shown in Fig.~\ref{fig:idea}. We introduce the certification measurement in mode C which is done by tapping off a fraction of the completely unknown incoming light in mode E with a beam splitter BS$_{1}$ of reflectivity $r_{1}$. The input state $\hat{\rho}_{E}$ is mixed with a trusted vacuum on BS$_{1}$ and the reflected beam in mode C is measured at detector C while the transmitted beam in mode R is input to the randomness generation measurement. \nw{This idea is superficially similar to the ``energy test'' proposed  in the context of device-dependent continuous variable quantum key distribution (QKD) \cite{furrer2014reverse}. This test also taps off a portion of the incoming mode but instead uses a trusted and ideal heterodyne detection for the certification measurement. Such a scheme is {\it a priori} forbidden in an SDI context (a trusted photonic source being necessary for a heterodyne detection) and, as we show in Appendix \ref{detectors}, also fails to provide any security for realistic finite-range detectors.}

Our test $\mathcal{P}$ is applied to the output of detector C with the protocol aborting if the result lies outside a range $[n_{C}^-,n_{C}^+]$. Upon passing the test, we obtain a certificate that $n_R$, the photon number in mode R, lies within a range $[n_{R}^-,n_R^+]$ except with some failure probability $\epsilon_{\mathrm{fail}}$. Then, by minimising the min-entropy over all states within this range, we obtain a certified lower bound on the generated randomness. For this idealised scenario, we could allow $n_R^+$ to be unbounded and would simply look to certify the largest possible value of $n_R^-$ given a specific $\epsilon_{\mathrm{fail}}$.

\section{Certifying randomness with realistic devices}
In a real experiment, several further complications must be taken into account. Even in a scenario of completely trusted and calibrated devices, care must be taken to quantify the amount of randomness that can be credibly claimed to have been generated. Firstly, real detectors only possess a finite dynamic range over which their response is meaningful. Secondly, measurement outcomes are coarse grained to a finite resolution which must be carefully accounted for when determining the output randomness. Finally, noisy devices will exhibit fluctuations due to processes not under complete experimental control. Information about these processes might be accessible to external observers and, even if not, could certainly be stemming from physical processes that are far from random. Nevertheless, this can be accounted for provided the device noise is calibrated and not controlled by Eve. This makes the noise essentially classical, in the sense that we may assume that it is described by variables $\lambda$ which are distributed according to a characterised probability distribution. These variables are then given to Eve on a shot-by-shot basis.   

Consequently, the first step for analysing our experiment is to carefully calibrate and model the realistic photodiodes, which output noisy voltage measurements rather than exact photon numbers. More formally, following the approach of \cite{frauchiger2013true}, we model the POVM describing our noisy, characterised measurements as a projective measurement on a larger system. For the case of our detectors (see Fig.~\ref{fig:detector_model} in Appendix \ref{detectors} for a cohesive summary), the measured voltages are modelled as follows. First, we consider an $L := n_{\mathrm{max}} - n_{\mathrm{min}} + 1$ outcome photon number resolving measurement with a finite range $[n_{\mathrm{min}},n_{\mathrm{max}}]$ described by measurement operators that are number state projectors (i.e. $\hat{N}(n) = \ket{n}\bra{n} $), except for the first and last operators which are given by $\hat{N}(n_{\mathrm{min}}) = \sum_{n=0}^{n_{\mathrm{min}}}\ket{n}\bra{n}$ and $\hat{N}(n_{\mathrm{max}}) = \sum_{n=n_{\mathrm{max}}}^\infty \ket{n}\bra{n}$. \nw{This photon number is converted to a voltage via a conversion factor $\alpha$ and is then smeared by an additional Gaussian noise term $\lambda$ of known variance $\sigma^2$ and finally coarse grained by a $b$-bit analogue-to-digital converter (ADC) that itself has only finite range $[V_{\mathrm{min}},V_{\mathrm{max}}]$ and finite resolution of $2^{b}$ bins. However, to correctly quantify the randomness associated with each $b$-bit measurement, it is essential for one to consider $\Delta_{\mathrm{ADC}}$, the ADC's effective number of bits (ENOB). Indeed, it corresponds to the amount of bits free of internal electronic noise. This effective bit depth leads to an effective voltage resolution $\delta V = \frac{V_{\mathrm{max}}-V_{\mathrm{min}}}{2^{\Delta_{\mathrm{ADC}}}}$}. The output of such a realistic measurement is an index, say $j$, corresponding to a voltage bin of width $\delta V$ centered at $j\delta V$. We can therefore associate minimum and maximum voltages $v_j^\pm = \delta V(j\pm \frac{1}{2})$ with this outcome $j$. 

The certification measurement is made by mixing the unknown photonic input $\hat{\rho}_{E}$ in mode E with vacuum $\ket{0}$ on a beam splitter of reflectivity $r_{1}$. The reflected mode C is then detected with a noisy photodiode (characterised by noise standard deviation $\sigma_C$ and voltage conversion factor $\alpha_C$) that is coarse grained by an ADC. The protocol aborts for sufficiently large or small observed voltages ($\mathcal{P}$ is now a test applied directly to the measured voltage index). Finally, the randomness is generated by mixing the transmitted state in mode R with another vacuum on a beam splitter with reflectivity $r_{0}=\frac{1}{2}$ and making a coarse-grained, noisy difference measurement characterised by noise standard deviation $\sigma_D$ and voltage conversion factor $\alpha_D$. As with the ideal case, we can write the measurements as operators in the input Hilbert space. As shown in Appendix \ref{detectors}, the POVM element for a realistic voltage difference measurement whose outcome is the bin labelled $j$ is 

\eqn{\hat{V}_D^{\sigma_D,\Delta_{\mathrm{ADC}}}(j) =  \int_{I_j^D} \hat{V}_D^{\sigma_D}(v_{D}) \, dv_{D} \label{measfinal_main} \,,}
with 
\eqn{\hat{V}_D^{\sigma_D}(v_D) = \sum_{x=-(L-1)}^{L-1} \frac{e^{-(v_{D}-\alpha_{D}x)^2/(2\sigma_D^2)}}{\sqrt{2\pi}\sigma_D} \hat{X}_{\mathrm{fin}}(x)\label{volts_main}\,,}
where $\hat{X}_{\mathrm{fin}}(x)$ are the POVM elements of a difference measurement that is identical to Eq.~(\ref{xmain}) except that it is made with finite range photodetectors described above and is hence only operationally equivalent over an input photon number range $[n_{\mathrm{min}}^D,n_{\mathrm{max}}^D]$.

Similarly, the certification measurement element corresponding to the outcome bin labelled $i$ is given by
\eqn{\hat{V}_C^{\sigma_C,\Delta_{\mathrm{ADC}}}(i) = \int_{I_i^C} \hat{V}_C^{\sigma_C}(v_C) \, dv_C\label{certfinal_main} \,,}
with
\eqn{\hat{V}_{C}^{\sigma_C}(v_C) = \sum_{n=n^C_{\mathrm{min}}}^{n^C_{\mathrm{max}}} \frac{e^{-(v_C-\alpha_C n_C)^2/(2
\sigma_C^2)}}{\sqrt{2\pi}\sigma_C} \hat{N}_C(n_C)\label{certmain}\,.}

\nw{With this detection model in hand, we state our main theorem as follows.}
\begin{Theorem} \label{rgend1} 
An optical setup consisting of 
\begin{itemize}
\item Two trusted vacuum modes
\item Two beam splitters of reflectivity $r_{0}=\frac{1}{2}$ and $r_{1}$
\item Two noisy photodetectors used to make a difference measurement as described in Eq.~(\ref{measfinal_main})
\item A third noisy photodetector used to make a certification measurement as described in Eq.~(\ref{certfinal_main}) which passes the test $\mathcal{P}$ if $i$ falls in a chosen range $[i_-,i_+]$
\end{itemize}

can be used as a certified (m,$\kappa$,$\epsilon_{\mathrm{fail,m}}$,$\epsilon_c$)-randomness generation protocol as per Definition \ref{QRGdef} without making any assumptions about the photonic source with
\eqn{\kappa &\geq& - m\log_2 \left ( \sum_{x\in \mathcal{X}}2^{-n_{R}^{-}}\binom{n_{R}^{-}}{\lfloor \frac{n_{R}^{-}+x}{2} \rfloor} \right )\label{hminthmmain}\,,}
where
\eqn{\mathcal{X} \in \mathbb{N} \cap \left [ -\left \lfloor \frac{\delta V}{2\alpha_{D}} \right \rfloor, \left \lfloor \frac{\delta V}{2 \alpha_{D}} \right \rfloor  \right ] \label{supportmain}\,,}
with $\delta V = \frac{V_{\mathrm{max}}-V_{\mathrm{min}}}{2^{\Delta_{\mathrm{ADC}}}}$,

\eqn{\label{mroundmain} \epsilon_{\mathrm{fail,m}} &\leq&m \epsilon_{\mathrm{fail}} \,,}
where
\eqn{\epsilon_{\mathrm{fail}} = \max \{ \epsilon_-,\epsilon_+\} + \epsilon_{\lambda_{C}} \label{epsilontheoremmain}\,,}
with 
\eqn{ \epsilon_- \ee \exp\bk{-2\frac{\left(\frac{v_{i_{-}}^- - \tilde{\lambda}}{\alpha_C}-r_{1} \left(\frac{v_{i_{-}}^- - \tilde{\lambda}}{\alpha_C} + n_{R}^{-} - 1\right)\right)^2}{\frac{v_{i_{-}}^- - \tilde{\lambda}}{\alpha_C} + n_{R}^{-} - 1}} \,, \nn \\
\epsilon_+ \ee \exp\bk{-2\frac{\left(n_{R}^{+} -(1-r_{1}) \left(\frac{v_{i_{+}}^+ - \tilde{\lambda}}{\alpha_C} + n_{R}^{+} + 1\right)\right)^2}{\frac{v_{i_{+}}^+ - \tilde{\lambda}}{\alpha_C} + n_{R}^{+} + 1}} \,, \nn\\ 
\epsilon_{\lambda_{C}} \ee 1- \mathrm{erf}\left(\frac{\tilde{\lambda}}{\sqrt{2} \sigma_C }\right) \label{esecrealmain} \,,}
provided $n_R^+$ is set to the saturating photon number of the difference measurement. 

Moreover, \eqn{\epsilon_c = 1 - \mathrm{tr} \left\{\sum_{i=i_-}^{i_+}\ket{\alpha}\bra{\alpha} \hat{V}_C^{\sigma_C,\Delta_{\mathrm{ADC}}}(i) \right\} \label{completemain} \,,} 
using a coherent state $\ket{\alpha}$ as an input.
\end{Theorem}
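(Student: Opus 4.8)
The plan is to prove the three parts of the theorem — the min-entropy bound $\kappa$, the failure probability $\epsilon_{\mathrm{fail,m}}$, and the completeness $\epsilon_c$ — all resting on the single structural fact that BS$_1$ is a passive element fed by a trusted vacuum and therefore conserves total photon number. Consequently, conditioned on the untrusted source emitting a component of definite photon number $n$, the pair $(n_C,n_R)$ of photons routed to the certification and randomness arms is exactly binomially distributed, $n_C\sim\mathrm{Bin}(n,r_1)$ with $n_R=n-n_C$, \emph{independently of the details of} $\hat{\rho}_E$. This reduces the certification to a classical large-deviation inference about $n_R$ from the reading at detector C, after which the randomness of the difference measurement acting on the certified state is bounded.

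For the failure probability I would first turn the passing voltage window into a guaranteed photon-number window for $n_C$. Writing the response as $v_C=\alpha_C n_C+\lambda$ with Gaussian $\lambda$, the event $|\lambda|>\tilde{\lambda}$ carries probability $1-\mathrm{erf}(\tilde{\lambda}/(\sqrt{2}\sigma_C))=\epsilon_{\lambda_C}$ and is charged directly by a union bound; on its complement the passing window $[v_{i_-}^-,v_{i_+}^+]$ yields the conservative bound $n_C\ge (v_{i_-}^- -\tilde{\lambda})/\alpha_C$ and the corresponding upper envelope. The key step is a worst-case optimisation over the adversarial source. Since the out-of-range-and-pass probability is \emph{linear} in the source's photon-number distribution $p(n)$, its supremum is attained by a point mass, i.e.\ by a Fock input $\ket{n^\ast}$, so it suffices to maximise $q(n):=\Pr[n_R\notin[n_R^-,n_R^+]\text{ and pass}\mid n]$ over a single integer $n$. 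Applying Hoeffding's inequality to the binomial split bounds each tail: the lower tail ($n_R\le n_R^- -1$ while $n_C\ge n_C^{\min}$) is maximised at $n=n_C^{\min}+n_R^- -1$, giving exactly $\epsilon_-$, and the upper tail is maximised at $n=n_C^{\max}+n_R^+ +1$, giving $\epsilon_+$. At each extremal $n$ the opposite tail is inactive (because $n_R^-\le n_R^+$), so $\max_n q(n)=\max\{\epsilon_-,\epsilon_+\}$; adding $\epsilon_{\lambda_C}$ gives Eq.~(\ref{epsilontheoremmain}), and a final union bound over the $m$ rounds gives $\epsilon_{\mathrm{fail,m}}\le m\,\epsilon_{\mathrm{fail}}$.

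For $\kappa$ I would use that, conditioned on the measured $(n_C,n_R)$, photon-number conservation collapses mode R to the pure Fock state $\ket{n_R}$ in a product with Eve's system, so no quantum side information can help her predict the subsequent difference outcome. The difference measurement on $\ket{n_R}$ and vacuum sends each photon to A or B with probability $\tfrac{1}{2}$, so $x=n_A-n_B$ follows the centred binomial $\Pr[x]=2^{-n_R}\binom{n_R}{(n_R+x)/2}$; after coarse-graining, Eve's guessing probability is the weight of the most likely voltage bin, namely the central set $\mathcal{X}$ of Eq.~(\ref{supportmain}), whose half-width $\lfloor\delta V/(2\alpha_D)\rfloor$ is fixed by the ADC resolution. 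Because this bin weight decreases with $n_R$, the certified lower bound $n_R^-$ is the worst case, yielding the single-round guessing probability inside Eq.~(\ref{hminthmmain}); independence of the $m$ rounds makes the joint min-entropy additive, producing the overall factor $m$.

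Completeness is then immediate: for the honest coherent-state input $\ket{\alpha}$ the test passes precisely when the certification outcome $i$ lies in $[i_-,i_+]$, whose probability is $\mathrm{tr}\{\sum_{i=i_-}^{i_+}\ket{\alpha}\bra{\alpha}\hat{V}_C^{\sigma_C,\Delta_{\mathrm{ADC}}}(i)\}$ using the POVM of Eq.~(\ref{certfinal_main}), so $\epsilon_c$ is its complement. I expect the main obstacle to lie in the failure analysis rather than in the entropy or completeness: one must verify that the extremal Fock configuration is genuinely worst-case and that the chosen threshold offsets (the $\pm\tilde{\lambda}$ noise margin and the $\pm 1$ photon shifts) keep both Hoeffding bounds conservative simultaneously, while the decorrelation-from-Eve step must lean carefully on photon-number superselection so that the collapsed R mode really is a product Fock state.
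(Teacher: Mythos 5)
Your overall route coincides with the paper's: reduce the adversary to Fock-state inputs, exploit the binomial splitting at BS$_1$, bound the two binomial tails by Hoeffding at the extremal photon numbers $n_E = n_C^- + n_R^- - 1$ and $n_E = n_C^+ + n_R^+ + 1$, charge the Gaussian noise via the erf tail, union-bound over $m$ rounds, and read off the min-entropy from the weight of the central voltage bin at $n_R = n_R^-$. However, there is a genuine gap in how you dispose of coherent attacks. The theorem makes no assumption on the source, so Eve may input a state entangled across all $m$ rounds; your assertion that ``independence of the $m$ rounds makes the joint min-entropy additive'' is exactly the thing that must be proven, and your proposed justification via photon-number superselection is not the right mechanism --- an adversarial source is not bound by a superselection rule, and in any case the argument must be about the measurements, not the states. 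The paper's Lemma \ref{fockopt} supplies the missing step: every operator in the protocol (difference, certification and failure operators) is diagonal in the Fock basis, so the Fock-basis dephasing map $\hat{\mathcal{D}}$ commutes with all of them and may be inserted without changing any observable statistics; since $\hat{\mathcal{D}}$ is entanglement-breaking, Eve's effective input is without loss of generality a product of Fock-diagonal states, and linearity then gives point-mass (Fock) optimality round by round. Without this lemma neither the additivity of $\kappa$ over rounds nor the legitimacy of maximising the single-round failure probability over a classical distribution $p(n)$ is established.

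Two smaller points. First, your claim that at each extremal $n$ the opposite tail is inactive ``because $n_R^- \le n_R^+$'' is not the correct condition: the paper shows the two tails are never simultaneously nonzero only provided $\tilde{\lambda} \leq v^+_{i^+} - \alpha_C\bigl(n_R^+ - n_R^- + 1\bigr)$, a constraint tying the noise cutoff to the voltage window and the photon window, which must be verified (it holds comfortably for the experimental parameters). Second, Hoeffding's inequality bounds each tail for a \emph{given} $n$ but does not locate the maximising $n$; the paper needs a separate monotonicity argument (differencing successive cumulative binomials via Pascal's identity and a hypergeometric identity) to show the failure probability peaks at $n_E^{\mathrm{opt}} = n_C^- + n_R^- - 1$, and an analogous monotonicity in $n_R$ of the central binomial weight to show the guessing probability is maximised at $n_R^-$. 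Both facts are plausible but are asserted rather than proven in your proposal.
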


\noindent{\it Proof sketch:} 
For a complete proof, see Appendix \ref{proof_main_theorem}. \nw{The protocol consists of $m$ rounds, each of which are defined as a certification measurement subjected to the test $\mathcal{P}$ and a randomness measurement sample that is registered in $X$.} One part of the proof is to show that, for any given round of the protocol, conditioned on passing the test $\mathcal{P}$, the state in mode R has support in the photon number basis that lies almost entirely in the range $[n_R^-,n_R^+]$. More concretely, we maximise over all possible input states to upper bound
\eqn{\epsilon_{\mathrm{fail}} := \max_{\hat{\rho}_E} \mathrm{Pr} \left[i^{-}\leq i \leq i^{+} \wedge n_R\notin [n_{R}^{-},n_{R}^{+}]\right]\,,}
the joint probability that the test would be passed in mode C whilst a photon number outside the range $[n_{R}^-,n_R^+]$ was present in mode R. This quantity can be interpreted as the probability that the conditional state in mode R can be operationally distinguished from any state solely supported within $[n_{R}^{-},n_{R}^{+}]$ (see Appendix \ref{details}). 

The second part of the proof is to optimise over all possible input states with support only in $[n_{R}^{-},n_{R}^{+}]$ to derive a lower bound on the conditional min-entropy. Note that {\it a priori}, Eve has the freedom to choose an input state that is potentially entangled across all $m$ rounds, i.e. we are considering completely general, so-called coherent attacks. Together, these results mean that either the min-entropy for a single round will be lower bounded or the protocol will abort except with probability $\epsilon_{\mathrm{fail}}$. For $m$ rounds, one can simply add these lower bounds together to bound the min-entropy of the output \nw{concatenated} string except with a probability 
\eqn{\epsilon_{\mathrm{fail, m}} := 1-(1-\epsilon_{\mathrm{fail}})^m\leq m\epsilon_{\mathrm{fail}} \,, \label{proof1}} 
as claimed in Eq.~(\ref{mroundmain}).

Intuitively, one would expect that Eve's optimal strategy to predict the outcome of a difference measurement would be to input a pure Fock state and this is indeed the case. The key fact is that the realistic difference measurement is still diagonal in the photon number basis and that a $m$-round protocol can be described as a tensor product of such measurements. Note that for the purposes of calculating the min-entropy, we consider the difference measurement in Eq.~(\ref{measfinal_main}) from the perspective of Eve who knows the noise variable $\lambda_D$ on a shot-by-shot basis, for which $\hat{V}_D^{\Delta_{\mathrm{ADC}}}(j) = \sum_{x\in \mathcal{X}} \hat{X}(x)$, where $\mathcal{X} = \{x : \alpha_{D} x + \lambda_{D} \in I_{j}^D \}$.
The fact that this measurement commutes with a diagonalising map in the photon number basis makes it straightforward to show that Eve's optimal guessing probability is achieved by inputting a pure Fock state. Provided we choose $n_R^+$ less than $n_{\mathrm{max}}$, the saturation value for the detectors, then direct calculation shows that the guessing probability decreases monotonically in $n_R$. Thus, for states restricted to $[n_{R}^{-},n_{R}^{+}]$, the smallest min-entropy is achieved by inputting $\ket{n_R^-}$. Finally, the fact that the coefficients in Eq.~(\ref{xmain}) are those of a binomial distribution can be used to show that Eve's min-entropy is minimised whenever $x$ is minimal (0 or 1 depending if an odd or even photon number is input) and $\lambda_D = 0$. Assuming that this is always the case, direct evaluation of $\mathrm{tr} \left \{ \ket{n_R^-}\bra{n_R^-} \hat{V}_D^{\Delta_{\mathrm{ADC}}}\,(n_R^-\mod 2)\right \}$ yields the expression in Eq.~(\ref{hminthmmain}).

Turning to the failure probability, we first define a failure operator which corresponds to taking the failure condition (i.e. a passing voltage is observed at detector C along with $n_R\notin [n_{R}^{-},n_{R}^{+}]$ in mode R) and write it as an operator in the Hilbert space of Eve's input mode 
\eqn{\hat{V}_F^{\Delta_{\mathrm{ADC}}}(i,n_{R}^{-},n_{R}^{+}) \ee \sum_{\stackrel{n_C\in \mathcal{C}}{n_R\notin [n_{R}^{-},n_{R}^{+}] }} \frac{r_{1}^{n_C}(1-r_{1})^{n_R}(n_C+n_R)!}{n_C!n_R!} \nn \\
&\times& \ket{n_C+n_R}\bra{n_C+n_R}_E \,. \label{fail}}

Since this operator is also diagonal in the photon number basis, one can repeat the previous arguments to show that Eve's optimal strategy to maximise this failure probability is also achieved by a Fock state.

The failure probability for a single round of the protocol can then be written as 
\eqn{\epsilon_{\mathrm{fail}}\ee \max_{n_E}  \sum_{i= i^{-}}^{i^{+}} \bra{n_E}\hat{V}_F^{\sigma_C,\Delta_{\mathrm{ADC}}}(i,n_{R}^{-},n_{R}^{+})\ket{n_E} \,, \label{failsketch}}
where $\mathcal{C} = \{ n_C: \alpha_C n_C + \lambda_C \in [i^-,i^+] \}$. 

To bound this quantity, we first use our knowledge of the certification noise variable $\lambda_C$. Except with probability $\epsilon_{\lambda_{C}} = 1 - \mathrm{erf}\left(\frac{\tilde{\lambda}}{\sqrt{2} \sigma_C }\right)$, we know that $|\lambda_C|\leq \tilde{\lambda}$. Substituting Eq.~(\ref{fail}) in Eq.~(\ref{failsketch}) yields two terms as the sum over $n_R\notin [n_R^-,n_R^+]$ decomposes as a sum for $0\leq n_R<n_R^-$ and $n_R^+<n_R\leq \infty$. Provided we have $\lambda_C \leq v^+_{i^+} - \alpha_C\bk{n_R^+ - n_R^- + 1}$, then there is no value of $n_E$ for which both terms will be simultaneously non-zero and we can write 
\eqn{\epsilon_{\mathrm{fail}} = \max \{ \epsilon_-,\epsilon_+\} + \epsilon_{\lambda_{C}} \,, \label{proof2}} 
where $\epsilon_-$ ($\epsilon_+$) corresponds to the lower (upper) sum. 

Both of these are essentially cumulative binomial distributions. For example, for a particular value of $n_E$
\eqn{\epsilon_- \leq \sum_{\stackrel{n_C=\max \{ n_C^-,}{n_E-(n_{R}^{-}-1)\}}}^{n_E} \frac{r_{1}^{n_C}(1-r_{1})^{n_R}(n_C+n_R)!}{n_C!n_R!} \,, \label{eminus}}
where $n_C^-$ is the smallest photon number allowed at mode C consistent with passing the test. 

For unbounded $\lambda_C$, it would be impossible to determine $n_C^-$ or $\epsilon_-$, but again using $\tilde{\lambda}$, we can do so except with probability $\epsilon_{\lambda_C}$. If we define $v_i^{-(+)}$ as the minimum (maximum) voltage compatible with the passing range $[i^-,i^+]$, we can obtain a minimum (maximum) photon number $n_C^- = (v_i^--\tilde{\lambda})/\alpha_C$ ($n_C^+ = (v_i^++\tilde{\lambda})/\alpha_C$) for mode C compatible with passing the test. The varying lower limit on the sum in Eq.~(\ref{eminus}) stems from the fact that for Eve to cheat, there are two constraints on $n_C$. First, it must be the case that a sufficiently large number of photons go to detector C such that the test is passed, but for sufficiently large $n_E$ this condition is superseded by the requirement that less than $n_R^-$ photons go to mode R. Arguments based upon the nature of the binomial coefficients allow us to show that to maximise $\epsilon_-$, Eve should choose the input state $n_E^{\mathrm{opt}} = n_C^- + n_R^- - 1$. This can be directly substituted into Eq.~(\ref{eminus}) and the application of Hoeffding's bound yields the term appearing in Eq.~(\ref{esecrealmain}). Finally, an analogous argument can be applied to bound $\epsilon_+$ as per Eq.~(\ref{esecrealmain}). In combination with Eq.~(\ref{proof1}) and Eq.~(\ref{proof2}), this completes the security proof.

\section{Extracting Random Numbers from Certified Quantum Randomness\label{sec:hash}}
\nw{Finally, we turn to the task of actually extracting $\epsilon$-secure random numbers for use in real-world applications. This can be achieved via two-universal hashing (detailed in Appendix \ref{rne}) which can be efficiently implemented using an FPGA. The details of the randomness extraction are critical in determining both the final speed and security of the QRNG. Firstly, one must obtain a composable certificate for how close the hashed outputs are to perfect randomness. Secondly, one needs to assess whether the randomness extraction is performed in real-time, i.e. at a rate greater than or equal to the randomness generation rate posed by the experiment. To precisely address these issues, the critical parameters are the FPGA's hashing speed (number of hashes per second) and the hashing block size. 

Regarding the composable security definition for the final hashed numbers, we can simply adopt the following standard secrecy criteria from the QKD literature \cite{portmann2014cryptographic}.

\begin{Definition} \label{QREdef}
Let $X$ be the random variable describing the measurements of a certified QRG protocol which succeeds with probability $p_{\mathrm{pass}}$ and let $S$ denote the result of a randomness extraction process applied to $X$. The result $S$ is $\epsilon$-secure if $\hat{\rho}_{SE}$, the joint state with the eavesdropper, satisfies
\eqn{p_{\mathrm{pass}} D(\hat{\rho}_{SE}, \hat{\rho}_{\mathrm{ideal}}) \leq \epsilon\,, \label{eq:QREdef}}
where $D(\hat{\rho},\hat{\sigma}):=\frac{1}{2}||\hat{\rho}-\hat{\sigma}||_1$ is the trace distance and $\hat{\rho}_{\mathrm{ideal}}$ is the output of an ideal randomness source, defined as $\hat{\rho}_{\mathrm{ideal}} := \hat{\tau}_S\otimes\hat{\rho}_E$, with $\hat{\tau}_S$ the uniformly distributed state on $S$. 
\end{Definition}

Due to the composable nature of our randomness generation protocol, we can apply previous results on hashing with quantum side information \cite{tomamichel2011leftover} to obtain the desired certificate in Eq.~(\ref{eq:QREdef}). Its precise formulation is given by the theorem below (see Appendix \ref{rne} for a full derivation).

\begin{Theorem}
A certified SDI (m,$\kappa$,$\epsilon_{\mathrm{fail,m}}$,$\epsilon_c$)-randomness generation protocol as defined in Definition \ref{QRGdef} can be processed with a random seed of length $m$ via two-universal hashing to produce a certified SDI random string of length $l$ given by
\eqn{l \ee \kappa + 2 -\log_2 \frac{1}{\epsilon_{\mathrm{hash}}^2}\label{lexp} \,,}
that is $\epsilon_c$-complete and $\epsilon_l = \epsilon_{\mathrm{hash}}+\epsilon_{\mathrm{fail,m}}$ secure
with
\eqn{ \epsilon_{\mathrm{hash}} = 2^{(l - \kappa - 2)/2}\,. \label{eq:hashthm}}
\end{Theorem}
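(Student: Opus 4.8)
The plan is to invoke the quantum Leftover Hash Lemma against quantum side information \cite{tomamichel2011leftover} and then fold the generation protocol's failure probability into the composable trace-distance bound of Definition \ref{QREdef}.

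I would begin from the statement of the Leftover Hash Lemma: for a two-universal family of hash functions mapping the raw register $X$ to an $l$-bit output $S$, acting on a classical-quantum state $\hat{\rho}_{XE}$, the extracted joint state obeys
\begin{equation}
D(\hat{\rho}_{SE}, \hat{\tau}_S\otimes\hat{\rho}_E) \leq \tfrac{1}{2}\, 2^{-\frac{1}{2}\left(\hmin(X|E) - l\right)}\,.
\end{equation}
Substituting the guarantee $\hmin(X|E)\geq \kappa$ supplied (on non-failure) by the generation protocol and the target length $l = \kappa + 2 - \log_2(1/\epsilon_{\mathrm{hash}}^2)$, a one-line manipulation gives $-\tfrac12(\kappa - l) = 1 + \log_2\epsilon_{\mathrm{hash}}$, so the right-hand side collapses to exactly $\epsilon_{\mathrm{hash}}$; solving the same relation for $\epsilon_{\mathrm{hash}}$ reproduces Eq.~(\ref{eq:hashthm}). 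The public hash seed of length $m$ is independent of $(X,E)$ and may be handed to Eve, so it does not weaken this bound.

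The central step is to upgrade this estimate from the ideal-entropy case to the physical conditional state, for which $\hmin\geq\kappa$ holds only except with probability $\epsilon_{\mathrm{fail,m}}$. I would treat this with a hybrid argument: the security clause of Definition \ref{QRGdef} means precisely that the true state $\hat{\rho}_{XE}$ conditioned on passing $\mathcal{P}$ is $\epsilon_{\mathrm{fail,m}}$-close in trace distance to an idealised state $\hat{\sigma}_{XE}$ that genuinely satisfies $\hmin(X|E)_{\hat{\sigma}}\geq\kappa$ (equivalently, that the $\epsilon_{\mathrm{fail,m}}$-smooth min-entropy is at least $\kappa$). Since hashing is a fixed quantum channel and trace distance is contractive under it, applying the Leftover Hash bound to $\hat{\sigma}$ and then the triangle inequality to return to $\hat{\rho}$ yields $D(\hat{\rho}_{SE}, \hat{\tau}_S\otimes\hat{\rho}_E)\leq \epsilon_{\mathrm{hash}} + \epsilon_{\mathrm{fail,m}}$. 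Multiplying by $p_{\mathrm{pass}}\leq 1$ as required by Eq.~(\ref{eq:QREdef}) then certifies $\epsilon_l$-security with $\epsilon_l = \epsilon_{\mathrm{hash}} + \epsilon_{\mathrm{fail,m}}$. Completeness transfers unchanged: the extractor only ever acts on rounds that have already passed $\mathcal{P}$, so an honest coherent-state implementation reaches the hashing stage with probability $1-\epsilon_c$, and deterministic post-processing cannot lower it.

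The main obstacle I anticipate is the bookkeeping in this hybrid step rather than any deep new estimate. In particular, one must be careful to obtain an \emph{additive} combination $\epsilon_{\mathrm{hash}} + \epsilon_{\mathrm{fail,m}}$: applying the \emph{smooth} Leftover Hash Lemma naively would charge the smoothing parameter with a factor of two and produce $2\epsilon_{\mathrm{fail,m}}$, so it is cleaner to route the failure probability entirely through the trace-distance closeness to $\hat{\sigma}$ and then apply the non-smooth lemma to $\hat{\sigma}$ itself. One also has to confirm that the $p_{\mathrm{pass}}$-weighting in the composable definition is used in the direction that preserves, rather than weakens, the stated bound.
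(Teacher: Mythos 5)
Your proposal is correct and follows essentially the same route as the paper's Appendix E: invoke the quantum Leftover Hash Lemma of \cite{tomamichel2011leftover} in the form $D(\hat{\rho}_{SE},\hat{\tau}_S\otimes\hat{\sigma}_E)\leq 2^{(l-\hmin(X|E)-2)/2}$, substitute the certified bound $\hmin\geq\kappa$ and the stated $l$ to obtain $\epsilon_{\mathrm{hash}}$, and then absorb the event $\hmin<\kappa$ (probability at most $\epsilon_{\mathrm{fail,m}}$) additively via the boundedness and convexity of the trace distance --- your hybrid/triangle-inequality phrasing of that last step is equivalent to the paper's argument. The only cosmetic difference is that the paper carries $p_{\mathrm{pass}}$ inside the choice of $l$ and discards it via $\log_2 p_{\mathrm{pass}}<0$, whereas you discard it at the end via $p_{\mathrm{pass}}\leq 1$; both yield the same bound.
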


To understand how such a system will perform, we will examine these security parameters in more detail beginning with $\epsilon_{\mathrm{hash}}$. The raw data output by an $m$-round QRG protocol will be a bit-string of length $h=mb$, where $b$ is the total number of bits recorded by the ADC for each measurement (recall that this is different from $\Delta_{\mathrm{ADC}}$, the effective number of noise free bits that we used to lower bound the randomness). From Theorem \ref{rgend1}, we know that the total min-entropy is proportional to the number of rounds, or alternatively the block length, and so we can write $\kappa = g'm = \frac{g'}{b} h:=gh$ for some constants $g$ and $g'$. The extracted length can also be written in terms of a compression ratio $r$ defined by $l = r\times h$. Putting this together, we can rewrite Eq.~(\ref{eq:hashthm}) as 
\eqn{\epsilon_{\mathrm{hash}} = {2^{-\frac{h}{2}(g-r-2)} \,.}\label{hashblock}}

To see the critical importance of the block size $h$, consider the case of maximal compression. For fixed $h$, there is hard lower limit to the compression ratio given by $r\geq\frac{1}{h}$, since the minimum possible output length is $1$ bit. This in turn necessitates a lower limit $\epsilon_{\mathrm{hash}} \geq 2^{-\frac{hg-3}{2}}$ and hence a limit on the total achievable $\epsilon_l$. This shows that a certain minimum block size is mandatory to obtain a given level of security. More generally, considering Eq.~(\ref{hashblock}), it becomes clear that increasing $h$ allows us to either increase the compression ratio while keeping $\epsilon_{\mathrm{hash}}$ constant (i.e. linearly improving performance whilst maintaining security) or decrease $\epsilon_{\mathrm{hash}}$ while keeping $r$ constant (i.e. exponentially improving security whilst maintaining performance). 

There is a further consideration in that augmenting the block size $h$ (i.e. taking more measurement samples $m$) has the deleterious effect of increasing the value of $\epsilon_{\mathrm{fail,m}}$. This can be compensated by either altering the voltage thresholds used in the test $\mathcal{P}$ at the cost of a decreased probability of passing the test $1-\epsilon_c$, or inferring a smaller certified minimum photon number and hence a smaller min-entropy $\kappa$. This in turn feeds back into $\epsilon_{\mathrm{hash}}$. Nevertheless, although one cannot arbitrarily increase $h$, in practice it turns out that having a sufficiently large block size is imperative for maximising the overall performance of a QRNG setup. If the min-entropy per measurement is relatively low, then as per Eq.~(\ref{hashblock}) and the discussion above, a small $h$ prohibits any randomness extraction whatsoever. As well as this in-principle limitation, in practice, the maximum achievable block size $h$ is typically limited by the technical parameters of the FPGA used for post-processing. %This is all the more important for a certified SDI protocol which has a lower randomness per round as compared to a fully trusted device-dependent case.

%Therefore, depending upon the desired application, we may need to concatenate several blocks of random numbers to obtain a final string of the requisite length (e.g. one client may require $1\,$Gb of $\epsilon$-secure randomness, another may require $1\,$kb). Taking $t$ blocks of length $h$ will give a final output of length $L=t\times l = t\times r\times h$ with an overall security parameter $\epsilon$ given by

Therefore, depending upon the desired application, one may need to concatenate several blocks of hashed random numbers to obtain a final string of the requisite length. Intuitively, it should be possible to deliver shorter strings at a faster bit rate, given that less concatenation is required and hence worse security per hashed output string of length $l$ can be tolerated. Defining $t$ to be the number of output $l$-bit concatenated blocks, one obtains a final string of the desired length $L=t\times l = t\times r\times h$ with an overall security parameter $\epsilon$ given by
\eqn{\epsilon = t\epsilon_l \geq t (\epsilon_{\mathrm{hash}} + m\epsilon_{\mathrm{fail}})\,, \label{eq:concatenate}}
as per Eq.~(\ref{proof1}) and Eq.~(\ref{ese2}) in Appendix \ref{rne}.

One can now readily observe that for a fixed final $\epsilon$, a smaller number of concatenations $t$ would allow a larger value for $\epsilon_{\mathrm{fail}}$ and $\epsilon_{\mathrm{hash}}$ which in turn permits a larger compression ratio $r$ and thus a faster overall bit rate.

%For example, this implies that for a given QRNG setup with fixed compression ratio $r$, data acquired over $1\,\si{\second}$ and processed in a single hashing operation does not have the same composable security as that exact same data acquired over $1000$ measurements of $1\,\si{\milli\second}$, each hashed individually, and then concatenated. 

Turning to the final bit rate, there are two cases, depending upon whether it is the FPGA or the experiment itself which is the bottleneck. Consider the case when the hashing speed is slower than the experiment's output data generation rate. Define $R_{\mathrm{hash}}$ as the FPGA clock rate (i.e. the inverse of the time it takes to carry out one hashing operation). Since each hashing operation outputs $l$ bits, the total bit rate is 
\eqn{R_h:=R_{\mathrm{hash}}\times l = R_{\mathrm{hash}}\times r\times h\,,}
where the subscript $h$ denotes that the limiting time factor is the \textit{hashing} speed. 
%Note that if this is indeed the case, then a larger block size $h$ further improves the overall performance beyond the effect explained above. 
%{\bf [[Nathan]]: I'm definitely not sure this is true, i.e. that $R_{\mathrm{hash}}$ really stays constant as h increases. Because we're somehow having to do more operations so I would think it slows down. We need to clarify this.}

The second case, which will hold for our real-time implementation, is when the experiment is slower than the hashing. Given an experimental data acquisition rate of  $R_{\mathrm{data}}$, the total bit rate will simply be
\eqn{R_d := R_{\mathrm{data}} \times r\,,}
where the subscript $d$ denotes that this time, it is the \textit{data} acquisition rate which is the limiting factor.

Ultimately, given that an honest implementation of the QRNG protocol passes with probability $1-\epsilon_{c}$, the averaged generated bit rate is 
\eqn{\EV{R} = (1-\epsilon_{\mathrm{c}}) \times \min \{R_h,R_d \}\,,}
where the minimum discriminates between the two possible cases described above.
}

\section{Experiment}
\nw{The experiment carries out two separate key tasks: the randomness generation and the real-time extraction of random numbers. 

\begin{figure}[h!]
\includegraphics[width=\linewidth]{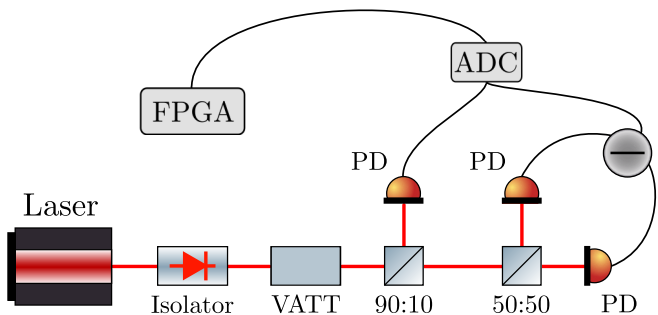}
\caption{\nw{Schematic of the setup used for random number generation. Measurements generated by the fibre-connected optical elements are fed to an ADC coupled to an FPGA. VATT: variable optical attenuator; PD: photodiode; ADC: analog-to-digital converter; FPGA: field-programmable gate array.}}
\label{fig:optical_setup}
\end{figure}

The experimental setup is displayed in Fig.~\ref{fig:optical_setup} and consists of a fully fibre-connected architecture with commercially available components for the randomness generation, and a high-speed field-programmable gate array (FPGA) for random number extraction. Note that for the randomness generation experiment, measurement signals will be analysed with an oscilloscope in order to precisely characterise the randomness found in each measurement while the real-time extraction of random numbers will be faithfully performed on a dedicated high-performance post-processing board containing both an ADC and an FPGA.}

\subsection{\nw{Randomness Generation}}
\label{sec:randomness_generation_experiment}
The light source utilised is a continuous wavelength (CW) laser (Koheras Adjustik E15) at telecom wavelength $\lambda=1550\,\si{\nano\metre}$. Note that the source's linewidth is less than $100\,\si{\hertz}$, thereby ensuring it to be effectively single-frequency. The laser output is directed onto a fibre optical isolator (Thorlabs IO-H-1550APC) in order to prevent unwanted back reflections into the laser. A fibre optical variable attenuator (model MAP-220CX-A from JDSU) is used to generate different photon numbers impinging onto the QRG by varying the laser's optical power. The certification and randomness generation measurements are implemented using standard fibre couplers (Thorlabs 10202A optimised for telecom wavelength) with reflectivities $r_{1}=0.0965$ (i.e. $\approx$ 90:10) and $r_{0}=\frac{1}{2}$ (i.e. 50:50), respectively. Detector C --- used for the certification measurement --- is a fibre-coupled InGaAs PIN photodiode (Thorlabs DET08CFC/M) with a large bandwidth $BW_{C}=5\,\si{\giga\hertz}$, a responsitivity $\eta_{C}=1.04\,\si{\ampere\per\watt}$ at $\lambda=1550\,\si{\nano\metre}$, a transimpedance gain $G_{C}=50\,\si{\ohm}$ and a measured electronic noise with standard deviation \nw{$\sigma_{C}\approx 0.25\,\si{\milli\volt}$}. On the other hand, the randomness generation measurement made of detectors A and B is implemented by means of a fibre-coupled AC-coupled balanced detector (Thorlabs PDB-480C-AC) with the following corresponding specifications: $BW_{D}=1.6\,\si{\giga\hertz}$, $\eta_{D}=0.95\,\si{\ampere\per\watt}$ at $\lambda=1550\,\si{\nano\metre}$, $G_{D}=16000\,\si{\ohm}$ and $\sigma_{D}\approx 3.05\,\si{\milli\volt}$. \nw{Signals from the detectors are sampled by an oscilloscope (Lecroy WaveRunner 204MXi) with a $2\,\si{\giga\hertz}$ bandwidth, a sampling rate of $F_{S}=10\,$GS/s and a voltage resolution of $V_{\mathrm{max}}-V_{\mathrm{min}}=10\,\si{\milli\volt}/\textnormal{div}$. The measurements are recorded by the oscilloscope's ADC as an 8-bit output, but with a calibrated bit depth of $\Delta_{\mathrm{ADC}}=4.772\,$bits.} This corresponds to the effective number of bits free of ADC internal noise. A total of 24 data sets were acquired, scanning the optical power input to the difference measurement from $0\,\si{\milli\watt}$ to $6.77\,\si{\milli\watt}$, corresponding to the balanced detector's linearity response range. \nw{Each data set was acquired over $T=1\,\si{\milli\second}$, yielding 10 million samples per power setting.} 
%Signals from the detectors are sampled by an ADC with a $2\,\si{\giga\hertz}$ bandwidth, a sampling rate of $F_{S}=10\,$GS/s and a voltage resolution of $V_{\mathrm{max}}-V_{\mathrm{min}}=10\,\si{\milli\volt}/\textnormal{div}$. Each measurement is recorded as an 8-bit output, but with a calibrated bit depth of $\Delta_{\mathrm{ADC}}=4.772$. 

To evaluate the certified randomness of this data for a desired failure probability $\epsilon_{\mathrm{fail}}$, we must first fix $\tilde{\lambda}$ such that $\epsilon_{\lambda_C}<\epsilon_{\mathrm{fail}}$ (here we choose $\epsilon_{\lambda_C} = \epsilon_{\mathrm{fail}}/2$). Then, given the difference measurement's saturation power, we set $n_R^+$ equal to the corresponding saturating photon number $n_{\mathrm{max}}^D = 1.06\times 10^7$ and choose an upper voltage threshold $v_{i_+}$ in Eq.~(\ref{esecrealmain}) such that $\epsilon_+<\epsilon_{\mathrm{fail}}/2$. Finally, for a given lower voltage threshold $v_{i_-}$, we solve Eq.~(\ref{esecrealmain}) to find $n_R^-$ such that $\epsilon_- = \epsilon_{\mathrm{fail}}/2$. This ensures that the photon number input to the difference measurement lies within $[n_R^-,n_R^+]$ except with probability $\max\{ \epsilon_-,\epsilon_+\} + \epsilon_{\lambda_C} = \epsilon_- + \epsilon_{\lambda_C} =\epsilon_{\mathrm{fail}}$ and the certified randomness can then be determined by plugging $n_R^-$ into Eq.~(\ref{hminthmmain}) to retrieve the conditional min-entropy. 

This establishes the protocol's SDI security as per Definition \ref{QRGdef}. However, to understand how much randomness we can expect to obtain in practice, we should also consider the protocol's completeness. Typically, we will have some claimed specifications for the source and can choose thresholds accordingly. We would normally only attempt to certify a quantity and quality of randomness such that the corresponding test $\mathcal{P}$ would be passed with high probability by a source satisfying the claimed specifications using Eq.~(\ref{completemain}). Here, for simplicity, for each input power, we will only allow ourselves to apply thresholds such that all $10^7$ measured samples pass the test.

In Fig.~\ref{fig:QRNG_dimensionality}, the certified minimum photon number $n_{R}^{-}$ in mode R is plotted against the input optical power for various security parameters $\epsilon_{\mathrm{fail}}$. The input power was scanned across the linear range of the balanced detector, with the voltage thresholds ($v_{i_{\pm}}^{\pm}$) at each power setting constrained such that all samples passed the test $\mathcal{P}$. Under these constraints, we chose a voltage threshold within the range $0\,\si{\milli\volt}$ to $39.2\,\si{\milli\volt}$. As can be seen, the certified photon number scales linearly with the input power and vanishes for sufficiently small or large photonic inputs. For small powers, $n_{R}^{-}$ goes to zero as no positive solution for Eq.~(\ref{esecrealmain}) with the required $\epsilon_-$ can be found. This is as expected given that, when a low photon number impinges onto detector C, one cannot discern the produced voltage from the detector's inherent electronic noise. Alternatively, for large powers, one can easily achieve a small value for $\epsilon_-$ but it now is not possible to obtain a value of $\epsilon_+$ such that the total certification is valid for $\epsilon_{\mathrm{fail}}$. This is also to be expected as one approaches the balanced detector's saturating power. Finally, for increasing security (i.e. smaller $\epsilon_{\mathrm{fail}}$), $n_R^-$ decreases for a given input power and remains positive over a smaller range of inputs. Indeed, the penultimate data point is non-zero only for $\epsilon_{\mathrm{fail}}\geq 10^{-20}$ and no photon number can be certified with any security for the final point.

\begin{figure}[h]
\includegraphics[width=\linewidth]{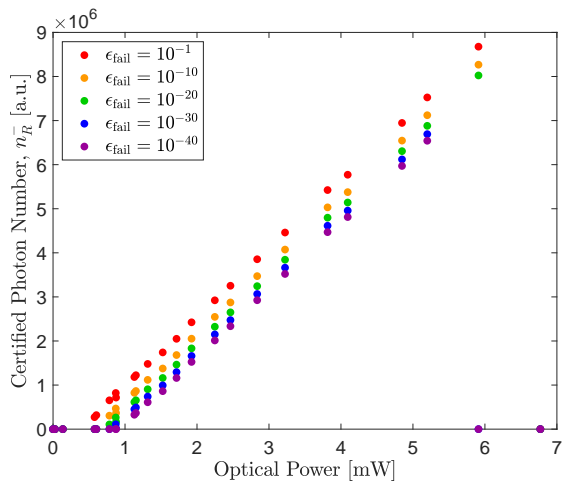}
\caption{Certified minimum photon number $n_{R}^{-}$ in mode R plotted against input optical power for various security parameters $\epsilon_{\mathrm{fail}}$. Voltage thresholds used in the test $\mathcal{P}$ are constrained such that all samples pass.}
\label{fig:QRNG_dimensionality}
\end{figure}

\nw{The main result of this new SDI framework} is shown in Fig.~\ref{fig:min_entropy}, for which a comparison is made between the experimentally estimated min-entropy, various device-dependent (DD) min-entropy models and our SDI approach. The red data points are experimental estimates of the unconditional min-entropy for different average input powers of the laser. These have been calculated from histograms of the difference measurement (shown as inset to Fig.~\ref{fig:min_entropy}) output by the balanced detector. Given these histograms, a Gaussian fit was performed and the retrieved maximum probability $p_{\textnormal{max}}$ was used to estimate the unconditional min-entropy via $H_{\textnormal{min}}=-\log_{2}(p_{\textnormal{max}})$. This corresponds to a naive analysis where all observed fluctuations are assumed to be truly random. The red line is a device-dependent prediction for $H^{\textnormal{DD}}_{\textnormal{min}}(X)$, calculated using our detector model and assuming that the laser is well modelled by a coherent state $\ket{\alpha}$. The resulting curve fits the data well with a coefficient of determination $R^2=98.96\%$, thereby confirming the validity of our modelling. In pink, $H^{\textnormal{DD}}_{\textnormal{min}}(X|E)$ corresponds to the usual device-dependent conditional min-entropy, assuming a known source but accounting for Eve's knowledge of the electronic noise present in our measurement apparatus. As such, it is equal to $H^{\textnormal{DD}}_{\textnormal{min}}(X)$ but shifted down by the min-entropy associated with the electronic noise of the balanced detector. Finally, in green, orange and blue points, we show our SDI model for the certified conditional min-entropy $H^{\textnormal{SDI}}_{\textnormal{min}}(X|E)$ for different values of the security parameter $\epsilon_{\mathrm{fail}}$. These were calculated via Eq.~(\ref{hminthmmain}) using the minimum certified photon numbers $n_{R}^{-}$ displayed in Fig.~\ref{fig:QRNG_dimensionality} for each $\epsilon_{\mathrm{fail}}$. 

\begin{figure}[h]
\includegraphics[width=\linewidth]{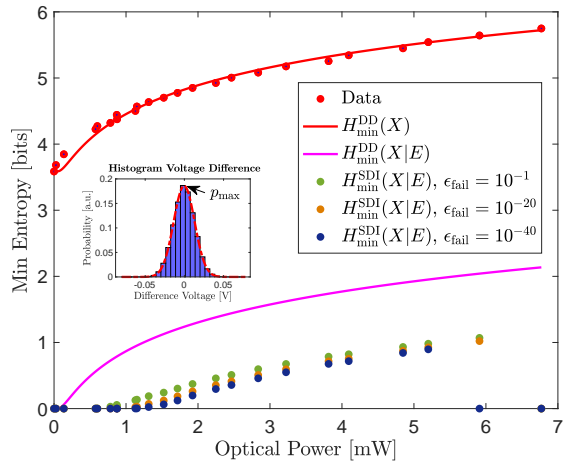}
\caption{Comparison between different min-entropy models. The red data points are the experimentally estimated min-entropies for different optical powers. These are obtained from the difference measurement's voltage histograms shown in the inset (the voltage bins have been artificially thickened by a factor of 10 to make the figure comprehensible). Error bars for the data points have been included with the vertical component arising from the precision of the histogram's Gaussian fit and the horizontal error showing the electronic noise's contribution of detector C when measuring the optical power. $H_{\mathrm{min}}^{\mathrm{DD}}(X)$ (red) and $H_{\mathrm{min}}^{\mathrm{DD}}(X|E)$ (pink) are the device-dependent (DD) min-entropy models unconditioned and conditioned on Eve's knowledge of the noise. $H_{\mathrm{min}}^{\mathrm{SDI}}(X|E)$ (green, orange and blue) are our SDI estimations of the conditional min-entropy plotted against the input optical powers and for various security parameters $\epsilon_{\mathrm{fail}}$.}
\label{fig:min_entropy}
\end{figure}

When comparing the different min-entropies in Fig.~\ref{fig:min_entropy}, it is clear that the claimed level of randomness critically depends on what assumptions are made about the QRG. Indeed, if one were to naively take $H^{\textnormal{DD}}_{\textnormal{min}}(X)$ as a consistent min-entropy model, the QRG's output would consequently be predictable since the electronic noise can be accessible to Eve. On the other hand, whilst $H^{\textnormal{DD}}_{\textnormal{min}}(X|E)$ correctly removes such classical side information, it nevertheless is a device-dependent model for which the experimentalist must trust the proper working of the entire setup, having carefully modelled it and its possible deviations. This means that such scheme must be secure against all sorts of complicated attacks from Eve. In the canonical setup of Fig.~\ref{fig:optical_setup}, a key origin of experimental complexity arises from the input light source. Our approach provides total independence from such complexity whilst still certifying a substantial amount of min-entropy per measurement as well as an explicit quantification of its confidence given by $\epsilon_{\mathrm{fail}}$. As can be seen in Fig.~\ref{fig:min_entropy}, we certify up to $\approx 1.1$ bit of min-entropy with $\epsilon_{\mathrm{fail}}= 10^{-20}$ for the penultimate data point. While this value is about half of what $H^{\textnormal{DD}}_{\textnormal{min}}(X|E)$ predicts, we argue that such compromise is reasonable given that we can still achieve large randomness bit rates for the added SDI security. Indeed, the importance of our SDI protocol's security is starkly illustrated by the final and initial input powers for which no min-entropy is assigned as opposed to the device-dependent model $H^{\textnormal{DD}}_{\textnormal{min}}(X|E)$. 

%\nw{Using a conservative acquisition rate of $1\,\si{\giga\hertz}$ (i.e. a detection window of $1\,\si{\nano\second}$), we obtain a secure bit rate of $\SI{1.1}{Gbps}$ with a security parameter $\epsilon_{\mathrm{fail}}=10^{-20}$.}
%\nw{\textbf{to Kill!!!!!!.}}
%Using a conservative random numbers' acquisition rate of $1\,\si{\giga\hertz}$ (i.e. a detection window of $1\,\si{\nano\second}$), we obtain a secure bit rate of $\SI{1.1}{Gbps}$ with a security parameter $\epsilon_{\mathrm{fail}}=10^{-20}$. This achieves an ultrafast and highly secure QRG based on commercially available components and entirely independent on the incoming light source $\hat{\rho}_{E}$ for which the randomness is characterised in real-time by the certification measurement. Finally, these bits have been converted into strings of random numbers using the Lefthover Hash Lemma (detailed in Appendix \ref{rne}). A corresponding estimated min-entropy of 1 bits per measurement was used and the obtained random numbers have successfully passed all the NIST tests \cite{rukhin2001statistical}.

\subsection{\nw{Real-Time Random Number Extraction}}
\nw{
The real-time extraction of random numbers is performed with a dedicated post-processing Printed Circuit Board (PCB) whose content and functioning are both thoroughly detailed in Appendix \ref{sec:FPGA_details}. Here, instead of using an oscilloscope to read the measurements output by the various detectors in the setup, voltage signals are directly fed to a $b=12\,$bits bit-depth ADC (Analog Devices AD9625) capable of measuring analog inputs up to $3.2\,\si{\giga\hertz}$ with a sampling rate of $F_{S}=2.5\,$GS/s as well as a large ENOB of $\Delta_{\mathrm{ADC}}=9.2\,$bits. This represents a substantial improvement with respect to the ADC found in the oscilloscope used in the characterisation measurements in the previous section. 

As a general principle, to maximise a QRNG's final bit rate, it is important to use an ADC whose ENOB over bit-depth ratio $\Delta_{\mathrm{ADC}}/b$ is as large as possible for a given bit-depth $b$. Indeed, for a fixed number of photons input to the randomness generation measurement, a large ENOB $\Delta_{\mathrm{ADC}}$ allows one to maximise the extractable certified min-entropy per sample $\kappa/m$ since the noise contribution intrinsic to the ADC would be minimised. As explained in Section \ref{sec:hash}, the min-entropy in turn sets the upper limit to the compression ratio, $r \leq \kappa/mb$. Although the ENOB is often not taken into account, this argument makes it clear why one should maximise $\Delta_{\mathrm{ADC}}/b$ rather than solely $b$. Finally, the output of the ADC is sent directly to the FPGA (Zynq Ultrascale$+$ ZU9EG) in order to carry out hashing.}

\nw{The real-time hashing of raw data was implemented using the concurrent pipeline algorithm based on Toeplitz matrix hashing \cite{zhang2016fpga}. The idea of the algorithm is to improve the speed of post-processing by decomposing the large Toeplitz matrix of size $h\times l$ into several submatrices of dimension $k\times l$ and then simultaneously performing matrix multiplication with the raw data. The crucial task of determining $k$, the number of rows for the submatrices, is explained in Appendix \ref{sec:FPGA_details}.  
%The size of the decomposed Toeplitz matrices is defined by the trade-off between the resource consumption (which grows swiftly with the submatrices' row size) and speed of the data processing (see Appendix\textcolor{green}{?,explain tradeoff})

To demonstrate our protocol, we ran a real-time random number extraction experiment in two distinct configurations producing either long or short strings. These address different real-world applications such as large scale simulations (e.g. Monte Carlo) for which Gb of random numbers are required and standard cryptographic protocols (e.g. Advanced Encryption Standard) typically employing random seeds of kb lengths. The parameters of both configurations are summarised in Table \ref{tab:experiment_FPGA_parameters}.

\begin{table}[h!]
\nw{
\begin{tabular}{|c|c|c|c|}
\hline\xrowht[()]{7pt}
 & Parameters & \multicolumn{2}{c|}{Value} \\
  \hline
  %\xrowht[()]{7pt}
  %$T$ & Acquisition time & $1\,\si{\second}$ & $1\,\si{\second}$ %\\%$12.4\times 10^7$ bits \\
  \xrowht[()]{7pt}
  $N_{S}$ & Number of output strings & $1$ & $1.9375\times10^{6}$ \\ 
 % \xrowht[()]{7pt}
 % $tm$ & Number of samples & $1.55\times10^{9}$ & $1.55\times10^{9}$ \\%$12.4\times 10^7$ bits \\
  %\xrowht[()]{7pt}
  %$R_{\mathrm{hash}}$& Hashing speed & $156.25\,$MHz\\
  \xrowht[()]{7pt}
  $h$ & Hashing block size & $9600\,$bits & $9600\,$bits\\
  \xrowht[()]{7pt}
  $t$ & Hashes per string & $1.9375\times10^{6}$ & $1$ \\
  \xrowht[()]{7pt}
  $m$ & Samples per hash & $800$ & $800$ \\%$12.4\times 10^7$ bits \\
  \xrowht[()]{7pt}
  %$r$ &compression ratio & $\sim1/8.6$\\
  $\kappa/m$ & Min-entropy per sample & $5.32\,$bits & $5.34\,$bits \\
  \xrowht[()]{7pt}
  $l$ & Hashing output length & $4155\,$bits & $4210\,$bits\\
  \xrowht[()]{7pt}
  %$M$& Memory required& 46.5 Mb\\
  $\epsilon_{\mathrm{fail}}$ & Sample failure $p$ & $1.6\times 10^{-19}$ & $1.1\times 10^{-10}$\\
  %\xrowht[()]{7pt}
\xrowht[()]{7pt}
$\epsilon_{\mathrm{hash}}$ & Hashing failure $p$ & $9.0 \times 10^{-17}$ & $3.8\times 10^{-10}$ \\
\xrowht[()]{7pt}
$\epsilon_l$ & Single hashing failure $p$  & $2.2\times 10^{-16}$ & $4.8\times 10^{-10}$  \\
\xrowht[()]{7pt}
$\epsilon$ & Total failure $p$ & $4.3\times 10^{-10}$ & $4.8\times 10^{-10}$\\
\xrowht[()]{7pt}
$R_d$ & Data limited bit rate& $8.05\,$Gb/s & $8.16\,$Gb/s \\
\xrowht[()]{7pt}
$\EV{R}$ & Average bit rate & $8.01\,$Gb/s & $8.16\,$Gb/s \\
\xrowht[()]{7pt}
$L$ & $\epsilon$-random bits per string & $8.05\,$Gb & $4.21\,$kb \\
\hline
\end{tabular}
}
\caption{\nw{Parameters and associated values for the two real-time random number extraction scenarii implemented here.}}
\label{tab:experiment_FPGA_parameters}
\end{table}

For the first configuration, we inserted an optimal input optical power of $5.8\,\si{\milli\watt}$ prior to the randomness generation measurement. The optimisation was performed such that the entire data would pass the certification test $\mathcal{P}$ with a probability $1-\epsilon_{c}=99.5\%$. This yields a certified min-entropy of $H^{\textnormal{SDI}}_{\textnormal{min}}(X|E)=5.32\,$bits per sample acquired by the ADC with a security parameter $\epsilon_{\mathrm{fail}} = 1.6 \times 10^{-19}$. Next, we downsampled the digitised output of the ADC to $1.55\,$GS/s in order to remove any time correlation. This stream of bits was then fed to the FPGA for which the hashing algorithm described above was performed at a speed of $R_{\mathrm{hash}}=193.75\,\si{\mega\hertz}$ and with a Toeplitz matrix of size $h=9600\,$bits and $l=4155\,$bits. We thus achieved a total bit rate of $R_{d} = R_{\mathrm{data}} \times r = 12 \times 1.55 \times 10^{9} \times \frac{4155}{9600} = 8.05\,$Gb/s with an overall composable security of $\epsilon = 4.3\times 10^{-10}$, thereby generating in real-time $N_{S}=1$ string of length $L=8.05\times10^{9}\,$ certified and composably secure quantum random numbers made of $t=1.9375\times10
^{6}$ concatenations. Note that given the probability of passing the test, this obtained bit rate corresponds to a bit rate of $\EV{R} = (1-\epsilon_{c})\times R_{d} = 8.01\,$Gb/s averaged over many runs and with the same level of security. In the second configuration, we took the inverse approach and avoided any concatenation (i.e. $t=1$), allowing for a larger hashing output length of $l=4210\,$bits. Every second, this resulted in $N_{S}=1.9375\times10
^{6}$ strings of length $L=4.21\,$kb each with a composable security of $\epsilon=4.8\times10^{-10}$. The obtained bit rate was thus $R_{d}=8.16\,$Gb/s with the same corresponding average bit rate $\EV{R}=8.16\,$Gb/s up to two decimal places. The numbers obtained from both settings were ultimately found to successfully pass the battery of NIST tests \cite{rukhin2001statistical}. 

%In the second setting, we obtained shorter output strings with the same security parameter but at a faster bit rate. The modified post-processing parameters for this implementation can be found in Table \ref{tab:experiment_FPGA_parameters}. These numbers from both settings were ultimately found to successfully pass the battery of NIST tests \cite{rukhin2001statistical}. 

%For the first setting, we inserted an optimal input optical power of $5.8\,\si{\milli\watt}$ prior to the randomness generation measurement, 

%We implement real-time post-processing of raw data by using the concurrent pipeline algorithm based on Toeplitz matrix hashing \cite{zhang2016fpga}. The idea of the concurrent pipeline algorithm for Toeplitz matrix hashing is to improve the speed of post-processing by decomposing the entire large Toeplitz matrix into several sub-matrices and implementing computational steps by a shared time-division multiplexing unit. The size of decomposed Toeplitz matrices in the pipeline algorithm is defined by the trade-off between the resource consumption (fastly growing with the increase of the row size) and speed of the data processing (see Appendix). At the reasonable sizes of the Toeplitz matrix (of the order of 4000 x 35600), we use matrices of the dimension $80\times{l}$ (where ${m}\times{l}$ is the size of the Toeplitz matrix) and engage ~80\% of the FPGA computational resources. This allows obtaining the output random sequences with the rate on the level of 1.39 Gb/s. The obtained random numbers pass NIST tests \cite{rukhin2001statistical} successfully. 

This achieves an ultrafast and highly composably secure QRNG based on commercially available components and entirely independent of the incoming light source for which the random numbers are both composably certified and extracted in real-time. To our knowledge, this is the fastest composably secure QRNG (including device-dependent implementations) ever reported.}

\section{Discussion \label{discussion}}
We now return to the desiderata previously outlined for evaluating the usefulness of a QRNG device, namely, level of security, performance (achievable bit rate) and practicality (ease of implementation, durability, and cost). Our protocol used cheap and robust off-the-shelf components that lend themselves to prolonged, high-speed usage and would be amenable to miniaturisation in an integrated photonic architecture. \nw{Utilising an FPGA, we were able to implement the necessary hashing operations in real-time by using the pipeline algorithm of \cite{zhang2016fpga} detailed in Appendix \ref{sec:FPGA_details}. Moreoever, we hashed relatively large blocks which allowed us to extract random numbers at close to the optimal possible rate given the randomness source.}   

% of stream from multiple randomness sources. This is a feature that can be used in order to provide certified quantum randomness at the required speed, which allows using such quantum random generator in various applications.

%Whilst real-time post-processing was not implemented in this work, it has already been shown that field-programmable gate array (FPGA) technology is already sufficiently advanced to carry out the necessary hashing at speeds in the Gbps range \cite{symul2011real,zhang2016note} and would therefore not lead to a reduction in the final rate of random numbers. 

Another consideration when developing a protocol for certified randomness is whether such a protocol is composably secure \cite{renner2008security,portmann2014cryptographic}. That is, whether the output of the protocol can then be used as an input to other cryptographic protocols without compromising the security. For example, it can be input to a randomness extractor along with a seed to achieve certified randomness expansion using well known techniques \cite{frauchiger2013true,tomamichel2011leftover}. \nw{Very few implementations enjoy such composable security proofs in either the device-dependent \cite{mitchell2015strong,Haw:2015kx,Gehring:2018wc} or partially device-independent case \cite{cao2016source}. Whilst there is a device-independent result that produces random strings that may be composed \cite{liu2018device},} it is still unknown whether fully device-independent protocols are secure under composition of devices without extra assumptions, e.g. devices are memoryless \cite{barrett2013memory}. It is thus necessary for the moment to move beyond device independence if one desires a fully composably secure protocol. 

In terms of security and performance, our work considers completely general quantum attacks and achieves significantly higher bit rates for a given security parameter than the fastest known source- ($5\,$kb/s in \cite{cao2016source}), measurement- ($5.7\,$kb/s in \cite{nie2016experimental}), semi- ($16.5\,$Mb/s in \cite{brask2017megahertz}) or fully device-independent protocols ($180\,$b/s in \cite{liu2018device}). The only directly comparable work which offers a source-independent composable security proof is \cite{cao2016source}, whose randomness generation rate we improve upon by more than 6 orders of magnitude. \nw{In fact, our work achieves the highest composably secure bit rate for any level of device assumptions, including the fastest device-dependent implementations \cite{Gehring:2018wc}.} %Note that, given our experimental architecture, other works achieving similar rates exist but only at the price of restricting the generality of the security proofs. 
%\textcolor{green}{\textbf{WRITE AROUND TABLE II. ADD ENOB CRITICISM. MULTIPLEXING FPGA for faster speed.}}

The experimental architectures most similar to ours are a recent series of papers that involve homodyning the vacuum \cite{marangon2017source}, or squeezed state \cite{michel2019real}, or dual-homodyning the vacuum \cite{avesani2018source} and were claimed to be SDI. Indeed, these works also achieve impressive rates as high as $17\,$Gb/s. To derive a SDI proof, these works apply entropic uncertainty relations \cite{furrer2014position,furrer2014reverse} that can, in principle, lead to devices for which randomness can be certified even if the source of quantum states is completely unknown, provided the measurements acting on these states are well-characterised. However, for realistic homodyne detectors with finite range, the corresponding uncertainty relation becomes trivial and no randomness can be certified \cite{furrer2014position}. Even in the case of infinite range detectors, the modelling of a photon difference as a quadrature measurement is only valid in the case where the input photon is small with respect to the local oscillator. This problem can be ameliorated but only at the price of introducing an energy assumption (similar to the semi-device-independent approach) upon the source, thus jeopardising the claimed source independence. 

\nw{A final technical point is that, although the importance of considering digitisation noise via the ENOB of the ADC has been pointed out previously \cite{zhang2016fpga, marangon2017source}, many experiments fail to take this into account. This key consideration has the effect of reducing the retrievable min-entropy per sample, thereby considerably lowering the bit rates reported in the vast majority of the corresponding literature. A comparison of the security, assumptions and performance of a selection of other works compared to ours can be found in Table \ref{tab:comparison_works}.}

\begin{table}[h!]

\nw{\begin{tabular}{|c|c|c|c|c|c|}
	\hline\xrowht[()]{10pt}
Work & $\begin{array}{c} \textrm{Trust} \\ \textrm{level} \end{array}$ & $\begin{array}{c} \textrm{Use of} \\ \textrm{ENOB} \end{array}$ & $\epsilon$  & $\begin{array}{c} \textrm{QRG} \\ \textrm{bit rate }\\
\textrm{[Mb/s]}\end{array}$ & $\begin{array}{c} \textrm{QRNG} \\ \textrm{bit rate}\\
\textrm{[Mb/s]}\end{array}$ \\
  \hline
  \hline\xrowht[()]{10pt}
\cite{Gehring:2018wc} & DD & No &$10^{-10}$& 10740 & 8000 \\
\hline\xrowht[()]{10pt}
\cite{marangon2017source} & sSDI & Yes & -- & 1700 & -- \\
\hline\xrowht[()]{10pt}
\cite{michel2019real} & sSDI & No & -- & 0.0082 & -- \\
\hline\xrowht[()]{10pt}
\cite{brask2017megahertz} & sDI & N/A & -- & 16.5 & --\\
\hline\xrowht[()]{10pt}
\cite{cao2016source} &  SI & N/A & $ 10^{-15}$& 0.005 & -- \\
  \hline\xrowht[()]{10pt}
  \cite{liu2018device} &  DI & N/A & $ {}^{*}10^{-5}$& 0.000181 &--\\
%  \hline\xrowht[()]{10pt}
%$\begin{array}{c} \textrm{This} \\ \textrm{work} \end{array}$ & SDI & Yes %& $10^{-20}$ & 8184 & 8097 \\
%  \hline\xrowht[()]{10pt}
%$\begin{array}{c} \textrm{This} \\ \textrm{work} \end{array}$ & SDI & Yes & %$10^{-10}$ & 8202 & 8161 \\
  \hline\xrowht[()]{10pt}
$\begin{array}{c} \textrm{This} \\ \textrm{work} \end{array}$ & SDI & Yes & $10^{-10}$ & 8211 & 8050 \\
\hline
\end{tabular}
\caption{Comparison of randomness generation protocols. DD: device-dependent; sSDI: semi-source device independent; DI: device independent; SDI: source-device independent. ${}^{*}$Not proven secure under composition of devices. \label{tab:comparison_works}}}
\end{table}

Finally, we turn to a quantitative comparison between this work and earlier protocols based on homodyne detection in the device-dependent \cite{Haw:2015kx,Gehring:2018wc} and semi-SDI contexts \cite{marangon2017source,avesani2018source,michel2019real}. Strictly speaking, direct comparison with the semi-SDI protocols is impossible since these fail to give a composable security parameter. \nw{Also, in practice the achievable rates depend heavily on many technical constraints such as the detector noise and especially the effective number of ADC bits.} In Fig.~\ref{fig:comparison_our_work_vs_homodyne}, we consider a simpler calculation of the min-entropy generated in a single round using ideal equipment to compare the ultimate rates of these different protocols. \nw{The security parameter for the displayed SDI curves is chosen to be $\epsilon_{\mathrm{fail}} = 10^{-10}$ with the honest passing probability chosen as $1-\epsilon_C = 0.995$. For the EUR protocol, the probability of making a randomness generating measurement was set to be $p_X = 0.9$ and the photon number of the local oscillator used in the homodyne detection was $n_{\mathrm{LO}} = 10^7$. Details of the calculations are give in Appendix \ref{comp}.}

\begin{figure}[h!]
\includegraphics[width=\linewidth]{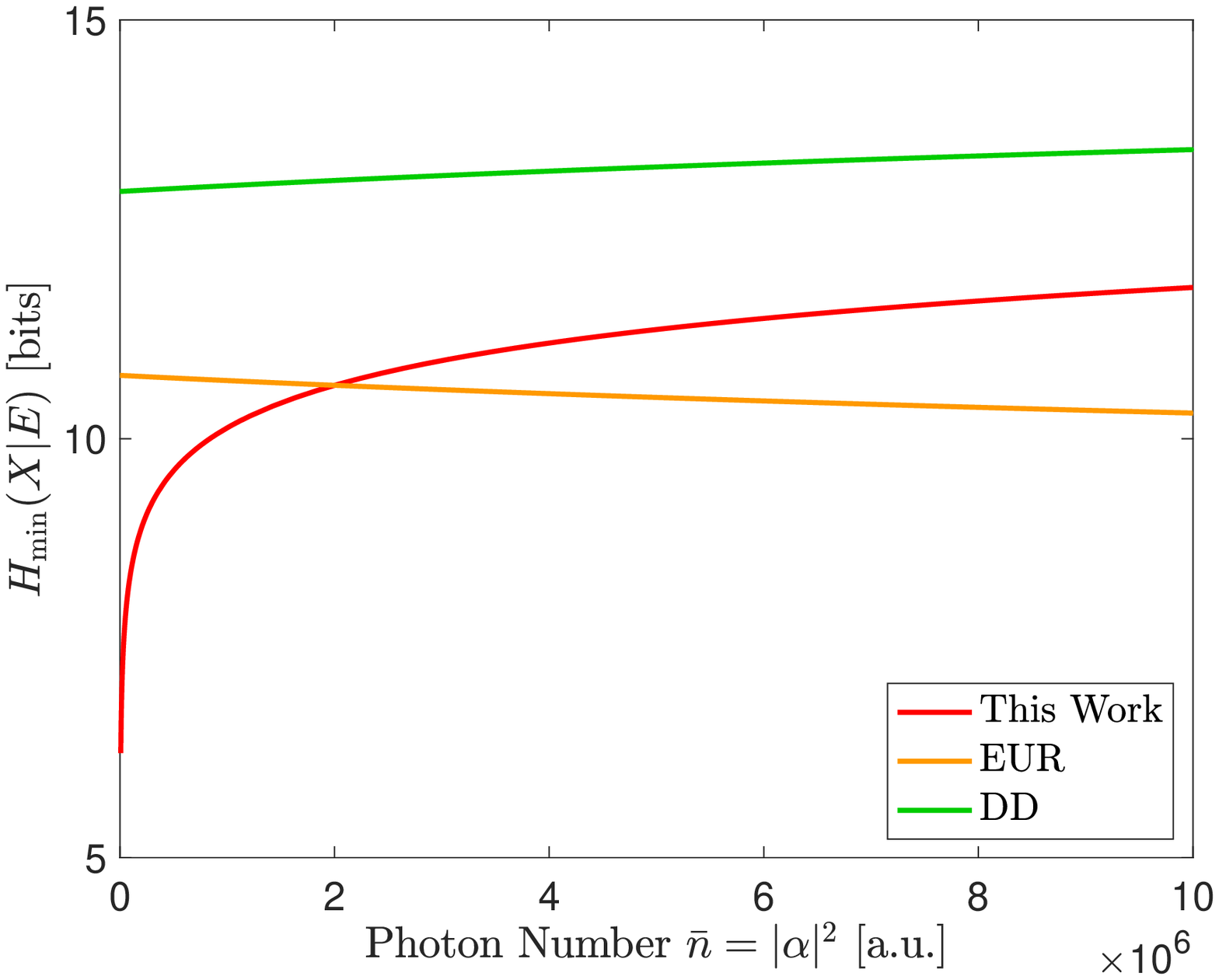}
\includegraphics[width=\linewidth]{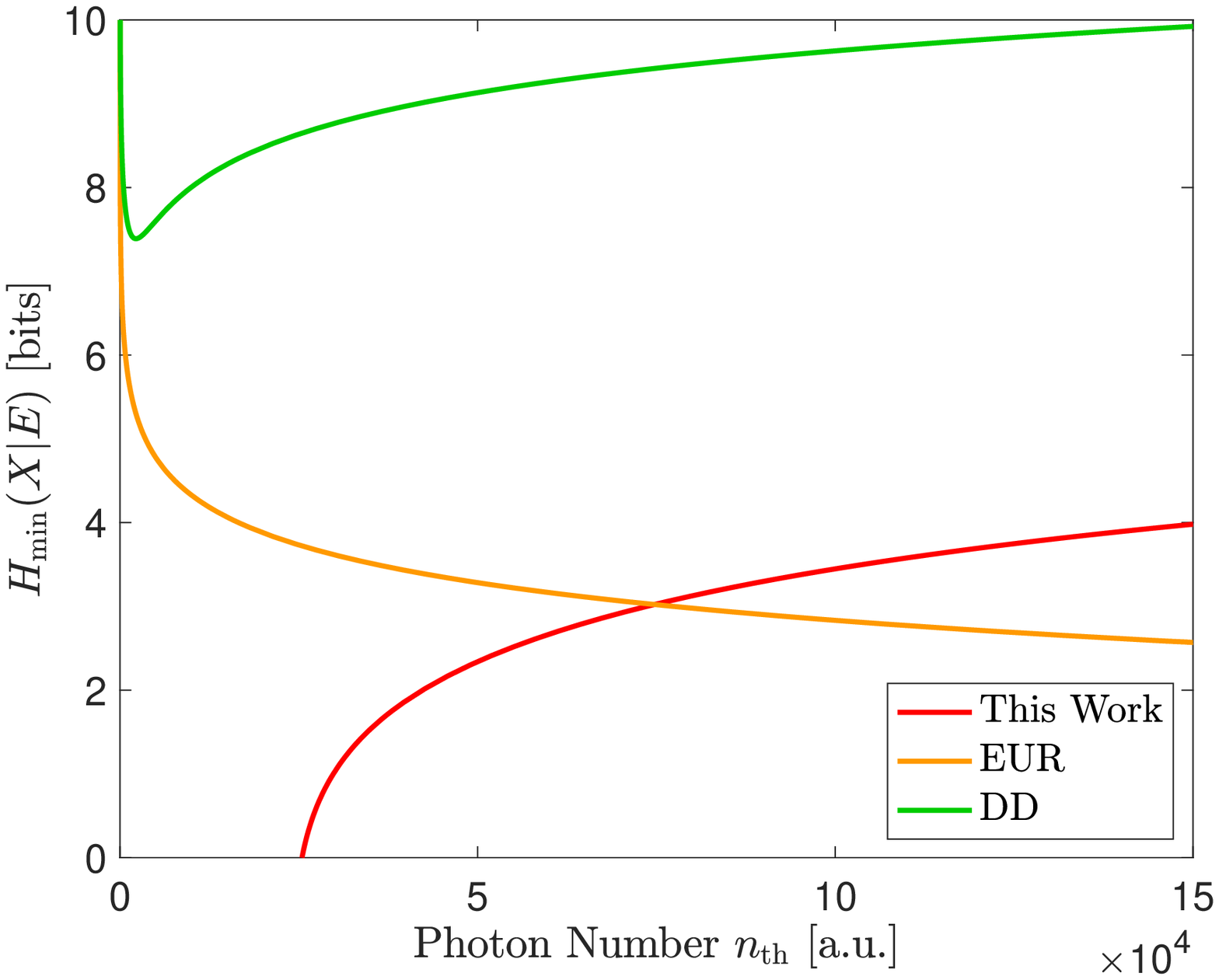}
\caption{Comparison between the min-entropy per round for the present SDI protocol, those based on homodyning the vacuum state using an entropic uncertainty relation (EUR) and device-dependent (DD) homodyning for different inputs states as a function of mean photon number. Top: coherent state; bottom: thermal state.}
\label{fig:comparison_our_work_vs_homodyne}
\end{figure}

For certain input states we identify fundamentally different scalings in some instances. Although we actually consider upper bounds on the rates for the device-dependent and semi-SDI schemes, thereby penalising this work by comparison, we see dramatically different scalings between this work and the semi-SDI homodyne scheme. As can be observed in Fig.~\ref{fig:comparison_our_work_vs_homodyne}, if the input state is one half an entangled two-mode squeezed vacuum state (i.e. a thermal state) or a coherent state, then the randomness of homodyne protocols decreases as function of the photon number of the input state whereas the randomness of the present protocol monotonically increases. For sufficiently large photon numbers, this work scales identically to the device-dependent case, thereby achieving significantly improved security with only a constant factor reduction in performance. \nw{Moreover, it should be noticed that for an input coherent state, the photon number from which this work's generated min-entropy surpasses that obtained from the EUR protocol is relatively small (i.e. $\bar{n}=|\alpha|^{2}\approx 2\times10^{6}$). This crossing point and the ensuing advantageous scaling make this work even more desirable from a realisation point of view since it occurs for a coherent state, the most practical and hence widely utilised state in experimental quantum optics.} Overall, these key considerations highlight the fundamental quantitative differences in between this work and traditional homodyne based protocols.

\section{Conclusion}
In summary, we presented and experimentally implemented a SDI protocol based on the quantum nature of untrusted light. \nw{Our QRNG achieves both state-of-the-art ultrafast randomness generation and real-time random number extraction with a bit rate of $R_{d}=8.05\,$Gb/s whilst providing a rigorous and specific security parameter of $\epsilon = 10^{-10}$ for the generated random numbers with no assumptions on the light source. There are several avenues for improvement.} A higher bandwidth balanced detector for the randomness generation speed as well as a \nw{larger effective bit-resolution} of the ADC for the retrievable min-entropy per sample are primary examples among them. \nw{Lastly, the present configuration could be upgraded by connecting more randomness sources (say $\gamma > 1$ of such sources) to the same FPGA and carrying out parallel real-time post-processing. This would achieve an unparalleled average QRNG bit rate of $\gamma \times \EV{R}$ for the same level of composable security.}

\section{Acknowledgements}
This work was supported by funding from the UK Engineering and Physical Sciences Research Council (EPSRC) National Quantum Technology Hub in Networked Quantum Information Technologies (NQIT). NW acknowledges funding from the European Union's Horizon 2020 research and innovation programme under the Marie Sklodowska-Curie grant agreement No. 750905. A.K.F acknowledges RBFR grant No. 18-37-20033.

\newpage
\clearpage

\appendix
\onecolumngrid

\section{Certifiable randomness of ideal difference measurement} \label{ideal}
To begin with, consider the randomness generation measurement of Fig.~\ref{fig:idea}. It consists of a beam splitter BS$_{0}$ with reflectivity $r_{0}=\frac{1}{2}$, an input mode R, a trusted vacuum fed into the other input mode and two output photodetectors A and B performing a difference measurement. It simplifies matters greatly if we can prove that the potential eavesdropper in charge of our photonic source is making definite photon number states (i.e. Fock states) for each round of the protocol. In particular, we would like to rule out any sophisticated, collective strategy where Eve sends a complicated state that is entangled across all rounds of the protocol. 

Intuitively, this should be the case because the randomness generation measurement for each round is a photon number difference and can be thought of as a coarse graining over an initial measurement that is diagonal in the Fock basis. Here, this is shown by writing out the POVM directly and the optimality of unentangled Fock state inputs from Eve's perspective becomes explicit.  

For a single round, the entire process of mixing $\hat{\rho}_R$ with a vacuum ancilla $\ket{0} \in \mathcal{H}_V$ and then making Fock state projections upon both output ports can be seen as a POVM on $\mathcal{H}_R$, the Hilbert space of $\hat{\rho}_R$. Consider the probability for detecting $n_A$ and $n_B$ photons at detectors A and B. This is given by
\eqn{p(n_A,n_B) \ee \mathrm{tr} \left \{ \hat{U}_{BS_{0}} (\hat{\rho}_R \otimes \ket{0}\bra{0}) \hat{U}^{\dag}_{BS_{0}}(\ket{n_A}\ket{n_B} \bra{n_A}\bra{n_B})  \right \} \nn\\
\ee \mathrm{tr}_R \left \{ \mathrm{tr}_V \left \{  (\hat{\rho}_R \otimes \ket{0}\bra{0}) \hat{U}^{\dag}_{BS_{0}}(\ket{n_A}\ket{n_B} \bra{n_A}\bra{n_B}) \hat{U}_{BS_{0}} \right \} \right \} \nn\\
\ee \mathrm{tr}_R \left \{ \hat{\rho}_R \hat{M}(n_A,n_B) \right \} \,,}
where \eqn{\hat{M}(n_A,n_B) =  \bra{0}\hat{U}^{\dag}_{BS_{0}}\ket{n_A}\ket{n_B} \bra{n_A}\bra{n_B} \hat{U}_{BS_{0}}\ket{0}\label{M2}\,,}
is the corresponding POVM element in the input state Hilbert space (with the subscript R suppressed for brevity). This expression is just the evolution of the Fock state projections back through the beam splitter BS$_{0}$ and projected onto the vacuum ancilla. To get an explicit expression, it is simpler to switch to the Heisenberg picture for the reverse beam splitter transformation
\eqn{ \ket{n_A}\ket{n_B} \ee\frac{(\adag_A)^{n_A}}{\sqrt{n_A}!}\frac{(\adag_B)^{n_B}}{\sqrt{n_B}!}\ket{0} \nn \\
&\stackrel{U_{BS_{0}}^\dag}{\mapsto}& \frac{\bk{\frac{\adag_E + \adag_V}{\sqrt{2}}}^{n_A}}{\sqrt{n_A}!}\frac{\bk{\frac{\adag_E - \adag_V}{\sqrt{2}}}^{n_B}}{\sqrt{n_B}!}\ket{0} \nn\\
\ee \frac{\sum_{k=0}^{n_A} \sum_{j=0}^{n_B}(\adag_E)^{n_A-k}(\adag_V)^k \binom{n_A}{k}(-1)^j(\adag_E)^{n_B-j}(\adag_V)^j \binom{n_B}{j}}{2^{(n_A+n_B)/2}\sqrt{n_A!n_B!}}  \ket{0} \nn \\
\ee \frac{\sum_{k=0}^{n_A} \sum_{j=0}^{n_B}\sqrt{(n_A+n_B-j-k)!(j+k)!} \binom{n_A}{k}(-1)^j\binom{n_B}{j}}{2^{(n_A+n_B)/2}\sqrt{n_A!n_B!}}  \ket{n_A+n_B-j-k}_{R}\ket{j+k}_{V} \,.}

Acting on the left with $\bra{0}$ on the ancilla mode implies that we must have $j+k = j = k =0$, thus
\eqn{\bra{0}\hat{U}_{BS_{0}}^{\dag} \ket{n_A}\ket{n_B} \ee \frac{\sqrt{(n_A+n_B)!}}{2^{(n_A+n_B)/2}\sqrt{n_A!n_B!}} \ket{n_A+n_B}_{R} \,,}
and hence
\eqn{\hat{M}(n_A,n_B) \ee \frac{(n_A+n_B)!}{2^{(n_A+n_B)}n_A!n_B!} \ket{n_A+n_B}\bra{n_A+n_B}_{R} \nn\\
\ee 2^{-N}\frac{N!}{n_A!(N-n_A)!} \ket{N}\bra{N}_{R} \label{n1n2} \,,} 
where we have substituted in the total photon number $N := n_A + n_B$. As expected, each POVM element is proportional to a single Fock state of fixed photon number $N$ and the coefficient can be understood intuitively. Indeed, each of the $N$ photons can be thought of as individually randomising at the beam splitter. The probability for a specific sequence of paths taken by each photon is $2^{-N}$ and thus the probability of observing the POVM element $\hat{M}(n_A,n_B)$ is the number of paths such that $n_A$ out of $N$ photons could have been recorded at detector A, which is $\binom{N}{n_A}$ as above. 

If we consider the sum measurement, it is just a coarse graining over the two outcome POVM, summing together all the elements such that $n_A+n_B = N$. The POVM elements of the sum measurement $\mathbb{Z} = \{ \hat{Z}(N)\}$ are
\eqn{\hat{Z}(N) = \sum_{n_A=0}^N \hat{M}(n_A,N-n_A)\,.}

Using the fact that $\sum_{k=0}^n \binom{n}{k} = 2^n$, we can see that $\hat{Z}(N) = \ket{N}\bra{N}_{R}$ and it is thus just a photon number projector as expected.

The randomness generation measurement is another coarse graining. However, it will turn out to have larger rank and consequently some randomness for all possible input states other than the vacuum. Define $\mathbb{X} = \{\hat{X}(x)\}$ as the POVM elements of the randomness generation measurement corresponding to the cases where $n_A-n_B := x$. These are given by
\eqn{\hat{X}(x) \ee \sum_{n_A = x}^\infty \hat{M}(n_A,n_A-x) \nn \\
\ee \sum_{n_A = x}^\infty 2^{-(2n_A-x)}\binom{2n_A-x}{n_A} \ket{2n_A-x}\bra{2n_A-x}_{R}\,,}
if $x$ is positive and
\eqn{\hat{X}(x) \ee \sum_{n_A = |x|}^\infty \hat{M}(n_A-|x|,n_A) \nn \\
\ee \sum_{n_A = |x|}^\infty 2^{-(2n_A-|x|)}\binom{2n_A-|x|}{n_A} \ket{2n_A-|x|}\bra{2n_A-|x|}_{R}\,,}
if $x$ is negative or 
\eqn{\hat{X}(x) \ee \sum_{n_A = |x|}^\infty 2^{-(2n_A-|x|)}\binom{2n_A-|x|}{n_A} \ket{2n_A-|x|}\bra{2n_A-|x|}_{R}\label{x}\,,}
for all $x$. 

Note that for $x$ even (odd), then $\hat{X}(x)$ only has support over even (odd) number states. Clearly, if Eve inputs a vacuum state, then the difference outcome can be predicted with certainty as $x=0$. However, as pointed out in the main text, if Alice observes a value $N$ for her sum measurement, then regardless of the original input, she performs a projection onto the state $\ket{N}$ and can immediately calculate the guessing probability of the $\mathbb{X}$ measurement $p_{\mathrm{guess}} =\max_x \bra{N}\hat{X}(x)\ket{N}$ from Eq.~(\ref{x}) and hence the associated min-entropy. For perfect measurements, this would guarantee the min-entropy with certainty and in a SDI manner.

Now, consider the full setup shown in Fig.~\ref{fig:idea}. We introduce the certification measurement in mode C which is done by tapping off a fraction of the completely unknown incoming light in mode E with a beam splitter BS$_{1}$ of reflectivity $r_{1}$. The input state $\hat{\rho}_{E}$ is mixed with vacuum on BS$_{1}$ and the reflected beam in mode C is measured at detector C while the transmitted beam in mode R is input to the randomness generation measurement. For simplicity, we will imagine that the outcome at detector $C$ is also always given to Eve. Writing the photon number projections as operators on the input Hilbert space $\mathcal{H}_{E}$ is the same calculation as Eq.~(\ref{n1n2}), except now with a beam splitter of reflectivity $r_{1}$ instead of $\frac{1}{2}$. This gives 
\eqn{\hat{M}(n_C,n_R) \ee \frac{r_{1}^{n_C}(1-r_{1})^{n_R}(n_C+n_R)!}{n_C!n_R!}\ket{n_C+n_R}\bra{n_C+n_R}_{E} \label{cert2} \,,}
and hence the certification measurement has elements
\eqn{\hat{N}_C(n_C) \ee \sum_{n_R=0}^\infty \frac{r_{1}^{n_C}(1-r_{1})^{n_R}(n_C+n_R)!}{n_C!n_R!}\ket{n_C+n_R}\bra{n_C+n_R}_{E} \label{Ncert}\,.}

Given this measurement, one cannot exactly determine the number of photons in mode R incident onto the randomising beam splitter BS$_{0}$, but one can obtain a lower bound on the min-entropy of $m$ such measurements except with some failure probability $\epsilon_{\mathrm{fail,m}}$. Specifically, we impose a test $\mathcal{P}$ at detector C which is passed if the measured photon number is greater than a lower threshold $n_{C}^{-}$. 

Upon passing the test $\mathcal{P}$, we certify a lower bound $n_{R}^{-}$ on the photon number in mode R impinging onto the randomness generation measurement. We formally state and prove this result below.

\begin{Theorem} \label{rgendideal} 
An optical setup consisting of 
\begin{itemize}
\item Two trusted vacuum modes
\item Two beam splitters of reflectivity $r_{0}=\frac{1}{2}$ and $r_{1}$
\item Three ideal photon counting detectors A, B and C
\end{itemize}

utilised to perform a certification measurement modelled by Eq.~(\ref{Ncert}) with lower threshold $n_{C}^{-}$ and a randomness generation measurement modelled by Eq.~(\ref{x}) can be used as a certified (m,$\kappa$,$\epsilon_{\mathrm{fail,m}}$,$\epsilon_c$)-randomness generation protocol as per Definition \ref{QRGdef} without making any assumptions about the photonic source with
\eqn{\kappa &\geq& - m\log_2 \left ( 2^{-n_{R}^{-}}\binom{n_{R}^{-}}{ \left \lfloor \frac{n_{R}^{-}}{2} \right \rfloor} \right ) \nn \\
    &\geq& m\bk{\frac{1}{2} \log_2\bk{\half \pi n_{R}^{-}} - \mathcal{O}\bk{\frac{1}{n_{R}^{-}}}}\label{hminthmideal} \,,}

\eqn{ \epsilon_{\mathrm{fail,m}} &\leq& m\exp\bk{-\frac{2 (r_{1}  (n_{R}^{-}+n_{C}^{-}-1)-n_{C}^{-})^2}{n_{R}^{-}+n_{C}^{-}-1}} \label{esec} \,,} 
and \eqn{\epsilon_c = 1 - e^{-|\alpha|^2}\sum_{n=0}^{\infty}\sum_{n_C=n_{C}^{-}}^\infty \frac{|\alpha|^{2n}}{n!}\frac{r_{1}^{n_C}(1-r_{1})^{n-n_C}n!}{n_C!(n-n_C)!}   \label{completeA} \,,} 
using a coherent state $\ket{\alpha}$ as an input.
\end{Theorem}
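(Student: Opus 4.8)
The plan is to exploit the fact that every measurement operator in this setup is diagonal in the photon number basis, which collapses Eve's optimisation to a search over Fock state inputs. First I would note that the randomness generation POVM $\hat{X}(x)$ of Eq.~(\ref{x}), the certification POVM $\hat{N}_C(n_C)$ of Eq.~(\ref{Ncert}), and hence the tensor product describing $m$ rounds are all diagonal in the Fock basis. Therefore any coherent attack (a global, possibly entangled input across all rounds) can be dephased in the photon number basis without changing the guessing or failure probabilities, so both quantities are maximised by a product of pure Fock states. This reduces the entire analysis to single-round quantities evaluated on $\ket{n_E}$, which I would then handle in two parts: a min-entropy bound and a failure probability.

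For the min-entropy, I would compute the single-round guessing probability for a Fock input $\ket{N}$ directly from Eq.~(\ref{x}), obtaining $\bra{N}\hat{X}(x)\ket{N} = 2^{-N}\binom{N}{(N+|x|)/2}$, a binomial distribution in $x$. Its maximum over $x$ sits at $x=0$ or $x=\pm1$ according to the parity of $N$, giving the central binomial coefficient $p_{\mathrm{guess}}(N) = 2^{-N}\binom{N}{\lfloor N/2\rfloor}$. I would then establish that this sequence is monotonically decreasing in $N$ (via the ratio of consecutive central terms), so that over the certified support $n_R \geq n_R^-$ the worst case is the input $\ket{n_R^-}$, yielding the single-round min-entropy $-\log_2\!\big(2^{-n_R^-}\binom{n_R^-}{\lfloor n_R^-/2\rfloor}\big)$. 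Additivity over $m$ independent Fock inputs supplies the factor $m$ in Eq.~(\ref{hminthmideal}), and a Stirling estimate $2^{-N}\binom{N}{\lfloor N/2\rfloor}\sim\sqrt{2/(\pi N)}$ produces the asymptotic form $\tfrac{1}{2}\log_2(\tfrac{1}{2}\pi n_R^-) - \mathcal{O}(1/n_R^-)$.

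For the failure probability I would define the failure operator as in Eq.~(\ref{fail}) specialised to the ideal case, encoding the event that the test passes ($n_C \geq n_C^-$) while $n_R < n_R^-$. Photon number conservation at BS$_1$ means a Fock input $\ket{n_E}$ splits as $n_C \sim \mathrm{Binomial}(n_E, r_1)$, so the single-round failure probability is the binomial tail $\sum_{n_C = \max\{n_C^-,\, n_E - n_R^- + 1\}}^{n_E}\binom{n_E}{n_C}r_1^{n_C}(1-r_1)^{n_E - n_C}$. The varying lower limit reflects two competing constraints on $n_C$: enough photons must reach C to pass, yet fewer than $n_R^-$ may reach R. I would argue from the shape of the binomial coefficients that this tail is maximised precisely where the two constraints coincide, at $n_E^{\mathrm{opt}} = n_C^- + n_R^- - 1$. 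Substituting this value and applying Hoeffding's inequality to the upper tail then yields the exponent in Eq.~(\ref{esec}), while the union bound $\epsilon_{\mathrm{fail,m}} \leq m\epsilon_{\mathrm{fail}}$ of Eq.~(\ref{proof1}) supplies the factor $m$.

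The completeness claim is the most routine step: for an honest coherent input $\ket{\alpha}$ I would expand in the Fock basis and evaluate $\mathrm{tr}\{\ket{\alpha}\bra{\alpha}\sum_{n_C \geq n_C^-}\hat{N}_C(n_C)\}$ using Eq.~(\ref{Ncert}), which directly gives the double sum of Eq.~(\ref{completeA}) for $1-\epsilon_c$. The main obstacle I anticipate is the failure probability optimisation, namely rigorously justifying that $n_E^{\mathrm{opt}} = n_C^- + n_R^- - 1$ maximises the binomial tail: for larger $n_E$ the lower summation limit is pushed upward and suppresses the tail, whereas for smaller $n_E$ passing the test becomes improbable, and one must verify that the crossover of these two effects is a genuine maximum rather than merely a stationary point.
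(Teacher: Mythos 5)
Your proposal is correct and follows essentially the same route as the paper's proof: a dephasing argument reducing coherent attacks to product Fock-state inputs, the central binomial coefficient bound with monotonicity in $N$ for the min-entropy, the binomial-tail failure probability optimised at $n_E^{\mathrm{opt}} = n_C^- + n_R^- - 1$ followed by Hoeffding's inequality, the union bound over $m$ rounds, and direct evaluation on a coherent state for completeness. The one step you defer --- rigorously showing that $n_E^{\mathrm{opt}}$ is the genuine maximiser --- is exactly where the paper supplies an explicit computation of $\epsilon_{\mathrm{fail}}(n_E+1)-\epsilon_{\mathrm{fail}}(n_E)$ via Pascal's identity and a hypergeometric summation, establishing that the tail is monotonically increasing below the threshold and monotonically decreasing above it.
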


\begin{proof}

{\bf Security:}
The key feature here is the diagonal nature in the photon number basis of all measurements performed in the protocol. We first prove a Lemma regarding such measurements.

\begin{Lemma} \label{fockopt}
For a $m$-round, SDI protocol involving a measurement $\mathbb{Q}$ in each round that is diagonal in the number basis with elements
\eqn{\hat{Q}(q) = \sum_{n} c_n(q) \ket{n}\bra{n}, \hspace{2mm}\sum_q \hat{Q}(q) = \mathbb{I}\label{diagmeas}\,,} 

Eve's optimal strategy to maximise the probability of a desired outcome $q^*$ is to input a pure Fock state $\ket{n^*}$ for each round. Moreover, this remains true for inputs with restricted support in the Fock basis. 
\end{Lemma}

\begin{proof}

One way to see this is to consider a diagonalising map in the Fock basis applied to the input of the $i^{th}$ round
\eqn{\hat{\mathcal{D}}_i(\hat{\rho}) = \sum_n \bra{n}\hat{\rho}\ket{n} \ket{n}\bra{n}\,.}

This operator commutes with the $\mathbb{Q}$ measurement and there is no operational way for Eve (or anyone else) to distinguish between directly measuring $\mathbb{Q}$ or measuring $\mathbb{Q}$ after first applying $\hat{\mathcal{D}}$. As such, we could imagine that we are in fact always applying $\hat{\mathcal{D}}$ to each run of the protocol\footnote{That is, the probabilities for any string of measurement outcomes $X^m = [x_1,x_2,..,x_m]$ satisfy $p(X^m) = \mathrm{tr}\{ \hat{\rho}^m_{AE} \otimes_{\nu=1}^m \hat{X}(x_\nu)\} = \mathrm{tr}\{ \hat{\sigma}_{AE}^m\otimes_{\nu=1}^m \hat{X}(x_\nu) \}$ where $\hat{\sigma}^m_{AE} = \otimes_{\nu=1}^m \hat{\sigma}_{\nu}$ with $\hat{\sigma}_\nu = \hat{\mathcal{D}}(\mathrm{tr}_{\bar{\nu}}\{ \hat{\rho}^m_{AE}\})$. Note that $\mathrm{tr}_{\bar{\nu}}$ denotes the trace over all modes except the $\nu^{\mathrm{th}}$ mode.}. To start with, since $\hat{\mathcal{D}}$ satisfies the definition of an entanglement breaking map \cite{horodecki2003entanglement}, we may safely conclude that Eve's optimal strategy will not include any entanglement as there is no way for such entanglement to be noticeable. Moreover, if we consider any individual round of the protocol, we can write its purification as a mode $E'$ held by Eve (including potentially all the other rounds of the protocol) in the Schmidt form $\ket{\Psi_{E'E}} = \sum_j \lambda_j \ket{j}_{E'}\ket{j}_E$ (with $\ket{j}$ not necessarily the Fock basis) and act $\hat{\mathcal{D}}$ upon it. This yields
\eqn{(\mathbb{I}\otimes\hat{\mathcal{D}}) \ket{\Psi_{E'E}}\bra{\Psi_{E'E}} \ee \sum_{j,k} \lambda_j\lambda_k^* \ket{j}\bra{k} \hat{\mathcal{D}}(\ket{j}\bra{k}) \nn \\
\ee \sum_n \hat{\sigma}_{E'_n} \otimes \ket{n}\bra{n}\,,}
where $\hat{\sigma}_{E'_n} = \sum_{j,l,n} \lambda_l\lambda^*_j \braket{n}{l}\braket{j}{n}\ket{l}\bra{j}$. This means that the most general state Eve can effectively prepare for the input mode E is of the form 
\eqn{\hat{\rho}_E = \sum_n p(n) \ket{n}\bra{n}\,,} 
where $p(n) = \sum_j |\lambda_j\braket{n}{j}|^2$. In other words, the input state for each run of the protocol is effectively just a mixture of Fock states (potentially classically correlated between rounds). Intuitively, one would imagine that the best strategy for Eve would be to choose a state such that $\{\ket{j}\}$ is indeed the Fock basis and, moreover, to make $p(n)$ simply a delta function at some fixed $n$.

We can show this as follows. Let $p^*(n)$ be the distribution of the optimal input state that maximises the probability of $q^*$ and $\{c_n(q^*)\}$ be the Fock state coefficients for that element as given in Eq.~(\ref{diagmeas}). Then, Eve's optimal probability is given by 
\eqn{p_{\mathrm{guess}} \ee \mathrm{tr} \{\hat{\rho}_{E'E} (\mathbb{I}\otimes \hat{Q}(q^*)) \} \nn \\
\ee \sum_n p^*(n) c_n  \leq \max_n c_n \times \sum_n p^*(n) =  c_{n^*}\,,}
where we have defined $n^*$ as the value that achieves the maximum. This optimal guessing probability would be saturated by choosing an input state $\ket{n^*}$, therefore the optimal input state is indeed a pure Fock state. 

Note that the result extends straightforwardly to the case where the input state is restricted to have support only over a finite range of number states $[n_{R}^{-}, n_{R}^{+}]$. Let $p^*(n)$ be a probability distribution over $[n_{R}^{-}, n_{R}^{+}]$, $x^*$ be the value of the most likely POVM element of the difference measurement given that input state and $c_n$ be the Fock state coefficients for that element as given in Eq.~(\ref{x}). Then
\eqn{p_{\mathrm{guess}} \ee \mathrm{tr} \{\hat{\rho}_{E'E} (\mathbb{I}\otimes \hat{X}(x^*)) \} \nn \\
\ee \sum_{n_{R}^{-}}^{n_{R}^{+}} p^*(n) c_n  \leq \max_{n\in [n_{R}^{-}, n_{R}^{+}] } c_n \times \sum_n p^*(n) =  c_{n^*}\,.}

Therefore, the optimal input state is $\ket{n}$ with $n\in[n_{R}^{-}, n_{R}^{+}]$. This result can be independently applied to each run of the protocol (by including the other rounds in the purification, Eve has already been granted the option to utilise a sophisticated collective encoding), hence we can conclude that Eve's optimal probability to obtain a string of outcomes for all $m$ rounds is to choose a single Fock state for each round.

\end{proof}

Given Lemma \ref{fockopt}, we now lower bound the min-entropy under the assumption that Eve's input state only has support over number states in the range $[n_{R}^{-},\infty[$. Eve's guess for the difference measurement outcome will always be just the outcome of the most likely element of the difference element defined in Eq.~(\ref{x}). Thus, if we choose $x^*$ to be the most probable outcome of the difference measurement (whatever that might be), then we can immediately conclude that for input states restricted to have support only over the range $[n_{R}^{-},\infty[$, Eve's optimal strategy to maximise the occurrence of $x^*$ (and hence her guessing probability) will be to input a number state $\ket{n} \in [n_{R}^{-},\infty[$. In fact, it will be optimal to input the smallest number state $\ket{n_{R}^{-}}$. We have
\eqn{p_{\mathrm{guess}} \ee \max_{n} \bra{n}\hat{X}(x^*)\ket{n} \nn\\
&\leq& \max_{n \in [n_{R}^{-},\infty[} 2^{-n} { n \choose \left \lfloor \frac{n+|x^*|}{2} \right \rfloor}\nn\\ 
\ee \max_{n \in [n_{R}^{-},\infty[} 2^{-n} { n \choose \left \lfloor \frac{n}{2} \right \rfloor} \nn\\
\ee 2^{-n_{R}^{-}} { n_{R}^{-} \choose \left \lfloor \frac{n_{R}^{-}}{2} \right \rfloor} \,,}
where in the penultimate line, we used the fact that $n\choose k$ is maximal for $k = \left \lfloor \frac{n}{2} \right \rfloor $ and monotonically decreases for greater and smaller values of $k$, which means that the smallest allowed $x$ will be optimal. In the final line, we used that $2^{-n}\binom{n}{\left \lfloor \frac{n+x}{2} \right \rfloor}$ decreases monotonically in $n$. To see this, first note that for $n$ even $\left \lfloor \frac{n+1}{2} \right \rfloor  = \left \lfloor \frac{n}{2} \right \rfloor$ and for $n$ odd $\left \lfloor \frac{n+1}{2} \right \rfloor  = \left \lfloor \frac{n}{2} \right \rfloor +1$. Thus the ratio of successive terms is
\eqn{ \frac{2^{-(n+1)}{n+1\choose \left  \lfloor \frac{n+1}{2} \right \rfloor}}{2^{-n}{n\choose \left \lfloor \frac{n}{2} \right \rfloor}} \ee \half (n+1)\frac{\left \lfloor \frac{n}{2} \right \rfloor!}{\left \lfloor \frac{n+1}{2} \right \rfloor !} \frac{\bk{n-\left \lfloor \frac{n}{2} \right \rfloor}!}{\bk{n+1-\left \lfloor \frac{n+1}{2} \right \rfloor }!} \nn\\
\ee \begin{cases}
 \frac{ 1}{2}(n+1) \frac{\bk{n- \frac{n}{2} }!}{\bk{n+1- \frac{n}{2}  }!} =\half \frac{ (n+1)}{n+1- \frac{n}{2} } = \frac{n+1}{n+2} < 1 \,, &  n \hspace{2mm} \mathrm{even} \\
 \frac{ 1}{2}(n+1)  \frac{\lfloor\frac{n}{2}\rfloor !}{(\lfloor\frac{n}{2}\rfloor + 1)!} =  \half \frac{n+1}{\lfloor\frac{n}{2}\rfloor + 1 } = 1\,, &  n \hspace{2mm} \mathrm{odd}
 \end{cases} \,.}
 
Substituting this optimal guessing probability into the definition of the conditional min-entropy gives the expression in Eq.~(\ref{hminthmideal}). 
 
%Now, we show that provided that in each round the certification measurement outcome is above a certain threshold $n_{C}^{-}$, the input to the QRG is $\epsilon_{\mathrm{fail,m}}$-indistinguishable from a state with support only over $[n_{R}^{-},\infty[$. The worst case scenario would be that whenever Eve can distinguish the real state from one with restricted support, she learns the full measurement record. We can thus interpret this distinguishing probability as a lower bound to the failure probability for the whole protocol. 
Now, we show that provided that in each round the certification measurement outcome is above a certain threshold $n_{C}^{-}$, the input to the randomness generation measurement is $\epsilon_{\mathrm{fail,m}}$-indistinguishable from a state with support only over $[n_{R}^{-},\infty[$. The worst case scenario would be that whenever Eve can distinguish the real state from one with restricted support, she learns the full measurement record. We can thus interpret this distinguishing probability as a lower bound to the failure probability for the whole protocol.

Specifically, we are interested in the probability where the certification measurement takes a value which passes our test $\mathcal{P}$ whilst simultaneously a smaller than desired number of photons goes to the randomness generation measurement, thereby representing a failure of the protocol. As such, we introduce a failure operator corresponding to there being $n_{R}^{-}$ or fewer photons in mode R given $n_{C}$ photons in mode C expressed as
\eqn{\hat{F}(n_C,n_{R}^{-}) \ee \sum_{n_R=0}^{n_{R}^{-}} \frac{r_{1}^{n_C}(1-r_{1})^{n_R}(n_C+n_R)!}{n_C!n_R!}\ket{n_C+n_R}\bra{n_C+n_R}_{E} \label{certmeas}\,.}

The failure probability for Eve successfully cheating the test in a single round is then given by
\eqn{\epsilon_{\mathrm{fail}} = \max_{\hat{\rho}_E} \mathrm{tr} \left \{\hat{\rho}_E \sum_{n_C = n_{C}^{-}}^{\infty} \hat{F}(n_C,n_{R}^{-}) \right \} \label{e1failideal} \,.}

%It is straightforward to see (and we show it in Appendix \ref{details}) that this probability is also explicitly the distinguishing probability between the real input state $\hat{\rho}_E$ and the closest state with support solely in the range $[n_{R}^{-},\infty[$ as one would expect in a composably secure framework. Since $\hat{F}$ is once more diagonal in the photon number basis, we can again apply Lemma \ref{fockopt} to conclude that Eve's optimal strategy is achieved by a single number state $\ket{n_E}$. Substitution via Eq.~(\ref{certmeas}) gives
It is straightforward to see (and we show it in Appendix \ref{details}) that this probability is also explicitly the probability of passing the test, multiplied by the distinguishing probability between the real input to the randomness measurement, $\hat{\rho}_R$, and the closest state with support solely in the range $[n_{R}^{-},\infty[$ as one would expect in a composably secure framework. Since $\hat{F}$ is once more diagonal in the photon number basis, we can again apply Lemma \ref{fockopt} to conclude that Eve's optimal strategy is achieved by a single number state $\ket{n_E}$. Substitution via Eq.~(\ref{certmeas}) gives
\eqn{\epsilon_{\mathrm{fail}} &\leq& \max_{n_{E}} \sum_{\substack{n_C = \max \{ n_{C}^{-}, \\ n_{E}-(n_{R}^{-}-1) \}}}^{n_{E}}  \frac{r_{1}^{n_C}(1-r_{1})^{n_{E}-n_C}n_{E}!}{n_C!(n_{E}-n_C)!}  \label{e1ideal}\,.}

The lower limit on $n_C$ in the sum comes from the fact that for $n_{E}>n_{C}^{-} + n_{R}^{-} - 1$, the requirement for at least $n_{C}^{-}$ photons at detector C is superseded by the requirement that there be less than $n_{R}^{-}$ photons in mode R which implies $n_C> n_{E} - n_{R}^{-} $. In fact, we show that Eve's optimal input is to send precisely $n_{E}^{\mathrm{opt}}=n_{C}^{-} + n_{R}^{-} - 1$ photons. The summand is a generic binomial distribution
\eqn{\mathcal{B}(r_{1},n_{E},k) = \frac{r_{1}^{k}(1-r_{1})^{n_{E}-k}n_{E}!}{k!(n_{E}-k)!}\,,}
such that the failure probability in Eq.~(\ref{e1ideal}) can be seen as the complement of the binomial cumulative distribution function (CDF). For a fixed lower limit in the sum, the failure probability increases monotonically with $n_{E}$. However, once $n_{E}> n_{C}^{-} + n_{R}^{-} - 1$, the situation is more complicated because the limits of the sum change as well as the summand. Indeed, instead of running from $n_{C}^{-}$ to $n_{E}$, it will run from $n_{C}^{-}+1$ to $n_{E}+1$ as argued above. We now show that the difference between successive terms of the sum for all values $n_{E}$ larger than this threshold is negative and thus the function is monotonically decreasing in $n_{E}$. Hence, it reaches its maximum for $n_{E}^{\mathrm{opt}}=n_{C}^{-} + n_{R}^{-} - 1$. 

For $n_{E}= n_{C}^{-} + n_{R}^{-} - 1$,  we can write $\epsilon_{\mathrm{fail}}$ for the corresponding successive input Fock states as
\eqn{\epsilon_{\mathrm{fail}}(n_{E}+1) - \epsilon_{\mathrm{fail}}(n_{E})  \ee \sum_{n_C = n_{C}^{-}+1}^{n_E+1}  r_{1}^{n_C}(1-r_{1})^{n_E+1-n_C} \binom{n_E+1}{n_C} - \sum_{n_C = n_{C}^{-}}^{n_{E}} r_{1}^{n_C}(1-r_{1})^{n_E-n_C} \binom{n_E}{n_C} \nn \\
\ee \sum_{n_C = n_{C}^{-}+1}^{n_E}  r_{1}^{n_C}(1-r_{1})^{n_E-n_C} \bk{ (1-r_{1})\binom{n_E + 1}{n_C}- \binom{n_E}{n_C} } \nn \\
&+& r_{1}^{n_E+1} -  r_{1}^{n_{C}^{-}}(1-r_{1})^{n_{E}-n_{C}^{-}} \binom{n_E}{n_{C}^{-}} \nn \\
\ee \sum_{n_C = n_{C}^{-}+1}^{n_E}  r_{1}^{n_C}(1-r_{1})^{n_E-n_C} \bk{-r_{1} + \frac{n_{C}}{n_{E}-n_{C}+1}(1-r_{1})} \binom{n_E}{n_{C}} \nn \\
&+& r_{1}^{n_E+1} -  r_{1}^{n_{C}^{-}}(1-r_{1})^{n_{E}-n_{C}^{-}} \binom{n_E}{n_{C}^{-}} \label{ediff}
\,,}
where we used Pascal's identity $\dbinom{n-1}{k} + \dbinom{n-1}{k-1} = \dbinom{n}{k}$ and $\dbinom{n}{k-1}=\dfrac{k}{n+1-k}\dbinom{n}{k}$ in the last line.

Using the following result
\eqn{\sum_{n_C=n_{C}^{-}}^{n_{E}} \binom{n_{E}}{n_C} = \binom{n_{E}}{n_{C}^{-}} \, _2F_1(1,n_{C}^{-}-n_{E};n_{C}^{-}+1;-1) \,,}
where $_2F_1$ is the hypergeometric function, it can be shown after some algebra that Eq.~(\ref{ediff}) simply reduces to 
\eqn{\epsilon_{\mathrm{fail}}(n_E+1) - \epsilon_{\mathrm{fail}}(n_E) &\leq& -(1-r_{1})^{n_{E}-n_{C}^{-}+1} r_{1}^{n_{C}^{-}} \binom{n_E}{n_{C}^{-}}\,,}
which is always negative for any $n_{C}^{-}$. Moreover, Eve adding extra photons will always result in deleting the lowest term in the summation in Eq.~(\ref{e1ideal}) so that the failure probability monotonically decreases for all $n_{E}\geq n_{C}^{-} + n_{R}^{-} - 1$. Thus, the optimal value for Eve to maximise the failure probability is the single Fock state with photon number $n_E^{\mathrm{opt}} = n_{C}^{-} + n_{R}^{-} - 1$. Substitution into Eq.~(\ref{e1ideal}) then gives
\eqn{\epsilon_{\mathrm{fail}} &\leq& \sum_{n_C = n_{C}^{-}}^{n_E^{\mathrm{opt}}}  r_{1}^{n_C}(1-r_{1})^{n_E^{\mathrm{opt}}-n_C} \binom{n_E^{\mathrm{opt}}}{n_C} \nn\\
&\leq& \exp\bk{-2\frac{(n_{C}^{-}-r_{1} n_E^{\mathrm{opt}})^2}{n_E^{\mathrm{opt}}}} \,,} 
where the last line is given by Hoeffding's inequality which states that for a binomial distribution $\mathcal{B}(r_{1},n_{E},k)$ with $n_{C}^{-}\geq n_{E} r_{1}$, one gets
\eqn{\sum_{k=n_{C}^{-}}^{n_{E}} \mathcal{B}(r_{1},n_{E},k) \leq \exp \bk{-2\frac{(n_{C}^{-}-r_{1} n_{E})^2}{n_{E}}}\,.}

Finally, the probability that any one of the $m$ rounds fails is the complement that all of them pass thus 
\eqn{\epsilon_{\mathrm{fail,m}} = 1-(1-\epsilon_{\mathrm{fail}})^m \leq 1-(1-m \epsilon_{\mathrm{fail}}) = m \epsilon_{\mathrm{fail}} \,,} 
which is precisely the result stated Eq.~(\ref{esec}), thereby completing the proof. 

{\bf Completeness:}
Substituting in the number state expansion for a coherent state $\ket{\alpha}$ and calculating the probability for the certification test to pass via Eq.~(\ref{certmeas}) gives the desired result expressed in Eq.~(\ref{completeA}).

\end{proof}

\section{Modelling Detectors} \label{detectors}
Here, we remove the idealised assumptions from the previous section and present a detailed detector model.

\subsection{Finite range of photodetectors}
As a first idealisation, we shall remove the assumption of infinite dynamic range for the photodiodes. In fact, the detectors only respond linearly above and below certain photon numbers thresholds, namely $n_{\mathrm{min}}$ and $n_{\mathrm{max}}$. In reality, as the detectors enter this nonlinear regime, there will still be quantum randomness in their outcome statistics, but we take the worst case view and assume that all states with overly large or small photon numbers will be mapped with certainty to ``end bins'', thereby yielding no such randomness. Thus, instead of a sum over all photon number states, we model a photodetection with $L:= n_{\mathrm{max}}-n_{\mathrm{min}}+1$ measurement operators given by
\eqn{\hat{N}(n_{\mathrm{min}}) \ee \sum_{n=0}^{n_{\mathrm{min}}} \ket{n}\bra{n}\nn \,,\\
 \hat{N}(n) \ee \ket{n}\bra{n}\,, \hspace{2mm} \forall  \hspace{2mm}n_{\mathrm{min}}<n< n_{\mathrm{max}} \nn \,,\\
\hat{N}(n_{\mathrm{max}}) \ee \sum_{n=n_{\mathrm{max}}}^{\infty} \ket{n}\bra{n} \label{finiterange} \,.}

This can make quite a difference to the output randomness since if Eve either inputs a sufficiently small or large number of photons, she can be sure that the lower or upper outcome will occur on detectors A and B, leading to a difference outcome of 0 with certainty. This can be seen directly by calculating the difference measurement POVM elements using finite range photodetectors as an operator in Eve's input Hilbert space as before to find
\eqn{\hat{X}_{\mathrm{fin}}(x) \ee \left \{ \begin{array}{cc} 
\sum\limits_{n_A = n_{\mathrm{min}} + |x|}^{n_{\mathrm{max}}} \hat{M}(n_A,n_A-|x|)\,,&  \hspace{1mm} x\geq0\\
\sum\limits_{n_A = n_{\mathrm{min}} + |x|}^{n_{\mathrm{max}}} \hat{M}(n_A-|x|,n_A)\,, &  \hspace{1mm} x<0
\end{array} \right.
\label{Xfin}\,,}
where
\eqn{\hat{M}(n_A,n_B) =  \bra{0}\hat{U}^{\dag}_{BS_{0}}\hat{N}(n_A)\otimes\hat{N}(n_B) \hat{U}_{BS_{0}}\ket{0}\label{Mreal}\,.}

For states with an appropriate photon number support, a difference measurement made using finite range photodetectors will be virtually indistinguishable from the ideal difference measurement in Eq.~(\ref{x}). Specifically, if a number state $\ket{n}$ is input to a difference measurement with two detectors A and B that have linearity ranges $[n_{\mathrm{min}},n_{\mathrm{max}}]$ such that $n_{\mathrm{min}}\ll n/2 \ll n_{\mathrm{max}}$, then the probability that either detector will register a number of photons outside its linear range will be given by the tails of a binomial distribution. % For our detectors which have $n_{\mathrm{min}}\sim10^5$, $n_{\mathrm{max}}\sim2.5\times 10^7$ and where we will be certifying photon numbers in a range $10^6<n<1.2\times 10^7$ in the actual experiment, we find that the distinguishing probability between the ideal and finite range difference measurements is
%\eqn{\epsilon \ee \mathrm{Pr}[n_A<n_{\mathrm{min}}] + \mathrm{Pr}[n_A>n_{\mathrm{max}}]\nn\\
%&+& \mathrm{Pr}[n_B<n_{\mathrm{min}}] + \mathrm{Pr}[n_B>n_{\mathrm{max}}] \nn\\
%&\leq& 2 \left (\exp \left(-\frac{2 \left(n_{\mathrm{max}}-\frac{n}{2}\right)^2}{n}\right) \right . \nn\\
%&+& \left. \exp \left(-\frac{2 \left(n_{\mathrm{min}}-\frac{n}{2}\right)^2}{n}\right) \right) \nn \\
%&\leq& 1\times 10^{-39087}\,,}
It can then be checked whether this probability is smaller than the other failure probabilities in the protocol (typical realistic values will render it far smaller, i.e. $\propto 1\times 10^{-30000}$). Alternatively, one can also directly empirically verify the linear response range $[n_{\mathrm{min}}^D,n_{\mathrm{max}}^D]$ of a difference measurement by inputting a known photonic laser source and observing that the difference variance indeed grows linearly when the laser's optical power is increased.

This finite range of the photodetection also applies to the certification measurement in mode C using a finite range detector with linear range $[n^C_{\mathrm{min}},n^C_{\mathrm{max}}]$ and $L_C = n^C_{\mathrm{max}}-n^C_{\mathrm{min}}+1$ possible outcomes. We have
%\eqn{\hat{N}_{C,\mathrm{fin}}(n_C) = \sum_{n_R=0}^\infty \frac{r_{1}^{n_C}(1-r_{1})^{n_R}(n_C+n_R)!}{n_C!n_R!}\hat{N}_C(n_C) \,.}
\eqn{\hat{N}_{C,\mathrm{fin}}(n^C_{\mathrm{min}}) \ee \sum_{n_C=0}^{n^C_{\mathrm{min}}} \hat{N}_C(n_C)\nn\,, \\
\hat{N}_{C,\mathrm{fin}}(n_C) \ee \hat{N}_C(n_C)\,, \hspace{2mm} \forall  \hspace{2mm}n^C_{\mathrm{min}}<n_C< n^C_{\mathrm{max}}\nn \,, \\
\hat{N}_{C,\mathrm{fin}}(n^C_{\mathrm{max}}) \ee 
\sum_{n_C=n^C_{\mathrm{max}}}^{\infty} \hat{N}_C(n_C)\,,} 
where $\hat{N}_C(n_C)$ is given in Eq.~(\ref{Ncert}).

Finally, we can also write the failure operator associated with this certification measurement. It will be similar to the ideal case in Eq.~(\ref{certmeas}) except for the end bins. The failure of the protocol occurs when the test is passed and there are either too many (more than $n_{R}^{+}$) or too few (less than $n_{R}^{-}$) photons incident onto the difference measurement. We obtain the following failure operator
\eqn{\hat{F}(n^C_{\mathrm{min}},n_{R}^{-},n_{R}^{+}) \ee \sum_{n_C =0}^{n^C_{\mathrm{min}}} \left (\sum_{n_R=0}^{n_{R}^{-}} \frac{r_{1}^{n_C}(1-r_{1})^{n_R}(n_C+n_R)!}{n_C!n_R!} \ket{n_C+n_R}\bra{n_C+n_R}_{E} \right .\nn\\
 &+& \left. \sum_{n_R=n_{R}^{+} +1}^{\infty} \frac{r_{1}^{n_C}(1-r_{1})^{n_R}(n_C+n_R)!}{n_C!n_R!} \ket{n_C+n_R}\bra{n_C+n_R}_{E} \right ) \nn \,,\\
\hat{F}(n^C_{\mathrm{max}},n_{R}^{-},n_{R}^{+})  \ee \sum_{n_C =n^C_{\mathrm{max}}}^{\infty} \left (\sum_{n_R=0}^{n_{R}^{-}} \frac{r_{1}^{n_C}(1-r_{1})^{n_R}(n_C+n_R)!}{n_C!n_R!} \ket{n_C+n_R}\bra{n_C+n_R}_E \right . \nn\\
&+& \left. \sum_{n_R=n_{R}^{+} +1}^\infty \frac{r_{1}^{n_C}(1-r_{1})^{n_R}(n_C+n_R)!}{n_C!n_R!} \ket{n_C+n_R}\bra{n_C+n_R}_E\right )\nn \,,\\
 \hat{F}(n_C,n_{R}^{-},n_{R}^{+}) \ee \sum_{n_R=0}^{n_{R}^{-}} \frac{r_{1}^{n_C}(1-r_{1})^{n_R}(n_C+n_R)!}{n_C!n_R!} \ket{n_C+n_R}\bra{n_C+n_R}_E\nn\\
  &+& \sum_{n_R=n_{R}^{+} +1}^{\infty} \frac{r_{1}^{n_C}(1-r_{1})^{n_R}(n_C+n_R)!}{n_C!n_R!} \ket{n_C+n_R}\bra{n_C+n_R}_E\nn \,,\\
&\forall& \hspace{2mm} n_{\mathrm{min}}^C < n_C < n_{\mathrm{max}}^C \label{certmeasfinite} \,.}

\nw{Parenthetically, we note that finite-range considerations expose a problem with the proposed solution to saturation attacks found in \cite{furrer2014reverse} within the context of continuous-variable QKD. There, the idea is to tap off a small amount of the incoming light and measure it via a dual-homodyne (heterodyne) detection, aborting the protocol if a sufficiently large value of the heterodyne measurement is observed. While this solves the problem in the limit of perfect, infinite-range detectors, for any realistic finite-range detectors, this procedure itself is vulnerable to a saturation attack. To see this, consider an individual homodyne detection of one of the two field quadratures: the incoming signal is mixed with a local oscillator and the difference between the two detectors' signals is taken. However, a sufficiently bright input signal would saturate each individual detector such that it outputs its maximum value, meaning that the difference measurement would result in a (typically small) constant value. Thus, in contrast to our certification measurement based upon a single detector, there is no guarantee that a bright input would result in a large measurement outcome, and therefore applying a threshold check to a heterodyne detection offers no protection against high energy attacks. This again highlights the importance of rigorously modelling the trusted devices in a cryptographic setup, as even small imperfections can completely alter the security of the protocol.}

\subsection{Voltage response and temporal behaviour}
The next step in our modelling is to take into account the fact that the detector response is not completely flat over the time window that makes up one round of the protocol. Instead, the voltage response decays exponentially in time. However, using careful spectral filtering, one can enforce an effectively flat temporal distribution for incoming photons. Considering this, we show that we can model the voltage response with a single average conversion factor $\alpha$.

In general, the detector response of a photodiode can be regarded as analogous to a RC circuit where the voltage at time $T$ is given by
\eqn{V(T) \ee  \frac{1}{C} \int_{0}^{\infty} e^{-\tau/RC} I(T-\tau) \,d\tau \,,} 
where $I(T-\tau)$ is the current generated by the absorbed photons. However, one cannot take the above equation too literally since a genuinely continuous time dependence would correspond to a detector with infinite temporal resolution. Instead, we model a voltage detector as having K finite time intervals $\delta_t = T/K$ over which the response is flat (i.e. the detector cannot resolve temporal differences smaller than $\delta_t$). The entire detection over the time window $T$ can then be regarded as post-processing of the $K$ outcomes arising from each of the detection intervals $\delta_t$. This resulting POVM has elements of the form
\eqn{\hat{M}({\bf n}) = \hat{N}(n_1)\otimes\hat{N}(n_2)...\otimes\hat{N}(n_K)\label{timeM} \,,}
where ${\bf n} = [n_1,n_2,...n_K]$.
The voltage response to a photon arriving at the $k^{\mathrm{th}}$ interval is given by a conversion factor 
\eqn{\alpha_k: = \beta e^{-(K-k)BW \delta_t} \label{conversion_time}\,,}
where $\beta$ is a constant. The voltage POVM is thus expressed as
\eqn{\hat{V}(v) = \sum_{{\bf n}} c_{n,k}(v)\hat{M}(\bm{n})\,,}
with
\eqn{c_{n,k}(v) \ee \delta(v - {\bf n}\bm{ \alpha}^T )\,,}
where $\bm {\alpha}^T = [\alpha_1,..,\alpha_k]^T$ and the sum is over all $L^K$ possible values for ${\bf n}$. 

In principle, this temporal detector response could open loopholes for Eve to exploit. For example, if she were able to generate extremely short time pulses, Eve could saturate individual detectors which would then be heavily damped in time (due to the exponential term in Eq.~(\ref{conversion_time})), resulting in a certification voltage that would appear acceptable even though there would be no randomness in this case. However, these temporal attacks can be circumvented via an appropriate choice of spectral filtering in the detection process. For transform-limited pulses, a sufficiently narrow spectral filter {\it enforces} an effectively flat temporal distribution for the detected photons. Since the source in our experiment is extremely narrowband (single frequency CW laser), we can afford to use a correspondingly narrow filter without altering the detection rates in our actual implementation. Note that a pulsed system which cannot afford to be similarly filtered without reducing the resulting count rates would require a careful analysis of the effects of Eve's temporal modulation of the source on the output statistics. This highlights the importance of considering \textit{all} relevant physical degrees of freedom in certified randomness generation. 

Considering our implementation, the voltage response of a detector to a photon arrival is given by a time averaged conversion factor 
\eqn{\alpha: = \frac{hc BW \eta G}{\lambda} \label{conversion_time_flat}\,,}
where $h$ is Planck's constant, $c$ is the speed of light, $BW$ is the detector's bandwidth, $\eta$ is its responsitivity (in $\si{\ampere\per\watt}$) at the wavelength $\lambda$ considered and $G$ is the transimpedence gain.

\subsection{Electronic Noise}
So far, all measurements have been described without the presence of detector noise. As outlined in the main text, our detector's noise $\lambda$ is well modelled as being Gaussian with variance $\sigma^2$. We want to write down the POVM describing a voltage measurement over an appropriate basis as parameterised by its outcome. Given that the noisy measurement is still phase insensitive, the POVM elements can be written diagonally in the Fock basis as
\eqn{\hat{V}^{\sigma}(v) = \sum_{n=n_{\mathrm{min}}}^{n_{\mathrm{max}}} \frac{e^{-(v-\alpha n)^2/(2
\sigma^2)}}{\sqrt{2\pi}\sigma} \hat{N}(n)\label{voltt} \,.}

Consider the randomness generation measurement. Since the detector noise terms are taken to be independent from one another, we can equivalently combine them into a single overall noise variable $\lambda_D$ with variance $\sigma_D^2 =  \sigma_A^2 +\sigma_B^2$ (this joint variable is what was determined in practice during device calibration) that acts to smear out the ideal difference measurement to obtain\footnote{For detectors with the same conversion factor $\alpha$, a particular outcome at the detectors A and B would lead to a difference value $d = n_A- n_B + \lambda_A - \lambda_B = x + \lambda_D$ where we have combined the independent noise variables.}
\eqn{\hat{V}_D^{\sigma_D}(v_D) = \sum_{x=-(L-1)}^{L-1} \frac{e^{-(v_{D}-\alpha_{D}x)^2/(2\sigma_D^2)}}{\sqrt{2\pi}\sigma_D} \hat{X}_{\mathrm{fin}}(x)\label{volts}\,,}
with $\hat{X}_{\mathrm{fin}}(x)$ given by Eq.~(\ref{Xfin}) but effectively by Eq.~(\ref{x}) for the photon ranges we will certify.

In addition, the certification measurement's POVM accounting for the Gaussian noise characterised by variance $\sigma_C^2$ is given by
\eqn{\hat{V}_C^{\sigma_C}(v_C) = \sum_{n=n^C_{\mathrm{min}}}^{n^C_{\mathrm{max}}} \frac{e^{-(v_C-\alpha_C n_C)^2/(2
\sigma_C^2)}}{\sqrt{2\pi}\sigma_C} \hat{N}_{C,\mathrm{fin}}(n_C) \,.}

Finally, for the failure operator associated with the certification measurement with Gaussian electronic noise, we have the following 
\eqn{\hat{V}_F^{\sigma_C}(v_C,n_{R}^{-},n_{R}^{+}) = \sum_{n_C = n^C_{\mathrm{min}}}^{n^C_{\mathrm{max}}} \frac{e^{-(v_C-\alpha_C n_C)^2/(2\sigma_C^2)}}{\sqrt{2\pi}\sigma_C} \hat{F}(n_C,n_{R}^{-},n_{R}^{+}) \,,}
where $\alpha_C$ is the voltage conversion factor for the photodetector C and $\sigma_C$ is the standard deviation of its associated electronic noise.
 
For the security analysis later, we will often be interested in the measurement operators from Eve's perspective who always knows the relevant value of $\lambda$. This leads to a voltage POVM given by
\eqn{\hat{V}(v) = \hat{N}\bk{\frac{v - \lambda}{\alpha}}\,,}
a difference measurement
\eqn{\hat{V}_D(v_{D}) =  \hat{X}_{\mathrm{fin}}\bk{\frac{v_{D}-\lambda_D}{\alpha_{D}}}\,,}
a certification measurement
\eqn{\hat{V}_C(v_C) = \hat{N}_{C,\mathrm{fin}} \bk{\frac{n_C-\lambda_C}{\alpha_C}}\,,}
and a failure operator associated with certification voltage measurement
\eqn{\hat{V}_F(v_C,n_{R}^{-},n_{R}^{+}) =  \hat{F}\bk{\frac{v_C-\lambda_C}{\alpha_C},n_{R}^{-},n_{R}^{+}} \label{failureEve} \,.}

\subsection{\nw{Finite resolution and range of analogue-to-digital converter}}
In the previous section, we modelled the detectors as having a finite range but otherwise being perfectly photon-number resolving and convolved with a classical noise variable subsequently given to the eavesdropper. In fact, the randomness generation measurement has a finite resolution which corresponds to an extra coarse graining. Specifically, the analogue-to-digital converter (ADC) which processes the voltage signal can only record a certain set range of voltages $[V_{\mathrm{min}},V_{\mathrm{max}}]$, with all voltages greater or smaller than this amount registered as results in the ``end bin''. Furthermore, within the range $[V_{\mathrm{min}},V_{\mathrm{max}}]$, voltages are only recorded with a finite resolution. Therefore, whilst an ideal voltage measurement might have unbounded and continuous values, a real detector in combination with an ADC with finite bits of resolution $\Delta_{\mathrm{ADC}}$ outputs $J=2^{\Delta_{\mathrm{ADC}}}$ outcomes with corresponding POVM elements $\{ \hat{V}^{\sigma,\Delta_{\mathrm{ADC}}}(j)\}_{j}$ for the measured $j^\mathrm{th}$ bin expressed as
\eqn{\hat{V}^{\sigma,\Delta_{\mathrm{ADC}}}(j) = \int_{I_j} \hat{V}^{\sigma}(v) \, dv \label{voltfinal}\,,}
where the integration regions are given by 
\eqn{I_{\left \lfloor -(J-1)/2\right \rfloor} \ee [-\infty, V_{\mathrm{min}}+\delta V[,\quad I_{\left \lfloor-(J-1)/2 +1\right \rfloor} = [V_{\mathrm{min}} + \delta V,V_{\mathrm{min}} + 2\delta V[,\quad\dots \nn \\
&&\dots,\quad I_0 = [-\delta V/2, \delta V/2[,\quad\dots,\quad I_{\left \lceil(J-1)/2\right \rceil} = [V_{\mathrm{min}}+(J-1)\delta V,\infty[ \label{intervals}\,,}
and $\delta V = \frac{V_{\mathrm{max}}-V_{\mathrm{min}}}{2^{\Delta_{\mathrm{ADC}}}}$ is the effective voltage resolution induced by $\Delta_{\mathrm{ADC}}$. \nw{Note that $\lfloor\cdot\rfloor$ and $\lceil\cdot\rceil$ are the floor and ceiling functions, respectively.}
%and $\delta V = \frac{V_{\mathrm{max}}-V_{\mathrm{min}}}{2^{\Delta_{\mathrm{ADC}}}}$ is the effective voltage resolution induced by $\Delta_{\mathrm{ADC}}$.

As a result, the coarse grained noisy difference measurement operators are given by $\{ \hat{V}_D^{\sigma_D,\Delta_{\mathrm{ADC}}}(j)  \}_{j}$ for which
\eqn{\hat{V}_D^{\sigma_D,\Delta_{\mathrm{ADC}}}(j) =  \int_{I_j^D} \hat{V}_D^{\sigma_D}(v_{D}) \, dv_{D} \label{measfinal} \,.}

The corresponding difference measurement from Eve's perspective (i.e. given the relevant $\lambda$) would be
\eqn{\hat{V}_D^{\Delta_{\mathrm{ADC}}}(j) \ee  \int_{I_j^D-\lambda_D} \hat{V}_D(v_{D}) \, dv_{D} \nn\\
\ee \sum_{x\in\mathcal{X}} \hat{X}_{\mathrm{fin}}(x) \label{measEve} \,,}
where
\eqn{\mathcal{X} \ee \{x : \alpha_{D} x + \lambda_{D} \in I_{j}^D \}\,.}

The certification voltage measurement is recorded by an ADC with the same resolution and consequently it is still a $J$-outcome measurement but over an ADC range $[V_{\mathrm{min}}^C,V^C_{\mathrm{max}}]$ and a corresponding voltage resolution $\delta V_C = \frac{V^C_{\mathrm{max}}-V^C_{\mathrm{min}}}{2^{\Delta_{\mathrm{ADC}}}}$. This leads to intervals $I_i^C$ which are defined as per Eq.~(\ref{intervals}) and coarse-grained certification measurements elements
\eqn{\hat{V}_C^{\sigma_C,\Delta_{\mathrm{ADC}}}(i) = \int_{I_i^C} \hat{V}_C^{\sigma_C}(v_C) \, dv_C\label{certfinal}\,.}

Moreover, the associated failure operator is 
\eqn{\hat{V}_F^{\sigma_C,\Delta_{\mathrm{ADC}}}(i,n_{R}^{-},n_{R}^{+}) = \int_{I_i^C} \hat{V}_F^{\sigma_C}(v_C,n_R^-,n_R^+) \, dv_C \,.}

For a fixed value of the noise variable $\lambda_C$, we have the following failure operator from Eve's perspective
\eqn{\hat{V}_F^{\Delta_{\mathrm{ADC}}}(i,n_{R}^{-},n_{R}^{+}) \ee \int_{I^C_{i} - \lambda_C} \hat{V}_F(v_C,n_{R}^{-},n_{R}^{+}) \, dv_C \nn\\ 
\ee \sum_{n_C\in \mathcal{C}} \hat{F}(n_C,n_{R}^{-},n_{R}^{+})\label{K_no_lambda} \,,} 
where 
\eqn{\mathcal{C} = \{ n_C: \alpha_C n_C + \lambda_C \in I_{i}^C \}\,.}

In general, one must be mindful of the interplay between the conversion from photon number to voltage and the final voltage resolution. Indeed, if the signal were to experience strong attenuation (very small $\alpha$), then the voltage distribution would start to become small with respect to the fixed voltage resolution and the entropy would decrease. In our implementation, we carefully kept track of the coarse graining, thus avoiding such issue. 

Before we proceed further, we show in Fig.~\ref{fig:detector_model} a schematic drawing summarising our detector's model. The POVMs present in the figure are those specified in this appendix.
\begin{figure}[h!]
\boxed{\includegraphics[scale=1]{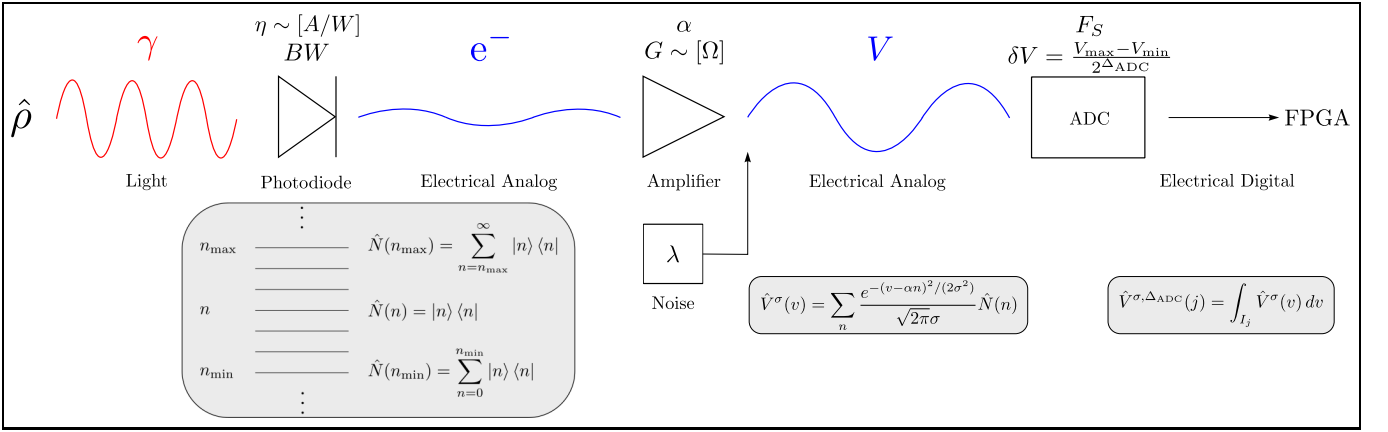}}
\caption{Detector model. Photons from a photonic state $\hat{\rho}$ impinge onto a photodiode whose linear range and equivalent $L$ photon projectors are given in Eq.~(\ref{finiterange}). The photodiode's voltage response is given by the conversion factor $\alpha$ expressed in Eq.~(\ref{conversion_time}) in general and Eq.~(\ref{conversion_time_flat}) in our case. This factor incorporates the photodiode's bandwidth $BW$, its responsitivity $\eta$ (in $\si{\ampere\per\watt}$) and the transimpedence gain $G$. Noise characterised by a Gaussian random variable $\lambda$ is then added onto the voltage, leading to the voltage POVM in Eq.~(\ref{voltt}). Finally, the voltage is discretised by an ADC with effective resolution $\delta V$ and at a sampling rate $F_{S}$, yielding the POVM associated with the measurement of the $j^{\mathrm{th}}$ voltage bin expressed in Eq.~(\ref{voltfinal}). \nw{Light has been effectively converted from photons to a digital electrical signal which one can subsequently feed to an FPGA.}}
\label{fig:detector_model}
\end{figure}

\section{Proof of the Main Theorem} \label{proof_main_theorem}
In this Appendix, we provide the full security proof for the more realistic QRG protocol carried out in the experiment. As per the idealised protocol, the proof proceeds in two steps. First, we calculate the worst-case min-entropy for a certain class of states, namely those with a limited support over Fock states. Secondly, we calculate the failure probability of the protocol which is the maximum probability that a state not in that class could have passed the certification test. We rewrite theorem \ref{rgend1} given in the main text and proceed with our proof. 

\begin{Theorem} \label{rgend2} 
An optical setup consisting of 
\begin{itemize}
\item Two trusted vacuum modes
\item Two beam splitters of reflectivity $r_{0}=\frac{1}{2}$ and $r_{1}$
\item Two noisy photodetectors used to make a difference measurement as described in Eq.~(\ref{measfinal})
\item A third noisy photodetector used to make a certification measurement as described in Eq.~(\ref{certfinal}) which passes the test $\mathcal{P}$ if $i$ falls in a chosen range $[i_-,i_+]$
\end{itemize}

can be used as a certified (m,$\kappa$,$\epsilon_{\mathrm{fail,m}}$,$\epsilon_c$)-randomness generation protocol as per Definition \ref{QRGdef} without making any assumptions about the photonic source with
\eqn{\kappa &\geq& - m\log_2 \left ( \sum_{x\in \mathcal{X}}2^{-n_{R}^{-}}\binom{n_{R}^{-}}{\lfloor \frac{n_{R}^{-}+x}{2} \rfloor} \right ) \label{hminthm}\,,}
where
\eqn{\mathcal{X} \in \mathbb{N} \cap \left [ -\left \lfloor \frac{\delta V}{2\alpha_{D}} \right \rfloor, \left \lfloor \frac{\delta V}{2 \alpha_{D}} \right \rfloor  \right ] \label{support}\,,}
with $\delta V = \frac{V_{\mathrm{max}}-V_{\mathrm{min}}}{2^{\Delta_{\mathrm{ADC}}}}$,

\eqn{\label{mround} \epsilon_{\mathrm{fail,m}} &\leq&m \epsilon_{\mathrm{fail}} \,,}
where
\eqn{\epsilon_{\mathrm{fail}} = \max \{ \epsilon_-,\epsilon_+\} + \epsilon_{\lambda_{C}} \label{epsilontheorem}\,,}
with 
\eqn{ \epsilon_- \ee \exp\bk{-2\frac{\left(\frac{v_{i_{-}}^- - \tilde{\lambda}}{\alpha_C}-r_{1} \left(\frac{v_{i_{-}}^- - \tilde{\lambda}}{\alpha_C} + n_{R}^{-} - 1\right)\right)^2}{\frac{v_{i_{-}}^- - \tilde{\lambda}}{\alpha_C} + n_{R}^{-} - 1}} \,, \nn \\
\epsilon_+ \ee \exp\bk{-2\frac{\left(n_{R}^{+} -(1-r_{1}) \left(\frac{v_{i_{+}}^+ - \tilde{\lambda}}{\alpha_C} + n_{R}^{+} + 1\right)\right)^2}{\frac{v_{i_{+}}^+ - \tilde{\lambda}}{\alpha_C} + n_{R}^{+} + 1}} \,, \nn\\ 
\epsilon_{\lambda_{C}} \ee 1 - \mathrm{erf}\left(\frac{\tilde{\lambda}}{\sqrt{2} \sigma_C }\right) \label{esecreal} \,,}
provided $n_R^+$ is set to the saturating photon number of the difference measurement. 

Moreover, \eqn{\epsilon_c = 1 - \mathrm{tr} \left\{\sum_{i=i_-}^{i_+}\ket{\alpha}\bra{\alpha} \hat{V}_C^{\sigma_C,\Delta_{\mathrm{ADC}}}(i) \right\} \label{complete} \,,} 
using a coherent state $\ket{\alpha}$ as an input.
\end{Theorem}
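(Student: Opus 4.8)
The plan is to follow the two-part structure already used for the idealised protocol in Theorem~\ref{rgendideal}, adapting each step to accommodate the Gaussian noise, finite detector range and ADC coarse-graining encoded in Eqs.~(\ref{measfinal}) and~(\ref{certfinal}). The structural fact that makes this possible is that every operator in the protocol --- the realistic difference POVM $\hat{V}_D^{\sigma_D,\Delta_{\mathrm{ADC}}}(j)$, the certification POVM and the associated failure operator --- remains diagonal in the Fock basis, since both the binomial beam-splitter coefficients and the Gaussian smearing preserve diagonality. Consequently Lemma~\ref{fockopt} applies verbatim, and I would first invoke it to reduce Eve's optimal attack, both for maximising the guessing probability and for maximising the failure probability, to a product of pure Fock states across the $m$ rounds. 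This covers fully general coherent attacks while letting me argue round-by-round.

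For the min-entropy bound I would pass to Eve's perspective, in which she knows the shot-by-shot noise $\lambda_D$ and the measurement collapses to $\hat{V}_D^{\Delta_{\mathrm{ADC}}}(j)=\sum_{x\in\mathcal{X}}\hat{X}(x)$ with $\mathcal{X}=\{x:\alpha_D x+\lambda_D\in I_j^D\}$, as in Eq.~(\ref{measEve}). Evaluating $\bra{n}\hat{V}_D^{\Delta_{\mathrm{ADC}}}(j)\ket{n}$ gives a window-sum of binomial coefficients $\sum_{x\in\mathcal{X}}2^{-n}\binom{n}{\lfloor(n+x)/2\rfloor}$. Three monotonicity facts then pin down the worst case: (i) for a fixed window the sum is largest when the window is centred on the peak of the binomial, which forces $\lambda_D=0$ and the bin to contain $x=0$ (or $x=1$ for odd parity), so the relevant support is exactly the symmetric set $\mathcal{X}$ of Eq.~(\ref{support}); (ii) the resulting guessing probability decreases monotonically in $n$, provided $n_R^+$ stays below the saturating photon number, so among states supported on $[n_R^-,n_R^+]$ the smallest min-entropy is attained at $\ket{n_R^-}$; (iii) substituting $n=n_R^-$ yields Eq.~(\ref{hminthm}), and the tensor-product structure multiplies this by $m$ to give $\kappa$. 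Step~(ii) is where the finite-range proviso on $n_R^+$ is genuinely needed, since saturation would otherwise let Eve force a deterministic end-bin outcome.

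For the failure probability I would start from the failure operator $\hat{V}_F^{\sigma_C,\Delta_{\mathrm{ADC}}}(i,n_R^-,n_R^+)$ and write the single-round failure probability as $\max_{n_E}\sum_{i=i_-}^{i_+}\bra{n_E}\hat{V}_F^{\sigma_C,\Delta_{\mathrm{ADC}}}(i,n_R^-,n_R^+)\ket{n_E}$. First I would discharge the certification noise: except with probability $\epsilon_{\lambda_C}=1-\mathrm{erf}(\tilde\lambda/(\sqrt2\sigma_C))$ one has $|\lambda_C|\le\tilde\lambda$, which converts the voltage thresholds into photon-number thresholds $n_C^\pm=(v_i^\pm\pm\tilde\lambda)/\alpha_C$. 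The sum over $n_R\notin[n_R^-,n_R^+]$ splits into a low branch ($n_R<n_R^-$) and a high branch ($n_R>n_R^+$); I would verify that the separation condition $\lambda_C\le v_{i^+}^+-\alpha_C(n_R^+-n_R^-+1)$ guarantees these branches are never simultaneously non-zero for a single $n_E$, so that $\epsilon_{\mathrm{fail}}=\max\{\epsilon_-,\epsilon_+\}+\epsilon_{\lambda_C}$ as in Eq.~(\ref{esecreal}). Each branch is a complementary cumulative binomial, and the extremal-$n_E$ analysis of Theorem~\ref{rgendideal}, built on Pascal's identity and the sign of successive differences, transplants directly to give $n_E^{\mathrm{opt}}=n_C^-+n_R^--1$ for the low branch; a parallel argument in the conjugate binomial (parameter $1-r_1$, counting photons sent to R) handles the high branch. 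Applying Hoeffding's inequality to each branch produces the closed forms $\epsilon_\pm$, and the union bound $1-(1-\epsilon_{\mathrm{fail}})^m\le m\epsilon_{\mathrm{fail}}$ gives Eq.~(\ref{mround}). Completeness is immediate, as the honest passing probability for a coherent input is $\mathrm{tr}\{\sum_{i=i_-}^{i_+}\ket{\alpha}\bra{\alpha}\hat{V}_C^{\sigma_C,\Delta_{\mathrm{ADC}}}(i)\}$, whose complement is Eq.~(\ref{complete}).

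The main obstacle I anticipate is the high branch $\epsilon_+$, which has no analogue in the idealised Theorem~\ref{rgendideal}: there only a lower threshold existed, whereas the realistic detector's finite range forces an upper cut-off $n_R^+$ and hence a second, oppositely oriented failure mode. Getting $\epsilon_+$ right requires re-running the extremal-input and Hoeffding analysis on the conjugate binomial and, more delicately, verifying the separation condition so that the two branches decouple and the bound reduces cleanly to $\max\{\epsilon_-,\epsilon_+\}$ rather than their sum. A secondary subtlety is rigorously justifying the window-centring claim~(i) for the min-entropy, namely that summing a fixed-width band of binomial coefficients is maximised when the band is centred; this follows from log-concavity of the binomial, but must be stated carefully because the floor functions introduce a parity dependence between the even and odd cases.
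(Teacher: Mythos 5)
Your proposal follows essentially the same route as the paper's own proof in Appendix~\ref{proof_main_theorem}: Lemma~\ref{fockopt} reduces Eve to pure Fock-state inputs, the min-entropy is bounded by the centred window-sum of binomial coefficients evaluated at $\ket{n_R^-}$ using the monotonicity in $n$ and the worst-case $\lambda_D$, and the failure probability is split via the separation condition into two Hoeffding-bounded binomial tails plus $\epsilon_{\lambda_C}$, with the union bound over $m$ rounds. The two subtleties you flag --- the new high branch $\epsilon_+$ with its conjugate-binomial extremal input $n_E^{\mathrm{opt}} = n_R^+ + n_C^+ + 1$, and the window-centring argument for the guessing probability --- are exactly the points the paper handles, so the proposal is correct and matches the paper's approach.
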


\begin{proof}
{\bf Security:}
Consider the task of guessing the difference measurement from the perspective of Eve who knows $\lambda_D$ on a shot-by-shot basis, which is given by Eq.~(\ref{measEve}). First, this measurement satisfies the conditions of Lemma \ref{fockopt} and so Eve's optimal state is a number state. Her strategy will be to add $\lambda_D$ to the most likely value of the noiseless difference measurement which, as shown in Appendix \ref{ideal}, is 0 or 1 depending upon whether Eve inputs an odd or even number of photons. Therefore, Eve's best guess will be the voltage bin $I_j^D$ with $j=\left [ \frac{\lambda_{D}}{\delta V}\right ] $ or $j=\left [ \frac{(1+\lambda_{D})}{\delta V}\right ]$, where $[.]$ is the nearest integer rounding function. The guessing probability is given by the sum of all the probabilities associated with the outcomes $\hat{X}(x)$ for which Eve's guess would remain true. This can be expressed as the following set
\eqn{\mathcal{X} \ee \{x \in [-(L-1),L-1] :\hspace{2mm} \alpha_{D} x + \lambda_{D} \in I_{j}^D \}\,.}

For states restricted to the range $[n_{R}^{-},n_{R}^{+}]$, the guessing probability corresponds to
\eqn{p_{\mathrm{guess}} \ee  \max_{n\in [n_{R}^{-},n_{R}^{+}]}  \bra{n}\sum_{x \in \mathcal{X}} \hat{X}(x) \ket{n} \,,}
where again the sum only includes even (odd) values of $x$ when $n$ is even (odd).

From the expressions above, the interplay between the voltage conversion factor $\alpha_D$ and the voltage resolution $\delta V$ becomes clear. The number of difference measurement elements that will be mapped to a given voltage bin is given by $\left\lceil \frac{\delta V}{\alpha_{D}}\right\rceil$, such that as $\alpha_D$ becomes smaller, this number grows and Eve's guessing probability will increase. Since we will only consider number states within the linear regime of the difference measurement (i.e. $n_{R}^{+} = n_{\mathrm{max}}$), we can safely assert that $\bra{n}\hat{X}(x) \ket{n} = 2^{-n}\binom{n}{\lfloor \frac{n+x}{2} \rfloor}$ is a binomial distribution. Thus, the largest guessing probability for a given $n$ will occur when $\lambda_D$ is such that the $\left\lceil \frac{\delta V}{\alpha_{D}}\right\rceil$ bins are centered evenly around the origin, i.e. the middle portion of the binomial distribution. Moreover, we know from Section \ref{ideal} that the guessing probability will decrease monotonically with the photon number. This yields
\eqn{p_{\mathrm{guess}} \leq  \sum_{x\in \mathcal{X}}2^{-n_R^-}\binom{n_{R}^{-}}{\lfloor \frac{n_{R}^{-}+x}{2} \rfloor}\,,}
which is exactly Eq.~(\ref{hminthm}). While this expression can be directly evaluated numerically, for large $n_{R}^{-}$ (recall here that $n_{R}^{-} >10^{5}$), one can use the Gaussian distribution as an excellent approximation for the binomial distribution and evaluate the sum as an integral to get the following compact expression
\eqn{p_{\mathrm{guess}} \leq \frac{1}{2} \left(\erf\left(\frac{\frac{\delta V}{2 \alpha_{D} }}{ \sqrt{\frac{n_R^-}{4}}}\right)-\erf\left(\frac{-\frac{\delta V}{2 \alpha_{D} }-1}{\sqrt{\frac{n_R^-}{4}}}\right)\right)\,.}
%\eqn{p_{\mathrm{guess}} \leq \frac{1}{2} \left(\erf\left(\frac{\frac{\delta V}{\alpha_{D} }}{\sqrt{n_R^-}}\right)-\erf\left(\frac{-\frac{\delta V}{\alpha_{D} }-2}{\sqrt{n_R^-}}\right)\right)\,.}

The failure probability for the protocol is given by the probability of passing the test even though a state with too few, or too many, photons is incident onto the difference measurement in mode R. We can express the probability of Eve successfully cheating in a single round as
\eqn{\epsilon_{\mathrm{fail}} \ee \max_{\hat{\rho}_E} \mathrm{Pr} \left[i^{-}\leq i \leq i^{+} \wedge n_R\notin [n_{R}^{-},n_{R}^{+}]\right] \nn\\
\ee \max_{\hat{\rho}_E} \mathrm{tr} \left \{ \hat{\rho}_E \sum_{i= i^{-}}^{i^{+}} \hat{V}_F^{\sigma_C,\Delta_{\mathrm{ADC}}}(i,n_{R}^{-},n_{R}^{+}) \right \} \nn\\
\ee \max_{n_E} \mathrm{tr} \left \{ \ket{n_E}\bra{n_E} \sum_{i= i^{-}}^{i^{+}} \hat{V}_F^{\sigma_C,\Delta_{\mathrm{ADC}}}(i,n_{R}^{-},n_{R}^{+})\right \}\,,\label{efail1}}
where in the last we line we used the fact that $\hat{V}_F$ satisfies the conditions of Lemma \ref{fockopt}, implying that Eve's optimal input state will be a number state. 

To begin with, let us consider this probability given a particular value for $\lambda_C$, the detector's noise variable. Then, from Eve's perspective, this electronic noise $\lambda_{C}$ is effectively removed as expressed in Eq.~(\ref{K_no_lambda}) and we have
\eqn{\epsilon_{\mathrm{fail}} \ee \max_{n_E} \mathrm{tr} \left \{  \ket{n_E}\bra{n_E}\sum_{n_C = n_C^-}^{n_C^+}\hat{F}(n_C,n_{R}^{-},n_{R}^{+}) \right \} \nn\\
\ee \max_{n_E} \mathrm{tr} \left \{ \ket{n_E}\bra{n_E} \sum_{n_C = n_C^-}^{n_C^+} \left ( \sum_{n_R=0}^{n_{R}^{-}}\mathcal{B}(r_{1},n_C+n_R,n_C)  \ket{n_C+n_R}\bra{n_C+n_R}_{E} \right. \right .\nn\\
  &+& \left. \left . \sum_{n_R=n_{R}^{+} +1}^{\infty} \mathcal{B}(r_{1},n_C+n_R,n_C) \ket{n_C+n_R}\bra{n_C+n_R}_{E} \right ) \right \} \nn\\
\ee  \max_{n_E} \left \{ \sum_{n_C=\max \{ n_C^-,n_E-(n_{R}^{-}-1)\}}^{\min\{n_C^+,n_E\}} \mathcal{B}(r_{1},n_E,n_C) +\sum_{n_R = \max\{n_{R}^{+}, n_E-(n_C^+ +1) \}}^{n_E} \mathcal{B}(1-r_{1},n_E,n_R) \right \}\label{e1final} \,,}
where $n_C^- = \min_{n_C} \{ n_C: \alpha_Cn_C + \lambda_C \in I_{[i^{-},i^{+}]}^C\}$ and $n_C^+ = \max_{n_C} \{ n_C: \alpha_Cn_C + \lambda_C \in I_{[i^{-},i^{+}]}^C\}$ with $I_{[i^{-},i{+}]}^C$ being the entire voltage range for which the test $\mathcal{P}$ is passed. 

Let $v_{i}^{\pm}=\delta V(i\pm \frac{1}{2})$ be the smallest and largest voltages corresponding to bin $i$. Therefore, the minimum (maximum) voltage consistent with passing the test is $v_{i_{-}}^{-}$ ($v_{i_{+}}^{+}$). The corresponding minimum and maximum photon numbers are
 \eqn{n_C^- \ee \frac{v_{i_{-}}^{-} - \lambda_C}{\alpha_C} \nn \,, \\
 n_C^+ \ee \frac{v_{i_{+}}^{+} - \lambda_C}{\alpha_C} \,.}
 
We can use our knowledge of the detector's noise distribution to turn these into worst case upper and lower bounds for $n_C^{+}$ and $n_C^{-}$, respectively. Recalling that $\lambda_C$ is Gaussian with variance $\sigma_C^2$, we can say that except with a probability
\eqn{\epsilon_{\lambda_{C}} = 1 -  \text{erf}\left(\frac{\tilde{\lambda}}{\sqrt{2} \sigma_C }\right) \label{erfbound}\,,} 
one has $|\lambda_C|<\tilde{\lambda}$. This gives
 \eqn{n_C^- &\geq& \frac{v_{i_{-}}^{-} - \tilde{\lambda}}{\alpha_C} \nn \,,\\
 n_C^+ &\leq& \frac{v_{i_{+}}^{+} + \tilde{\lambda}}{\alpha_C}\,.}

Next, the varying limits in the sums of Eq.~(\ref{e1final}) can be explained as follows. For the first sum, an unconditional lower limit is given by $n_C^-$. However, for sufficiently large inputs $n_E$, this requirement is superseded by the constraint that $n_R<n_{R}^{-}$, which in turn necessitates that $n_C\geq n_E-(n_{R}^{-}-1)$. The upper limit simply comes from the fact that if $n_E<n_C^+$, then the binomial distribution can only run up to $n_E$. For the second sum, we have an unconditional constraint $n_R>n_{R}^{+}$, however again for sufficiently large $n_E$, the requirement that $n_C<n_C^-$ implies that we must have $n_R>n_E-(n_C^++1)$. Notice that depending upon the bounds for $n_C^{+}$ and $n_C^{-}$, there are certain values of $n_E$ for which the first or second sums may vanish. This turns out to be the case here (i.e. for our values only one of the sums will be non-zero at a time).

The first sum in Eq.~(\ref{e1final}) will vanish whenever $n_E> n_{C}^{+}+n_{R}^{-}-1 \geq \frac{v_{i_{+}}^{+} - \tilde{\lambda}}{\alpha_C} + n_{R}^{-} - 1$ and the second when $n_E< n_{R}^{+}$. In summary, as long as 
\eqn{n_{R}^{+}&>&\frac{v_{i_{+}}^{+} - \tilde{\lambda}}{\alpha_C} + n_{R}^{-} - 1 \nn \\
\Rightarrow \tilde{\lambda} &\leq& v^+_{i^+} - \alpha_C\bk{n_R^+ - n_R^- + 1}\label{1meas}\,,} 
it implies that there are no values of $n_E$ for which both sums will be simultaneously nonzero. In our case, this condition evaluates to 
\eqn{|\tilde{\lambda}| \leq 1.155\,.}

We will always be making a much tighter probabilistic bound on $\tilde{\lambda}$ such that Eq.~(\ref{1meas}) is satisfied at all times. Substitution in Eq.~(\ref{erfbound}) indicates that this will be true except with probability $10^{-3769921}$, which is far below the other failure probabilities that we certify.

Except with probability $\epsilon_{\lambda_{C}}$, we can then write the single round failure probability as
\eqn{\epsilon'_{\mathrm{fail}} \ee \max \left \{ \max_{n_E}  \sum_{n_{C}=\max \{ n_C^-,n_E-(n_{R}^{-}-1)\}}^{\min\{n_C^+,n_E\}} \mathcal{B}(r_{1},n_E,n_C), \right. \nn\\
 &&\max_{n_E}  \left. \sum_{n_R = \max\{n_{R}^{+}, n_E-(n_C^+ +1) \}}^{n_E} \mathcal{B}(1-r_{1},n_E,n_R)  \right \} \,.}

Considering the first term, we have
\eqn{\max_{n_E}  \sum_{n_{C}=\max \{ n_C^-,n_E-(n_{R}^{-}-1)\}}^{\min\{n_C^+,n_E\}} \mathcal{B}(r_{1},n_E,n_C) \leq \max_{n_E}  \sum_{n_{C}=\max \{ n_C^-,n_E-(n_{R}^{-}-1)\}}^{n_E} \mathcal{B}(r_{1},n_E,n_C)\,.}

This expression is exactly the same as Eq.~(\ref{e1ideal}) for which we already know that $n_E^{\mathrm{opt}} = n_C^- + n_{R}^{-} - 1$. Therefore, we can apply Hoeffding's bound to the binomial cumulative distribution to obtain
\eqn{\max_{n_E}  \sum_{n_{C}= n_C^-}^{n_E} \mathcal{B}(r_{1},n_E,n_C) &\leq& \exp\bk{-2\frac{(n_C^--r_{1} (n_C^- + n_{R}^{-} - 1))^2}{n_C^- + n_{R}^{-} - 1}} \nn \\
&\leq& \exp\bk{-2\frac{\left(\frac{v_{i_{-}}^{-} - \tilde{\lambda}}{\alpha_C}-r_{1} \left(\frac{v_{i_{-}}^{-} - \tilde{\lambda}}{\alpha_C} + n_{R}^{-} - 1\right)\right)^2}{\frac{v_{i_{-}}^{-} - \tilde{\lambda}}{\alpha_C} + n_{R}^{-} - 1}} \,,}
provided there exists a $n_{R}^{-}$ such that $n_{R}^{-}> \frac{1-r_{1}}{r_{1}}\frac{v_{i_{-}}^{-} - \tilde{\lambda}}{\alpha_C}$.

The second term in the maximisation is again just the cumulative tail of a binomial distribution and via the same argument as in Eq.~(\ref{e1ideal}), we know that Eve should choose $n_E^{\mathrm{opt}} = n_{R}^{+}+n_C^+ +1$ to maximise this term, giving
\eqn{\sum_{n_R = n_{R}^{+}}^{n_E} \mathcal{B}(1-r_{1},n_E,n_R) &\leq& \exp\bk{-2\frac{(n_{R}^{+} -(1-r_{1}) (n_C^+ + n_{R}^{+} + 1))^2}{n_C^+ + n_{R}^{+} + 1}} \nn \\
&\leq& \exp\bk{-2\frac{\left(n_{R}^{+} -(1-r_{1}) \left(\frac{v_{i_{+}}^{+} - \tilde{\lambda}}{\alpha_C} + n_{R}^{+} + 1\right)\right)^2}{\frac{v_{i_{+}}^{+} - \tilde{\lambda}}{\alpha_C} + n_{R}^{+} + 1}} \,,}
provided there exists $n_{R}^{+}> \frac{1-r_{1}}{r_{1}}\frac{v_{i_{+}}^{+} - \tilde{\lambda}}{\alpha_C}$. 

Thus, the total failure probability for one round of the protocol is given by
\eqn{\epsilon_{\mathrm{fail}} \ee \epsilon_{\mathrm{fail}}' + \epsilon_{\lambda_{C}} \nn\\
\ee \max \left \{\exp\bk{-2\frac{\left(\frac{v_{i_{-}}^{-} - \tilde{\lambda}}{\alpha_C}-r_{1} \left(\frac{v_{i_{-}}^{-} - \tilde{\lambda}}{\alpha_C} + n_{R}^{-} - 1\right)\right)^2}{\frac{v_{i_{-}}^{-} - \tilde{\lambda}}{\alpha_C} + n_{R}^{-} - 1}},\right. \nn \\
&& \left. \exp\bk{-2\frac{\left(n_{R}^{+} -(1-r_{1}) \left(\frac{v_{i_{+}}^{+} - \tilde{\lambda}}{\alpha_C} + n_{R}^{+} + 1\right)\right)^2}{\frac{v_{i_{+}}^{+} - \tilde{\lambda}}{\alpha_C} + n_{R}^{+} + 1}} \right \} \nn \\ &+& 1 - \text{erf}\left(\frac{\tilde{\lambda}}{\sqrt{2} \sigma_C }\right) \,.}

which is exactly Eq.~(\ref{esecreal}), thereby completing the proof. 

{\bf Completeness:}
Lastly, the argument for completeness is the same as that in Appendix \ref{ideal}. 

\end{proof}

\section{\nw{Mathematical details}} \label{details}
\nw{Composable security for a protocol is frequently defined in terms of the probability of passing some test $p_{\mathrm{pass}}$, the distinguishability between the output of a real implementation conditioned on passing that test $\hat{\rho}_{\mathrm{pass}}$ and an ideal output of the protocol $\hat{\rho}_{\mathrm{ideal}}$. Since quantum state distinguishability is precisely captured by the trace distance $D(\hat{\rho},\hat{\sigma}) = ||\hat{\rho} - \hat{\sigma}||_1$, the security parameter of such a definition is typically written as $\epsilon_{\mathrm{fail}} := p_{\mathrm{pass}}D(\hat{\rho}_{\mathrm{pass}}, \hat{\rho}_{\mathrm{ideal}}) $. Above, we showed that the security parameter for this protocol is
\eqn{\epsilon_{\mathrm{fail}} \ee \max_{\hat{\rho}_E} \mathrm{tr} \left \{  \hat{\rho}_E\sum_{n_C = n_C^-}^{n_C^+}\hat{F}(n_C,n_{R}^{-},n_{R}^{+}) \right \}\,,\label{epsilonAppD}}
where the failure operators $\hat{F}(n_C,n_R^-,n_R^+)$ are defined in Eq.~(\ref{certmeasfinite}). 

This can be interpreted as as the joint probability that the test would be passed in mode C whilst a photon number outside the range $[n_{R}^-,n_R^+]$ was measured for $\hat{\rho}_R^{\mathrm{pass}}$ (the conditional state in mode R). For completeness, we will show here that $\epsilon_{\mathrm{fail}}$ can equivalently be seen as the probability of passing the test multiplied by the distinguishability between $\hat{\rho}_R^{\mathrm{pass}}$ and any state with support solely in the range $[n_{R}^-,n_R^+]$. Recall that without loss of generality, we can take Eve's input state $\hat{\rho}_{E}$ to be diagonal in the Fock basis. In this case, $\hat{\rho}_R^{\mathrm{pass}}$ will also be diagonal in the Fock basis and so will the closest state in the range $[n_{R}^-,n_R^+]$ which we will denote $\hat{\sigma}_{[n_{R}^{-},n_R^+]}$. For such diagonal states, the trace distance simplifies and it is straightforward to show that the distance $D(\hat{\rho}_R^{\mathrm{pass}},\hat{\sigma}_{[n_{R}^{-},n_R^+]})$ is just the probability of projecting $\hat{\rho}_R^{\mathrm{pass}}$ onto a Fock state that lies outside $[n_{R}^{-},n_R^+]$. In other words
\eqn{D(\hat{\rho}_R^{\mathrm{pass}},\hat{\sigma}_{[n_{R}^{-},n_R^+]}) = \mathrm{tr} \left\{\hat{\rho}_R^{\mathrm{pass}}\bk{\sum_{n_R = 0}^{n_R^-} \ket{n_R}\bra{n_R} + \sum_{n_R = n_R^+}^{\infty} \ket{n_R}\bra{n_R}} \right \}\,.}

However, this probability is precisely the same as the joint probability of observing too few or too many photons in mode R whilst passing the test, renormalised by the probability of passing the test. The joint probability is exactly what is given by the failure mode operators in Eq.~(\ref{certmeasfinite}) acting on Eve's input. Thus, we can write
\eqn{D(\hat{\rho}_R^{\mathrm{pass}},\hat{\sigma}_{[n_{R}^{-},n_R^+]}) = \frac{1}{p_{\mathrm{pass}}}\mathrm{tr} \left \{  \hat{\rho}_E\sum_{n_C = n_C^-}^{n_C^+}\hat{F}(n_C,n_{R}^{-},n_{R}^{+}) \right \}\,.\label{tdepsilon}}

Comparing Eq.~(\ref{tdepsilon}) with Eq.~(\ref{epsilonAppD}), we find
\eqn{\epsilon_{\mathrm{fail}} = p_{\mathrm{pass}} D(\hat{\rho}_R^{\mathrm{pass}},\hat{\sigma}_{[n_{R}^{-},n_R^+]})\,,}
which shows that our failure probability can also be interpreted as the product of $p_{\mathrm{pass}}$ and the distinguishing probability between the conditional output state and an ideal state (i.e. one that has support solely in the desired photon number range), as claimed in Appendix \ref{ideal}.}

\section{Source-device independent quantum random number expansion} \label{rne}
\nw{The certified SDI-QRG protocol either aborts or, except with some failure probability $\epsilon_{\mathrm{fail,m}}$, produces an output $X$ with a min-entropy $\hmin(X|E)\geq \kappa>0$ with respect to any third party, even one with complete control over the photonic source and access to all other environmental modes. Equivalently, this is the joint probability of simultaneously passing the certification test $\mathcal{P}$ and producing an output with less than a specified amount of min-entropy, expressed as
\eqn{p_{\mathrm{pass}} \mathrm{Pr}[\hmin(X|E)<\kappa] \leq \epsilon_{\mathrm{fail,m}}\,.}}

However, the final goal of a randomness expansion protocol is to utilise an initial random seed in order to generate a much longer bit string that is ``$\epsilon$-close'' (in some well chosen metric) to perfectly uniformly distributed and unpredictable with respect to any third party. This can be achieved via randomness extraction (also sometimes called privacy amplification), which is a judiciously chosen post-processing of the measurements. We would also like to be confident that a realistic implementation of the protocol will succeed with high probability. Without loss of generality, the output state $S$ of this post-processing can be written as a classical-quantum state
\eqn{\hat{\rho}_{SE} = \sum_{s} P_{S}(s) \ket{s}\bra{s}\otimes\hat{\rho}_E^{s} \label{S} \,,}
for which we have the following definition.

\begin{Definition} \label{definition}
A protocol that outputs a state of the form in Eq.~(\ref{S}) is
\begin{itemize}
\item \textbf{Security:} $\epsilon_l$-secure (or sound) if 
\nw{\eqn{\hs p_{\mathrm{pass}}  D(\hat{\rho}_{SE}, \hat{\tau}_{S}\otimes \hat{\sigma}_E) \leq \epsilon_l \label{sec} \,,}
where $p_{\mathrm{pass}}$ is the probability that the certification test $\mathcal{P}$ is passed, $D(\hat{\rho},\hat{\sigma}):=\half ||\hat{\rho} - \hat{\sigma}||_1$ is the trace distance and $\hat{\tau}_{S}$ is the uniform (i.e. maximally mixed) state over $S$.} This means that there is no device or procedure that can distinguish between the actual protocol and an ideal protocol with probability higher than $\epsilon_{\mathrm{s}}$.
\item \textbf{Completeness:} {\it $\epsilon_c$-complete} (or robust) if there exists an honest implementation such that $1-p_{\mathrm{pass}}\leq \epsilon_c$.
\end{itemize}
\end{Definition}

The properties of the trace norm ensure that randomness satisfying Definition \ref{definition} is composable, which is critical for cryptographic applications \cite{portmann2014cryptographic}.

Particular care must be taken against quantum adversaries to choose an extractor that has provable security when considering potentially quantum side information. \nw{In general, quantum-secure randomness extraction can be seen as a function $\mathrm{Ext}: \{0,1\}^{h} \times \{0,1\}^d\rightarrow \{0,1\}^l$ that involves processing a block of size $h=mb$  (the $m$, $b$-bit measurement outcomes) along with a random $d$-bit seed to produce an $l$-bit output that is $\epsilon_{l}$-close to being perfectly random.}

A very attractive choice is two-universal hashing\footnote{Let $X,S$ be sets of finite cardinality $|S|\leq|X|$. A family of hash functions $\{\mathcal{F}\}$ is a set of functions $f: X\rightarrow S$ and is called \emph{two-universal} if for $f$ drawn uniformly at random from $\mathcal{F}$, it holds that $\forall, (x,x') \in X$, $x\neq x'$, $\mathrm{Pr}[(f(x) = f(x')]\leq \frac{1}{|S|}$. The purpose of the random seed $d$ is to pick a function uniformly at random, hence $d=\log_2 |\mathcal{F}|$.} (or leftover hashing) which is secure against quantum adversaries \cite{renner2008security,tomamichel2011leftover} and can be implemented efficiently as it achieves an excellent trade-off between $\epsilon$ and $l$. It should be noted that this extractor still requires a perfectly random seed of length $d$ and thus any protocol that makes use of leftover hashing can technically only be regarded as a randomness expansion protocol \cite{pironio2013security,law2014quantum}. Whilst the length of the seed must be chosen proportional to $m$, it only has to be generated once and can be safely reused to hash arbitrarily many blocks, meaning that the initial random seed can be used to generate an unbounded amount of randomness. This also means that the seed can be hard-coded into the hashing device (for a further discussion and an explicit implementation, see \cite{frauchiger2013true}). Other quantum-secure methods, such as the Trevisan extractor, are more efficient in the length of the required seed. However, this is a more computationally expensive process and cannot currently be performed at speeds at which our protocol can generate raw randomness. Thus, to achieve bit-generation rates of the same speed as the randomness generation rates reported here, it seems necessary to perform randomness extraction via leftover hashing.

We now have the tools to write down the following result for certified randomness expansion. Although this is essentially a repeat of standard techniques (see e.g. \cite{tomamichel2011leftover,frauchiger2013true}) adapted to our specific setup, we state it as a standalone theorem for completeness.

\begin{Theorem}
A certified SDI (m,$\kappa$,$\epsilon_{\mathrm{fail,m}}$,$\epsilon_c$)-randomness generation protocol as defined in Definition \ref{QRGdef} can be processed with a randomness generation seed of length $m$ via leftover hashing to produce a certified SDI random string of length
\eqn{l \ee \kappa + 2 -\log_2 \frac{1}{\epsilon_{\mathrm{hash}}^2}\label{lexp} \,,}
that is $\epsilon_c$-complete and $\epsilon_{\mathrm{hash}}+\epsilon_{\mathrm{fail,m}}$ secure.
\end{Theorem}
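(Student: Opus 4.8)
The plan is to reduce the claim to the Quantum Leftover Hash Lemma of \cite{tomamichel2011leftover}, the only additional work being a clean accounting of the certification failure probability $\epsilon_{\mathrm{fail,m}}$. Recall from Definition \ref{definition} that $\epsilon_l$-security demands $p_{\mathrm{pass}} D(\hat{\rho}_{SE}, \hat{\tau}_S \otimes \hat{\sigma}_E) \leq \epsilon_l$, where $S$ is the hashed output and $D(\hat\rho,\hat\sigma)=\half\|\hat\rho-\hat\sigma\|_1$. The input supplied by the certified generation protocol (Definition \ref{QRGdef} and Theorem \ref{rgend1}) is that, conditioned on passing $\mathcal{P}$, the raw output obeys $\hmin(X|E)\geq\kappa$ except with probability $\epsilon_{\mathrm{fail,m}}$, i.e. $p_{\mathrm{pass}}\,\mathrm{Pr}[\hmin(X|E)<\kappa]\leq\epsilon_{\mathrm{fail,m}}$.

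First I would perform a smoothing step by direct operator decomposition, exploiting the fact (established in Appendix \ref{ideal}) that all relevant states are classical--quantum and diagonal in the Fock basis. I would split the subnormalised conditional state as $p_{\mathrm{pass}}\hat{\rho}^{\mathrm{pass}}_{XE}=\hat{\rho}^{\mathrm{good}}_{XE}+\hat{\rho}^{\mathrm{bad}}_{XE}$, where $\hat{\rho}^{\mathrm{bad}}_{XE}$ collects the low-entropy branch and has trace $p_{\mathrm{bad}}\leq\epsilon_{\mathrm{fail,m}}$, while the normalised good part $\bar{\rho}^{\mathrm{good}}_{XE}=\hat{\rho}^{\mathrm{good}}_{XE}/p_{\mathrm{good}}$ satisfies $\hmin(X|E)_{\bar{\rho}^{\mathrm{good}}}\geq\kappa$. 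The diagonal structure is what makes this decomposition explicit and one-sided, so that the discarded weight is controlled exactly by $\epsilon_{\mathrm{fail,m}}$.

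Next I would apply the Quantum Leftover Hash Lemma to $\bar{\rho}^{\mathrm{good}}_{XE}$: hashing to $l$ bits with a two-universal family gives $D(\bar{\rho}^{\mathrm{good}}_{SE},\hat{\tau}_S\otimes\bar{\rho}^{\mathrm{good}}_E)\leq\half\,2^{-(\kappa-l)/2}=2^{(l-\kappa-2)/2}=:\epsilon_{\mathrm{hash}}$, which inverts to the extracted length in Eq.~(\ref{lexp}). Choosing the ideal target $\hat{\sigma}_E=\bar{\rho}^{\mathrm{good}}_E$ and writing $p_{\mathrm{pass}}\,\hat{\tau}_S\otimes\hat{\sigma}_E=(p_{\mathrm{good}}+p_{\mathrm{bad}})\,\hat{\tau}_S\otimes\hat{\sigma}_E$, a single triangle inequality together with the homogeneity $D(t\hat\rho,t\hat\sigma)=t\,D(\hat\rho,\hat\sigma)$ bounds the good contribution by $p_{\mathrm{good}}\epsilon_{\mathrm{hash}}\leq\epsilon_{\mathrm{hash}}$ and, by subadditivity of the trace norm, the bad contribution by $p_{\mathrm{bad}}\leq\epsilon_{\mathrm{fail,m}}$. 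This yields $p_{\mathrm{pass}}D(\hat{\rho}_{SE},\hat{\tau}_S\otimes\hat{\sigma}_E)\leq\epsilon_{\mathrm{hash}}+\epsilon_{\mathrm{fail,m}}$, the claimed soundness. Completeness is immediate: hashing acts only on the recorded string and never alters whether $\mathcal{P}$ is passed, so the honest passing probability $1-\epsilon_c$ of Theorem \ref{rgend1} carries over unchanged.

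I expect the main obstacle to be obtaining the \emph{additive} combination $\epsilon_{\mathrm{hash}}+\epsilon_{\mathrm{fail,m}}$ rather than $\epsilon_{\mathrm{hash}}+2\epsilon_{\mathrm{fail,m}}$, which is what a black-box appeal to the smooth-min-entropy form of the Leftover Hash Lemma would give through its purified-distance smoothing. Avoiding the stray factor of two requires the explicit one-sided trace-distance decomposition above, which is legitimate here precisely because the cq/diagonal structure lets the failure weight be isolated as a genuine subnormalised operator; in parallel, one must track the factors of $p_{\mathrm{pass}}$ and fix a single ideal target $\hat{\sigma}_E$ consistently across both branches so that the two triangle-inequality terms close at exactly $\epsilon_{\mathrm{hash}}$ and $\epsilon_{\mathrm{fail,m}}$. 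This is routine composable bookkeeping, but it is the only place where a constant-factor slip would propagate into the final security parameter.
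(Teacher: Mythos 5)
Your proposal is correct and follows essentially the same route as the paper: apply the quantum Leftover Hash Lemma of \cite{tomamichel2011leftover} with zero smoothing to obtain $\epsilon_{\mathrm{hash}} = 2^{(l-\kappa-2)/2}$, and then account for the certification failure additively. Your explicit good/bad operator decomposition is simply an unpacking of the step the paper compresses into the phrase ``the convexity and boundedness of the trace distance implies\dots'', and your observation that completeness carries over unchanged matches the paper's one-line completeness argument.
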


\begin{proof}

{\bf Security:} Let $X$ be the variable describing the measurement outcomes. Recall that the output of the randomness generation protocol after the measurement including the potential side information can be written as a classical-quantum state
\eqn{\hat{\rho}_{XE} = \sum_{x\in\mathcal{X}} P_X(x) \ket{x}\bra{x}\otimes \hat{\rho}_E^x \label{cq} \,,}
where $\mathcal{X}$ is the alphabet of possible measurement outcomes and $\hat{\rho}_E^x$ is the state of the eavesdropper given the outcome $x$. A randomly chosen leftover hashing function is then applied to distill a final random string denoted by the variable $S$. The joint state is now
\eqn{\hat{\rho}_{SE} = \sum_{s} P_{S}(s) \ket{s}\bra{s}\otimes \hat{\rho}_E^{s}\,.}

We then apply the Leftover Hash Lemma with quantum side information \cite{tomamichel2011leftover} and its extension to infinite dimensional Hilbert spaces \cite{berta2016smooth, furrer2014position} which is necessary for our purposes.

\begin{Lemma} \label{leftover}
Let $\hat{\rho}_{XE}$ be a state of the form in Eq.~(\ref{cq}) where $X$ is defined over a discrete-valued and finite alphabet and E is a finite or infinite dimensional system. If one applies a hashing function drawn at random from a family of two-universal hash functions that maps $X$ to $S$ and generates a string of length $\it{l}$, then
\nw{\eqn{D(\hat{\rho}_{SE}, \hat{\tau}_{S}\otimes \hat{\sigma}_E) \leq 2^{\frac{l - H_{\mathrm{min}}(X|E)-2}{2}}\label{hash}\,,}}
where $H_{\mathrm{min}}(X|E)$ is the conditional smooth min-entropy (with smoothing parameter $\epsilon = 0$) of the raw measurement data given Eve's quantum system.
\end{Lemma}
Comparing the security definitions in Eq.~(\ref{sec}) and Eq.~(\ref{hash}), we note that with an appropriate choice of $l$, we can ensure the security condition is met. In particular, we see that the smooth min-entropy is a lower bound on the extractable key length. To get a more exact expression, first notice that if we choose 
\eqn{l \ee \hmin(X|E) + 2 -2 \log_2 \left(\frac{p_{\mathrm{pass}}}{\epsilon_{\mathrm{hash}}}\right) \,,} 
for some $\epsilon_{\mathrm{hash}}>0$, then the right hand side of Eq.~(\ref{hash}) becomes $\epsilon_{\mathrm{hash}}/p_{\mathrm{pass}}$. Then, provided we have definitively bounded the smooth min-entropy, we will satisfy Eq.~(\ref{sec}) for any $\epsilon_{\mathrm{hash}}>0$. Finally since $\log_2(p_{\mathrm{pass}}) <0$, we have 
\eqn{l \geq \hmin(X|E) + 2 -\log_2 \left( \frac{1}{\epsilon_{\mathrm{hash}}^2}\right) \,.}

Now, suppose that we are only able to bound joint probability of passing the test whilst outputting a small smooth min-entropy $\hmin(X|E)< \kappa$ with a certain probability $\epsilon_{\mathrm{fail,m}}$ as is the case here. Then, the convexity and boundedness of the trace distance implies that this string of length $l$ will be $\epsilon_l$-secure for any security parameter
\eqn{\epsilon_l \geq \epsilon_{\mathrm{hash}}+ \epsilon_{\mathrm{fail,m}}\label{ese2} \,,}
if the length is chosen as per Eq.~(\ref{lexp}). 

{\bf Completeness:} This follows immediately from the completeness of the certified randomness generation protocol.

\end{proof}

\section{\nw{Experimental details for the real-time extraction of certified quantum random numbers}} 
\label{sec:FPGA_details}
\nw{In order to generate certified random numbers in real-time, the post-processing was implemented with a high-performance FPGA (Zynq Ultrascale$+$ ZU9EG) installed on the commercially available Printed Circuit Board (PCB) Zynq UltraScale$+$ MPSoC ZCU102 evaluation kit as shown in Fig.~\ref{fig:FPGA_Schematics}. For data acquisition, a 12-bit ADC (Analog Devices AD9625) is used while being installed on a separate PCB connected to the FPGA via an FPGA Mezzanine Card (FMC) as can be seen in the inset to Fig.~\ref{fig:FPGA_Schematics}. The evaluation kit provides several modules for data transmission, including the cage for Small Form-factor Pluggable (SFP) modules and a Universal Serial Bus (USB) 3.0 port. The Double Data Rate 4th Generation Random Access Memory (DDR4 RAM) module required for data testing is also included.

%A schematic diagram of the experiment's post-processing board utilised to generate certified random numbers in real-time is presented in Fig.~\ref{fig:FPGA_Schematics}. 

%The real-time post-processing was implemented with a high-performance FPGA (Zynq Ultrascale$+$ ZU9EG) installed on the commercially available Printed Circuit Board (PCB) Zynq UltraScale$+$ MPSoC ZCU102 Evaluation Kit. For data acquisition, a 12-bit ADC (Analog Devices AD9625) is used while being installed on a separate PCB connected to the FPGA via an FPGA Mezzanine Card (FMC) as can be seen in the inset to Fig.~\ref{fig:FPGA_Schematics}. The Evaluation Kit provides several modules for data transmission, including the cage for Small Form-factor Pluggable (SFP) modules and a Universal Serial Bus (USB) 3.0 port. The Double Data Rate 4th Generation Random Access Memory (DDR4 RAM) module required for data testing is also included.

\begin{figure}[h!]
\boxed{\includegraphics[width=\columnwidth]{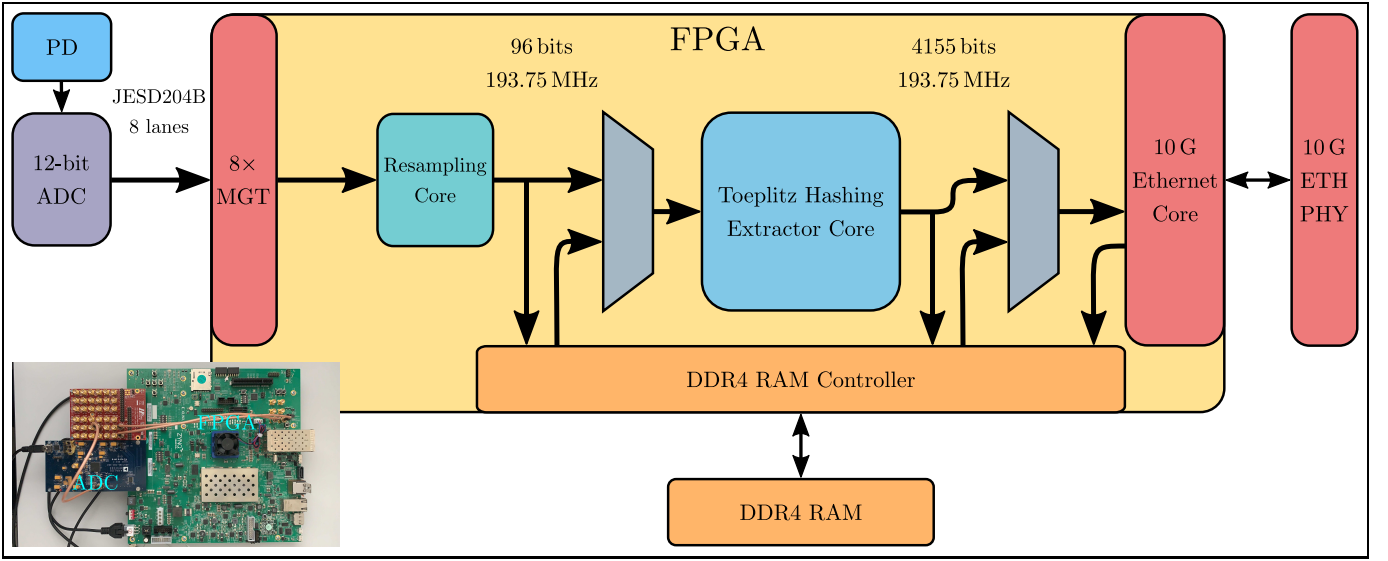}}
\caption{\nw{Schematic of the real-time post-processing board used to generate certified random numbers. The analog signal generated by the optical setup described in Fig.~\ref{fig:optical_setup} is digitised by an ADC and then further processed by an FPGA board. Additionally, the number of bits during each step of the process, along with the inverse duration of each time step, is shown above the various modules in the schematic. PD: photodiode; ADC: analog-to-digital converter; MGT: multi-gigabit transceiver; DDR4 RAM: double data rate 4th generation random access memory; ETH PHY: ethernet physical layer. Inset: Photograph of the actual post-processing board comprising the ADC and the FPGA.}}
\label{fig:FPGA_Schematics}
\end{figure}

%this is when we did the old stuff with 8 bits...
%The process illustrated in Fig.~\ref{fig:FPGA_Schematics} works as follows. The data from the ADC is deserialised with 8 Multi-Gigabit Transceivers ($8\times$ MGT) and arrives at the Resampling Core of the FPGA which performs two functions. Firstly, the data in this module is transformed into the 8-bit output; such bit depth corresponds to the effective number of bits free of ADC internal noise. Second, if the clock frequency of an ADC significantly surpasses the bandwidth of the photodetector, the data is resampled here to a lower sampling frequency. Then, the data arrives at a multiplexing unit followed by the Toeplitz hashing module.

The process described by Fig.~\ref{fig:FPGA_Schematics} is summarised as follows. The data from the ADC is deserialised with 8 Multi-Gigabit Transceivers ($8\times$ MGT) and reaches the resampling core of the FPGA where it is resampled to a lower frequency of $1.55\,$GS/s since the ADC's sampling rate is larger than the experiment's data generation (imposed by the balanced detector's bandwidth). Then, the data arrives at a multiplexing unit (grey parallelogram) followed by the central Toeplitz hashing module. Toeplitz hashing is realised via the concurrent pipeline algorithm (detailed in \cite{zhang2016fpga}) with a clock rate of $R_{\mathrm{hash}} = 193.75\,\si{\mega\hertz}$. Here, a $9600\times4155$ random Toeplitz matrix initially saved in the FPGA's memory is utilised. Indeed, it is proven in Appendix A of \cite{frauchiger2013true} that one need not renew the random input seed used to construct the Toeplitz matrix. Furthermore, for optimisation purposes, the initial large Toeplitz matrix is evenly decomposed into a series of submatrices which are multiplied sequentially with the raw input data. These submatrices have sizes of $96 \times 4155$, where $k=96\,$bits is carefully chosen to be a multiple of both the ADC's bit-depth $b=12\,$bits and the hashing block size $h=9600\,$bits. Note that the submatrix' number of rows also corresponds to the precise number of bits injected into the FPGA board per time step of the hashing algorithm, i.e. $k = \frac{12\times 1.55\times 10^{9}}{193.75 \times 10^{6}}=96$. As a result of this, substrings of $96\,$bits from the raw data at each time step are extracted and then multiplied with a corresponding random $96\times4155$ Toeplitz submatrix, thereby obtaining a single substring of $l=4155\,$bits per clock period. The XOR operation required between pairs of such subsequent strings of $4155\,$bits is performed concurrently with multiplication steps. The multiplication of the entire large Toeplitz matrix with the raw random string of $9600\,$bits is thus performed over $9600/96 = 100$ time steps, leading to an overall extraction of $4155\,$bits for every such procedure labelled as a single \textit{extraction period}. Finally, while the following extraction period commences, the previously obtained block of hashed data is prepared for the final output. 

%Over 445 time steps, we thus perform the multiplication of the entire large Toeplitz matrix with the raw random string of $35600\,$bits, extracting $4289\,$bits with the whole procedure called an extraction period. The next extraction period then starts, whereas the obtained block of hashed data is prepared for the output.

For validation and debugging purposes, the option of saving both raw and hashed data in the FPGA's memory is implemented such that one may extract them for further analysis on a PC. Conversely, data can be uploaded to the FPGA's memory from an external source (e.g. from an oscilloscope’s ADC) and then processed by the Toeplitz hashing extractor in the FPGA. 

%For validation and debugging purposes, we also implemented the availability to save both raw and hashed data in the FPGA's memory, such that it can be extracted, if needed, and analysed on a PC. Moreover, the data can be uploaded to the FPGA's memory from an external source, e.g. from the oscilloscope’s ADC, and then processed by the Toeplitz hashing extractor in FPGA.
} 

\section{Rate comparison with homodyne protocols \label{comp}}
In this appendix, we will derive the curves shown in Fig.~\ref{fig:comparison_our_work_vs_homodyne} which compare the rates for this work to those for the device-dependent homodyning and the semi-SDI protocols with certification based on an entropic uncertainty relation \cite{marangon2017source,avesani2018source,michel2019real}. Strictly speaking, direct comparison with the EUR protocols is impossible since these fail to give a composable security parameter. Also, in practice, the achievable rates depend heavily on many technical constraints such as the detector noise and especially the number of ADC bits. Consequently, we consider a simpler, idealised calculation of the ultimate rates of these different protocols and identify fundamentally different scalings in some instances. Specifically, we will calculate the expected value of the amount of min-entropy generated per round.

\subsection{Device-dependent homodyning}
Following Haw et al \cite{Haw:2015kx}, we can upper bound the min-entropy by noting that for arbitrarily many ADC bits and perfect photon number resolving detectors, the probability distribution of the photon difference is only resolution-limited by the photon-counting measurement itself and the amplitude of the local oscillator. Specifically, it is straightforward to show that the photon difference for an arbitrary input signal mode mixed on a 50:50 beamsplitter with a coherent state $\ket{\alpha_{\mathrm{LO}}}$ gives output modes $\anih_1 = (\anih_s + \anih_{\mathrm{LO}})/\sqrt{2}$ and $\anih_2 = (\anih_s - \anih_{\mathrm{LO}})/\sqrt{2}$. The photon difference is then given by 
\eqn{\hat{I}:=\adag_1\hat{a}_1 - \adag_2\hat{a}_2 = \adag_{\mathrm{LO}}\anih_s + \adag_s\anih_{\mathrm{LO}}\,.} 

If the LO is very bright, then we can know its quadrature displacement up to an uncertainty that is very small relative to the displacement's mean. Moreover if the LO is very large relative to the photon number of the input signal, this signal will be very close to a quadrature measurement of the input signal. Following e.g. \cite{bachor}, one way to see this is to consider a decomposition of the LO operator $\hat{a}_{\mathrm{LO}} = \alpha_{\mathrm{LO}} + \delta\hat{A}_{\mathrm{LO}}$, where $\alpha_{\mathrm{LO}} $ is the mean value and the operator and $\delta\hat{A}_{\mathrm{LO}}$ represents the quantum fluctuations. Taking $\alpha_{\mathrm{LO}}$ to be real, we have
\eqn{\hat{I} = \alpha_{\mathrm{LO}} \hat{x}_s + \delta\hat{A}^\dag_{\mathrm{LO}}\anih_s + \anih_s\delta\hat{A}_{\mathrm{LO}}\label{quadflux}\,.}

If the mean LO amplitude is large with respect to fluctuations and the amplitude of the signal mode, then one has $\hat{I} \approx \alpha_{\mathrm{LO}}\hat{x}_s$. In the case of ideal detectors that can distinguish between $n$ and $n+1$ photons, this is equivalent to measuring the input quadrature with a resolution given by $\Delta = 1/\alpha_{\mathrm{LO}}$ (i.e. the rescaling by the LO power). One can also calculate the variance for an arbitrary signal state $\hat{\rho}_{s}$ with a coherent state as the LO. Defining the appropriate expectation value as $\langle\hat{I}\rangle_{\alpha_{\mathrm{LO}}} =  \mathrm{tr}\left \{ \hat{I} \bk{ \hat{\rho}_{s}\otimes\ket{\alpha_{\mathrm{LO}}}\bra{\alpha_{\mathrm{LO}}} }\right \}$, we have
\eqn{\mathrm{Var}(\hat{I}) \ee \langle\hat{I}^2\rangle_{\alpha_{\mathrm{LO}}} -  \langle\hat{I}\rangle^2_{\alpha_{\mathrm{LO}}} \nn\\
\ee \langle \alpha_{\mathrm{LO}}^{*2}\anih_s^2 + \hat{n}_{\mathrm{LO}}\anih_s\adag_s + \hat{n}_s \anih_{\mathrm{LO}}\adag_{\mathrm{LO}} + \alpha_{\mathrm{LO}}^2 \anih_s^{\dag2} \rangle - \alpha_{\mathrm{LO}}^{*2}\langle\anih_s\rangle^2 - \alpha_{\mathrm{LO}}\langle\anih_s\rangle^2\nn\\
\ee \alpha_{\mathrm{LO}}^{2} \bk{\langle\anih_s^2 + \anih_s^{\dag2} \rangle +1 +2n_s}+ n_s - \alpha_{\mathrm{LO}}^{2}\langle \hat{x}_s \rangle^2 \nn\\
 \ee n_{\mathrm{LO}} \mathrm{Var}(\hat{x}_s) + n_s \,, \label{var}} 
where we have again taken $\alpha_{\mathrm{LO}}$ to be real.

\subsubsection{Vacuum input}
In the device-dependent case where the signal is known to be vacuum, the rescaled output is a discretised Gaussian distribution with variance $V=1$ and zero mean. If we label the discretised output with index $k$, the probability distribution from the perspective of an eavesdropper (here there is no technical noise) is given by
\eqn{p(k|E) = \frac{1}{2} \left(\mathrm{erf}\bk{\frac{k\Delta +\Delta/2}{\sqrt{2V}}} -  \mathrm{erf}\bk{\frac{k\Delta -\Delta/2}{\sqrt{2V}}} \right) \,, \label{pkapprox}}
where $k \in \{ 0, \pm1,\pm2,... \}$. 

For small $\Delta$ relative to $V$, Eq.~(\ref{pkapprox}) is well approximated by
\eqn{p(k|E) = \frac{\Delta}{\sqrt{2\pi V} }\exp\bk{-\frac{(k\Delta)^2}{2V}} \,, \label{papprox}}
and the min-entropy $\hmin^{\mathrm{DD}}(X|E) = \max_k\{-\log2(p(k|E))\}$ can be directly calculated to be \cite{Haw:2015kx}
\eqn{\hmin^{\mathrm{DD}}(X_{\mathrm{vac}}|E) = \half\log_2\bk{2\pi \alpha_{\mathrm{LO}}^2} =  \half\log_2\bk{2\pi n_{\mathrm{LO}} } \,, \label{ddvac}}
where $n_{\mathrm{LO}}$ is the mean photon number present in the LO. 

\subsubsection{Coherent state input}
This rate as calculated via Eq.~(\ref{papprox}) is also unchanged if the vacuum is replaced by a coherent state since the variance of coherent states is still unity. However, if the signal is a large coherent state $\ket{\alpha_{s}}$, the approximations we utilised to derive Eq.~(\ref{papprox}) no longer hold. The other term in Eq.~(\ref{quadflux}) will not remain negligible and the fluctuations will actually increase. Considering the photon detections directly, the state after the beamsplitter will now be $\ket{\frac{\alpha_{\mathrm{LO}} + \alpha_s}{\sqrt{2}}}\otimes\ket{\frac{\alpha_{\mathrm{LO}}+\alpha_s}{\sqrt{2}}}$. The output at each detector would be described by a Poissonian distribution, which for large photon number will be well approximated by a Gaussian distribution, as will the photon difference. The variance is straightforwardly calculated to be
\eqn{V_{\mathrm{coh}} = |\alpha_{\mathrm{LO}}|^2 + |\alpha_s|^2 \,, \label{vcoh}} 
from which we can immediately read off the min-entropy as
\eqn{\hmin^{\mathrm{DD}}(X_{\mathrm{coh}}|E)  =  \half\log_2\bk{2\pi \bk{n_s+ n_{\mathrm{LO}}} } \,. \label{ddcoh} }

\subsubsection{Thermal state input}
On the other hand, if the vacuum source was instead replaced by Eve with one half of an entangled two-mode squeezed vacuum (TMSV) state
\eqn{\ket{\mathrm{TMSV}} = \frac{1}{\cosh(r)} \sum_{n=0}^\infty \left(-\tanh{r}\right)^n\ket{n,n}\,,}
then the input to the randomness measurement will be a thermal state with mean photon number $\bar{n} = \sinh^2(r)$ and quadrature variance $V = 2\bar{n}  +1$. As the amount of squeezing --- and hence the number of photons in the input state --- increases, the quadrature measurements will start to become more and more predictable and the min-entropy will decrease. Eventually, however, for a sufficiently bright TMSV state, the extra terms in Eq.~(\ref{quadflux}) become non-negligible and extra fluctuations will arise such that the overall entropy will begin to increase again. For all levels of squeezing, the statistics will be well-approximated as being Gaussian.

We can get an upper bound for the device-dependent min-entropy by assuming that Eve makes an $\hat{x}$ quadrature measurement on her half of the TMSV state. This would project the other arm into a $\hat{x}$-squeezed coherent state with variance $V_x = \mathrm{sech}(2r) = \mathrm{sech}(2(\sinh^{-1}(\sqrt{\bar{n}})))$ and a displacement given by $\sqrt{1-1/V_x^2}x_E$, where $x_E$ is the outcome of Eve's measurement. We can write down Eve's conditional guessing probability directly since it would simply be the same kind of coarse grained Gaussian distribution as before with a resolution of $1/\alpha$, but now the variance given by evaluating Eq.~(\ref{var}) to obtain
\eqn{V_{\mathrm{th}} \ee n_{\mathrm{LO}}  V_x+ \bar{n} \nn\\
\ee n_{\mathrm{LO}}  \mathrm{sech}(2(\sinh^{-1}(\sqrt{\bar{n}}))) + \bar{n}\,. \label{vth}} 

The min-entropy is then given by substitution in Eq.~(\ref{pkapprox}), leading to
\eqn{\hmin^{\mathrm{DD}}(X_{\mathrm{th}}|E) \ee \half\log_2\bk{2\pi (n_{\mathrm{LO}}V_x + \bar{n} )} \,. \label{ddtherm}}

Note that this is an upper bound because we are calculating the min-entropy that Eve would have about an individual round of the protocol. In theory, in a protocol where Eve's goal was to guess the $n$-symbol output of an $n$-round protocol, she could potentially employ a collective measurement that might further reduce her uncertainty. Nevertheless, we will proceed with this device-dependent upper bound for comparative purposes.

\subsection{Entropic uncertainty relation certified homodyning}
In the works \cite{marangon2017source, avesani2018source, michel2019real}, the randomness present in the $X$ quadrature is certified by making measurements in the conjugate $P$ quadrature basis and exploiting an entropic uncertainty relation of the form
\eqn{\hmin^{\mathrm{EUR}}(X|E) \geq \log_2(c) - \hmax(P|B) \,, \label{eur_equation}}
where $\hmax(X) = 2\log_2\bk{\sum_x \sqrt{p_x}}$. 

In fact, to get the expected value for the min-entropy generation rate, one should multiply the right-hand side of Eq.~(\ref{eur_equation}) by the probability $p_X$ that a round is used as a randomness generation round rather than a check round, and also subtract some randomness used to randomly switch bases in the future iterations of the protocol \cite{vallone2014quantum,marangon2017source}. Here, we will set $p_X = 0.1$ as per \cite{michel2019real} and to get an upper bound for comparison purposes, we will ignore the random seed term.
For discretised homodyne measurements (assuming symmetric quadrature resolution $\Delta$), one has that $c = \frac{2\pi}
{\Delta^2}$ and noting that $\hmax(P|B)\leq\hmax(P)$, we get
\eqn{\hmin^{\mathrm{EUR}}(X|E)&\geq& p_X\bk{ \log_2\bk{\frac{2\pi}{\Delta^2}} - \hmax(P) }\nn\\
\ee p_X \bk{\log_2\bk{2\pi n_{\mathrm{LO}}} - \hmax\bk{P}} \,.} 

Using the Jacobi theta functions $\vartheta_3(z,\tau) = \sum_{n=-\infty}^{\infty} \tau^{n^2}e^{ 2 n i z} $, we can rewrite Eq.~(\ref{papprox}) to directly evaluate the max-entropy to find
\eqn{\hmin^{\mathrm{EUR}}(X|E)&\geq&p_X \left ( \log_2\bk{2\pi n_{\mathrm{LO}}} - \log_2\bk{\frac{\vartheta_3\left(0,e^{-1/(4n_{\mathrm{LO}} V)}\right)^2}{\sqrt{2\pi n_{\mathrm{LO}}V} } } \right) \,. \label{eur}}

Using this formula, we can evaluate the EUR-based certified randomness rates for the variance appropriate for each input state; namely the coherent and thermal cases exposed in Eq.~(\ref{vcoh}) and Eq.~(\ref{vth}), respectively.

Note that this rate represents an over-estimation of the randomness generated in that we are using the max-entropy exactly. In practice, this would have to be estimated from statistics (see \cite{michel2019real} for several estimators) which would generally result in a lower value for the certified min-entropy.

\subsection{This work}
Here, we compare the device-dependent and EUR-based rates with our work. In fact, the EUR-based rates cannot be directly compared because in reality entropic terms should be empirically bounded in a way that gives composable $\epsilon$-security (i.e. there is a test such that the joint probability of passing the test whilst having less than the certified rate should be less than $\epsilon$). For this idealised calculation, our rates are given by Theorem \ref{rgendideal}. Recall that our protocol is probabilistic, meaning that randomness is only certified when the test is passed by observing $n_C^-$ or more photons in the certification measurement, which will happen with a probability at least $1-\epsilon_c$. From Theorem \ref{rgendideal}, we know that either the test will fail or the min-entropy will be strictly lower bounded as per Eq.~(\ref{hminthmideal}). Putting all of this together, we can say that the expected min-entropy generated in a single round (i.e. $m=1$) will be 
\eqn{\hmin^{\mathrm{SDI}}(X|E) \geq (1-\epsilon_c) \left(\frac{1}{2} \log_2\bk{\half \pi n_{R}^{-}} - \mathcal{O}\bk{\frac{1}{n_R^-}} \right) \,, \label{ourhmin}}
with a failure parameter of 
\eqn{\epsilon_{\mathrm{fail}} = \exp\bk{-\frac{2 (r_{1}  (n_{R}^{-}+n_{C}^{-}-1)-n_{C}^{-})^2}{n_{R}^{-}+n_{C}^{-}-1}} \,. \label{epsilon1}}

Notice that for the regions of interest in Fig.~\ref{fig:comparison_our_work_vs_homodyne}, namely where this curve surpasses the EUR curves and scales similarly to the device-dependent case, the inferred photon number will be such that the corrective term $\mathcal{O}(1/n_R^-)$ is negligible. To evaluate this expected min-entropy given a target value for $\epsilon_{\mathrm{fail}}$ associated with the input states above, we simply need to calculate what $1-\epsilon_c$ will be for a given threshold $n_C^-$. With those in hand, we can solve Eq.~(\ref{epsilon1}) for the value of $n_R^-$ that achieves the target $\epsilon_{\mathrm{fail}}$ and then calculate the corresponding min-entropy via Eq.~(\ref{ourhmin}). 

For a coherent state input $\ket{\alpha_{s}}$, the state going into the certification measurement will be $\ket{\sqrt{r_1}\alpha_s}$. For large $\alpha_s$, the Poissonian photon-number distribution will be well approximated by a Gaussian distribution and the probability of observing $n_C^-$ or more photons will be given by $1-\epsilon_c = \frac{1}{2} \left(\text{erf}\left(\frac{\bar{n}_C-n_C^-}{\sqrt{2 \bar{n}_C} }\right)+1\right)$, where $\bar{n}_{C} = r_1\bar{n}$, with $\bar{n} = |\alpha_{s}|^2$ the mean photon number of the incoming coherent state.

Similarly, for a thermal state source, the input to the certification measurement will be a thermal state with mean photon number $\bar{n}_C = r_1n_{\mathrm{th}}$, with $n_{\mathrm{th}}$ the mean photon number of the incoming thermal state. Finally, using the formula for a geometric series and the photon number representation of a thermal state, the relationship between the threshold and the passing probability is given by $1-\epsilon_c = 1-\left(1-\left(\frac{\bar{n}_C}{\bar{n}_C+1}\right)^{n_C^{-}-1}\right)$.

\bibliography{main.bib}

\end{document}